\newlength{\minitwocolumn}
\newtheorem{lemma}{Lemma}
\newtheorem{theorem}{Theorem}
\begin{document}
\title{Max-Min and 1-Bounded Space Algorithms\\ for the Bin Packing Problem}
\author{
Hiroshi Fujiwara
\and
Rina Atsumi
\and
Hiroaki Yamamoto
}
\date{\today}
\maketitle              
\begin{abstract}
  In the (1-dimensional) bin packing problem, we are asked to pack all
  the given items
  into bins, each of capacity one, so that the number of non-empty
  bins is minimized.  Zhu~[Chaos, Solitons \& Fractals 2016] proposed
  an approximation algorithm $MM$ that sorts the item sequence in a
  non-increasing order by size at the beginning, and then repeatedly packs,
  into the current single open bin, first as many of the largest items
  in the remaining sequence as possible and then as many of the
  smallest items in the remaining sequence as possible.  In this paper
  we prove that the asymptotic approximation ratio of $MM$ is at most
  1.5.  Next, focusing on the fact that $MM$ is at the intersection of
  two algorithm classes, max-min algorithms and 1-bounded space
  algorithms, we comprehensively analyze the theoretical performance
  bounds of each subclass derived from the two classes.  Our results
  include a lower bound of 1.25 for the intersection of the two
  classes.  Furthermore, we extend the theoretical analysis over
  algorithm classes to the cardinality constrained bin packing problem.
\end{abstract}
\footnotetext{
  We are grateful to Toshiki Tsuchiya for useful discussions.  We
  would like to thank Hiromu Hashimoto for pointing out an error in
  the manuscript.
  This work was supported by JSPS KAKENHI Grant Numbers
  25K14990,
  20K11689,
  20K11676,
  23K11100, and
  20K11808.
}
\section{Introduction}
\label{sec:introduction}
The bin packing problem is a basic combinatorial optimization problem
in which, given a sequence of items, each of size at most one, the
number of non-empty bins is minimized under the constraint that each
item is contained in some bin of capacity one.  The bin packing
problem is known to be NP-complete~\cite{Garey:1979:CIG:578533}. Many
approximation algorithms that run in polynomial time and assign bins
as close to the minimum number as possible have been studied
(see~\cite{DBLP:conf/dagstuhl/1996oa,Korte:2012:COT:2190621} for
example).

Zhu proposed an algorithm $MM$ that sorts the item sequence in a
non-increasing order by size at the beginning, and then repeatedly
packs into the current single open bin, first as many of the largest
items in the remaining item sequence as possible and then as many of
the smallest items in the remaining item sequence as
possible~\cite{ZHU201683}
(see
Algorithm~\ref{algo:mm} for the pseudocode)~\cite{ZHU201683}.
His paper shows through several computational experiments that
algorithm $MM$ has good practical performance.

In this paper we conduct a theoretical performance analysis of
algorithm $MM$ as well as the relevant classes of algorithms (see
Figure~\ref{fig:maxmin_unit_classes}).  We refer to an algorithm as a
\emph{max-min} algorithm if it first sorts the given item sequence in
a non-increasing order by size and then repeatedly packs either the head or
tail item in the currently remaining item sequence into a bin, without
seeing other items.
From the above explanation, $MM$ is obviously a max-min
algorithm.

There is a closely related class of algorithms.  We refer to an
algorithm as a \emph{pre-sorted online} algorithm if it first sorts
the given item sequence in a non-increasing order by size and then
repeatedly packs the head item in the currently remaining item
sequence into a bin, without seeing other items.  Algorithm $NFD$
(Next Fit Decreasing)~\cite{10.1137/0602019} and algorithm $FFD$
(First Fit Decreasing)~\cite{johnson1973near} are typical examples and
have been studied for a long time.  In the context of priority
algorithms~\cite{DBLP:journals/algorithmica/BorodinNR03,DBLP:journals/jco/YeB08},
a pre-sorted online algorithm can be viewed as a \emph{fixed priority}
algorithm with a non-increasing order by size.  Any pre-sorted online
algorithm is also a max-min algorithm, in the sense that it is allowed
to choose the head or tail item but only the head item is chosen.

Our research is further motivated by the potential of max-min
algorithms as heuristics for high-dimensional
packing problems to which pre-sorted online algorithms
are often applied
(see~\cite{CHRISTENSEN201763,MOMMESSIN2025106860} for example).
How useful will it be to pack even the smallest items?

Another aspect is the space complexity.  During the execution of an
algorithm, a bin is called \emph{open} if some item has already been
packed in it and the algorithm may pack further items into it.  The
algorithm keeps track of the sum of the sizes of the items in each
open bin.  An algorithm is called a \emph{bounded space} algorithm if
there exists a constant $B$ such that the algorithm maintains
simultaneously at most $B$ open bins (for example, algorithm
$Harmonic$~\cite{Lee:1985:SOB:3828.3833}).  We use the terminology a
\emph{$B$-bounded space} algorithm to specify the maximum number of
open bins.  An algorithm with no such constant $B$ is called an
\emph{unbounded space} algorithm (for example, algorithm $FF$ (First
Fit)~\cite{johnson1973near}).  Since algorithm $MM$ maintains a single
open bin, $MM$ is a 1-bounded space algorithm.

The performance of approximation algorithms is often measured by the
{\em asymptotic approximation
  ratio}~\cite{johnson1973near,DBLP:conf/dagstuhl/1996oa}.  Let
$ALG(I)$ denote the number of non-empty bins by an algorithm $ALG$ for
an item sequence $I$.  Let $OPT(I)$ denote the minimum number of bins
that are needed to contain all items in $I$.  The asymptotic
approximation ratio of $ALG$ is defined as
\begin{equation}
  R_{ALG} =
  \lim_{m \to \infty} \sup \Bigl\{ \frac{ALG(I)}{OPT(I)}
  \;\Big|\; OPT(I) \geq m \Bigr\}.
  \label{eq:none_acr}
\end{equation}
For the \emph{$k$-cardinality constrained} bin packing problem, which
is a variant with an additional constraint that at most $k$ items can
be packed in each bin, the asymptotic approximation ratio of $ALG$ is
defined as,
\begin{equation}
  R_{ALG} =
  \lim_{m \to \infty} \sup \Bigl\{ \frac{ALG(I)}{OPT_{k}(I)}
  \;\Big|\; OPT_{k}(I) \geq m \Bigr\},
  \label{eq:k_acr}
\end{equation}
where $OPT_{k}(I)$ denotes the minimum number of bins that are needed
to contain all items in $I$ so that at most $k$ items are packed in
each bin.  From now on, we will refer to the bin packing problem that
we have argued so far as the \emph{classic} bin packing problem to
distinguish the problem from the $k$-cardinality constrained bin
packing problem.
\begin{figure}
  \centering
  \includegraphics[width=0.9\linewidth]{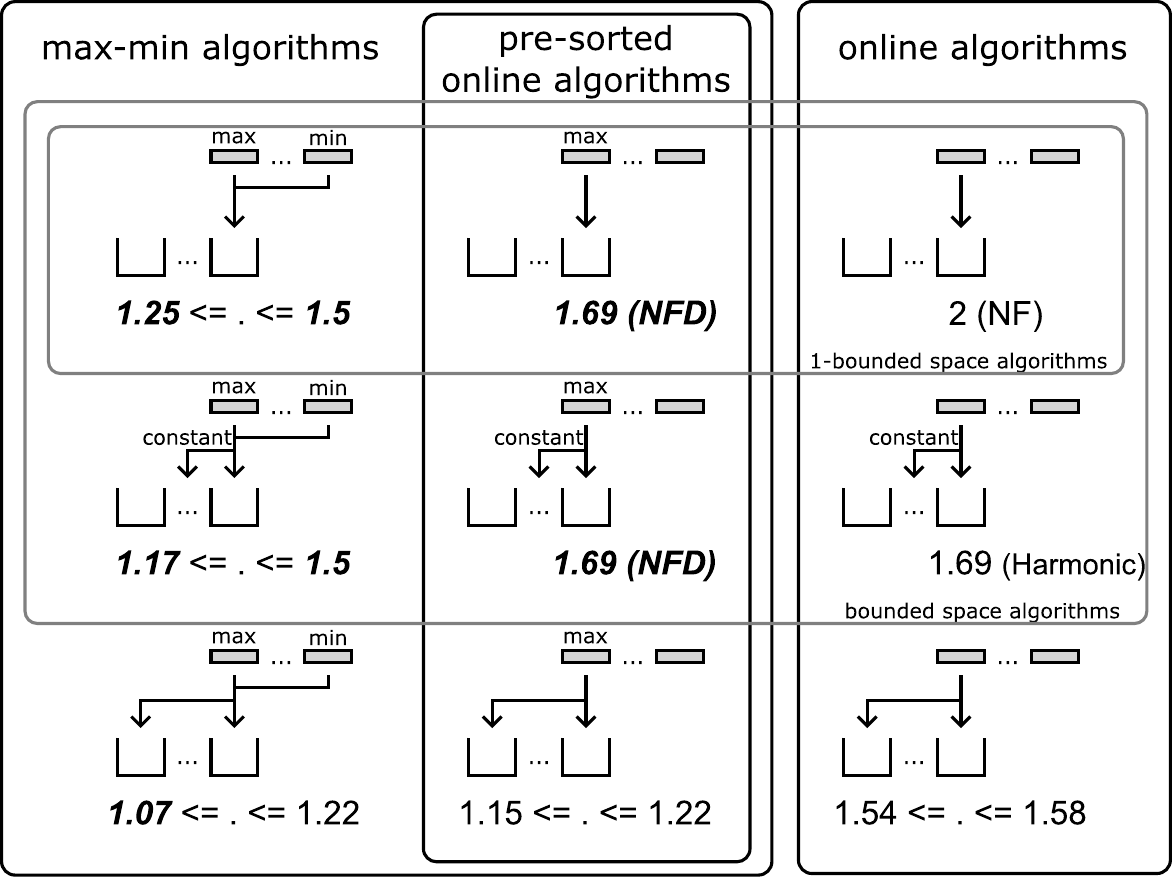}
  \caption{Illustration of algorithm classes and bounds on the
    best possible asymptotic approximation ratio
    for the classic bin packing problem.
    Bold italic font indicates bounds established in this paper.}
  \label{fig:maxmin_unit_classes}
\end{figure}
\begin{table*}
  \caption{Bounds on the best possible asymptotic approximation ratio of the
    classic bin packing problem.  Results with theorem references
    are established in this paper.  Parentheses mean that the result
    is a special case or corollary of another result.}
  \centering
  \begin{tabular}{c|c|cc}
    type & space & 
    lower bound & upper bound \\
    \hline\hline
    max-min & 1-bounded &
    1.25 from Thm.~\ref{thm:none_maxmin_unit_lower} &
    1.5 from Thm.~\ref{thm:mm_cr} \\
    & bounded &
    1.17 from Thm.~\ref{thm:none_maxmin_bounded_lower} &
    (1.5 from Thm.~\ref{thm:mm_cr}) \\
    & unbounded &
    1.07 from Thm.~\ref{thm:none_maxmin_unbounded_lower} &
    $1.22$~\cite{johnson1973near} \\
    \hline
    pre-sorted online & 1-bounded &
    (1.69 from Thm.~\ref{thm:none_online_sorted_bounded_lower}) &
    $1.69$~\cite{10.1137/0602019} \\
    & bounded &
    1.69 from Thm.~\ref{thm:none_online_sorted_bounded_lower}
    & ($1.69$~\cite{10.1137/0602019}) \\
    & unbounded &
    $1.15$~\cite{DBLP:journals/tcs/BaloghBG12} &
    $1.22$~\cite{johnson1973near} \\
    \hline
    online & 1-bounded &
    $2$~\cite{johnson1973near} &
    $2$~\cite{johnson1973near} \\
    & bounded &
    $1.69$~\cite{Lee:1985:SOB:3828.3833} &
    $1.69$~\cite{Lee:1985:SOB:3828.3833} \\
    & unbounded &
    $1.54$~\cite{10.1007/s00453-021-00818-7} &
    $1.58$~\cite{conf/esa/BaloghBDEL18}
  \end{tabular}
  \label{tab:cr_without_cardi}
\end{table*}
\begin{table*}
  \caption{Bounds on the best possible asymptotic approximation ratio of the
    $k$-cardinality constrained bin packing problem for $k = 2$ and
    $3$, and for general $k$.  Results with theorem references are
    established in this paper.  Parentheses mean that the result is a
    special case or corollary of another result.  }
  \centering
  \begin{tabular}{c|c|c|cc}
    $k$ & type & space & 
    lower bound & upper bound \\
    \hline\hline
    2
    & max-min & 1-bounded &
    1 & 1 from Thm.~\ref{thm:tsuchiya} \\
    & & bounded &
    1 & (1 from Thm.~\ref{thm:tsuchiya}) \\
    & & unbounded &
    1 & (1~\cite{Tinhofer1995BinpackingAM}) \\
    \cline{2-5}
    & pre-sorted online & 1-bounded &
    (1.5~\cite{DBLP:journals/siamdm/Epstein06}) &
    (1.5 from Thm.~\ref{thm:k-item_online_sorted_unit_nextfit_k}) \\
    & & bounded &
    (1.5~\cite{DBLP:journals/siamdm/Epstein06}) &
    (1.5~\cite{DBLP:journals/siamdm/Epstein06}) \\
    & & unbounded &
    1 & 1~\cite{Tinhofer1995BinpackingAM} \\
    \cline{2-5}
    & online & 1-bounded &
    (2 from Thm.~\ref{thm:k-item_online_general_unit_lower}) &
    2~(trivial) \\
    & & bounded &
    (1.5~\cite{DBLP:journals/siamdm/Epstein06}) &
    (1.5~\cite{DBLP:journals/siamdm/Epstein06}) \\
    & & unbounded &
    1.43~\cite{BALOGH202034} & 1.45~\cite{DBLP:journals/dam/BabelCKK04} \\
    \hline\hline
    3
    & max-min & 1-bounded &
    1.33 from Thm.~\ref{thm:3-item_maxmin_unit_lower} &
    (1.5 from Thm.~\ref{thm:mm_k_cr}) \\
    & & bounded &
    (1.07 from Thm.~\ref{thm:none_maxmin_unbounded_lower}) &
    (1.5 from Thm.~\ref{thm:mm_k_cr}) \\
    & & unbounded &
    (1.07 from Thm.~\ref{thm:none_maxmin_unbounded_lower}) &
    (1.5 from Thm.~\ref{thm:mm_k_cr}) \\
    \cline{2-5}
    & pre-sorted online & 1-bounded & 
    (1.83~\cite{DBLP:journals/siamdm/Epstein06}) &
    (1.83 from Thm.~\ref{thm:k-item_online_sorted_unit_nextfit_k}) \\
    & & bounded &
    (1.83~\cite{DBLP:journals/siamdm/Epstein06}) &
    (1.83~\cite{DBLP:journals/siamdm/Epstein06}) \\
    & & unbounded &
    1.14~\cite{DBLP:conf/dagstuhl/1996oa} &
    (1.75~\cite{DBLP:journals/siamdm/Epstein06}) \\
    \cline{2-5}
    & online & 1-bounded &
    (2.33 from Thm.~\ref{thm:k-item_online_general_unit_lower}) &
    (2.33 from Thm.~\ref{thm:k-item_online_general_unit_nextfit_k}) \\
    & & bounded &
    (1.83~\cite{DBLP:journals/siamdm/Epstein06}) &
    (1.83~\cite{DBLP:journals/siamdm/Epstein06}) \\
    & & unbounded &
    1.56~\cite{BALOGH202034} & 1.75~\cite{DBLP:journals/siamdm/Epstein06} \\
    \hline\hline
    $k$
    & max-min & 1-bounded &
    (1.5 for $k \geq 6$
      from Thm.~\ref{thm:k-item_maxmin_bounded_lower}) &
    $\lambda_{k} - \frac{1}{k} \to 2.69$
      from Thm.~\ref{thm:mm_k_cr} \\
    & & bounded & 
    1.5 for $k \geq 6$
      from Thm.~\ref{thm:k-item_maxmin_bounded_lower} &
    ($\lambda_{k} - \frac{1}{k} \to 2.69$
      from Thm.~\ref{thm:mm_k_cr}) \\
    & & unbounded &
    (1.07 from Thm.~\ref{thm:none_maxmin_unbounded_lower}) &
    ($\nu_{k} \to 2$~\cite{DBLP:journals/dam/BabelCKK04}) \\
    \cline{2-5}
    & pre-sorted online & 1-bounded &
    ($\lambda_{k} \to 2.69$~\cite{DBLP:journals/siamdm/Epstein06}) &
    $\lambda_{k} \to 2.69$
      from Thm.~\ref{thm:k-item_online_sorted_unit_nextfit_k} \\
    & & bounded &
    $\lambda_{k} \to 2.69$~\cite{DBLP:journals/siamdm/Epstein06} &
    $\lambda_{k} \to 2.69$~\cite{DBLP:journals/siamdm/Epstein06} \\
    & & unbounded &
    $1.15$ for $k \geq 5$~\cite{DBLP:journals/tcs/BaloghBG12} &
    ($\nu_{k} \to 2$~\cite{DBLP:journals/dam/BabelCKK04}) \\
    \cline{2-5}
    & online & 1-bounded &
    $3 - \frac{2}{k} \to 3$
      from Thm.~\ref{thm:k-item_online_general_unit_lower} &
    $3 - \frac{2}{k} \to 3$
      from Thm.~\ref{thm:k-item_online_general_unit_nextfit_k} \\
    & & bounded &
    $\lambda_{k} \to 2.69$~\cite{DBLP:journals/siamdm/Epstein06} &
    $\lambda_{k} \to 2.69$~\cite{DBLP:journals/siamdm/Epstein06} \\
    & & unbounded &
    $\mu_{k} \to 2$~\cite{BALOGH202034} &
    $\nu_{k} \to 2$~\cite{DBLP:journals/dam/BabelCKK04}
  \end{tabular}
  \label{tab:cr_cardi}
\end{table*}
\subsection{Our Contribution}
\label{subsec:ourcontribution}
It was a breakthrough that for the classic bin packing problem, Lee
and Lee clarified a theoretical performance gap between online bounded
space algorithms and online unbounded space
algorithms~\cite{Lee:1985:SOB:3828.3833}, where an \emph{online}
algorithm is an algorithm that repeatedly packs the head item in the currently
remaining item sequence into a bin, without seeing other items.  In
this paper, extending this perspective, we comprehensively analyze
theoretical performance bounds for algorithm classes (see
Table~\ref{fig:maxmin_unit_classes}).

We make some preliminary definitions to express the value of the
asymptotic competitive ratio.  Define the sequence $\{\pi_{i}\}_{i =
  1}^{\infty}$ by $\pi_{1} = 2$ and $\pi_{i+1} = \pi_{i}(\pi_{i} - 1)
+ 1$ for $i \geq 1$. Let $\gamma = \sum_{i=1}^{\infty}
\frac{1}{\pi_{i}-1}$, which is known to converge to approximately
$1.69$~\cite{10.1137/0602019}.  Define the sequence
$\{\lambda_{j}\}_{j = 1}^{\infty}$ by $\lambda_{j} = \sum_{i=1}^{j}
\max \{\frac{1}{\pi_{i}-1}, \frac{1}{j}\}$ for $j \geq 1$. It is known
that $\lim_{j \to \infty} \lambda_{j} = \gamma + 1 \approx
2.69$~\cite{DBLP:journals/siamdm/Epstein06}. Throughout this paper,
numerical rounding is always done to the nearest value.  The theorems
in Sections~\ref{sec:none} and~\ref{sec:k-item}, each of which states
an inequality on the number of bins, provide the following bounds on
the asymptotic approximation ratio:

(I)~We obtain the following results for the classic bin packing
problem (see Figure~\ref{fig:maxmin_unit_classes} and
Table~\ref{tab:cr_without_cardi}).  We prove $R_{MM} \leq \frac{3}{2}
= 1.5$ (Theorem~\ref{thm:mm_cr}).  Compare with $R_{NFD} = \gamma
\approx 1.69$~\cite{10.1137/0602019}.  $NFD$ can be seen as a
pre-sorted online 1-bounded space algorithm.  Our result explains the
extent to which performance can be improved by packing also the
smallest items in the remaining sequence.

On the other hand, we show $R_{ALG} \geq \frac{5}{4} = 1.25$ for any
max-min 1-bounded space algorithm $ALG$
(Theorem~\ref{thm:none_maxmin_unit_lower}).  Compare with $R_{FFD} =
\frac{11}{9} \approx 1.22$~\cite{johnson1973near}.  $FFD$ is a
pre-sorted online unbounded space algorithm.  Our result reveals that
with only 1-bounded space, any max-min algorithm cannot achieve better
performance than $FFD$. In addition, we provide a lower bound of
$\frac{7}{6} \approx 1.17$ for max-min bounded space algorithms
(Theorem~\ref{thm:none_maxmin_bounded_lower}), and that of
$\frac{16}{15} \approx 1.07$ on max-min unbounded space algorithms
(Theorem~\ref{thm:none_maxmin_unbounded_lower}).

Furthermore, we give a lower bound for pre-sorted online bounded space
algorithms of $\gamma \approx 1.69$
(Theorem~\ref{thm:none_online_sorted_bounded_lower}). Combining with
$R_{NFD} = \gamma$ introduced above~\cite{10.1137/0602019}, we
conclude that the best possible performance of pre-sorted online
algorithms is not affected by a single open bin available or a
constant number of open bins available. This result contrasts with the
gap between the asymptotic competitive ratios of the best possible
online 1-bounded space algorithm and the best possible online bounded
space algorithm, which are 2~\cite{johnson1973near} and $\gamma
\approx 1.69$~\cite{Lee:1985:SOB:3828.3833}.

(II)~Our results for the $k$-cardinality constrained bin packing
problem are as follows (see Table~\ref{tab:cr_cardi}).  For the
$k$-cardinality-constrained version of algorithm $MM$, denoted by
$MM_{k}$,
(see Algorithm~\ref{algo:mmk} for the pseudocode),
we show
$R_{MM_{k}} \leq \lambda_{k} - \frac{1}{k}$ for each $k \geq 2$
(Theorems~\ref{thm:tsuchiya} and~\ref{thm:mm_k_cr}), which breaks the
lower bound barrier of $\lambda_{k}$ for pre-sorted online bounded
space $k$-cardinality constrained
algorithms~\cite{DBLP:journals/siamdm/Epstein06}.  For example, the
values of $\lambda_{k} - \frac{1}{k}$ are: $\frac{3}{2} = 1.5$ for $k
= 3$, $\frac{7}{4} = 1.75$ for $k = 4$, and $\frac{19}{10} = 1.9$ for
$k = 5$.  As $k \to \infty$, we have $\lambda_{k} - \frac{1}{k} \to
\gamma + 1 \approx 2.69$.  In particular, for $k = 2$ and $3$,
$MM_{k}$ also breaks the lower bounds of $\frac{10}{7} \approx 1.43$
and $1.56$~\cite{BALOGH202034}, respectively, for online unbounded
space $k$-cardinality constrained algorithms.

We provide a lower bound for max-min 1-bounded space algorithms:
$\frac{4}{3} \approx 1.33$ for $k = 3$
(Theorem~\ref{thm:3-item_maxmin_unit_lower}), $\frac{4}{3} \approx
1.33$ for $k = 4$, $\frac{17}{12} \approx 1.42$ for $k = 5$, and
$\frac{3}{2} = 1.5$ for $k \geq 6$
(Theorem~\ref{thm:k-item_maxmin_bounded_lower}).
We show that the $k$-cardinality-constrained version of algorithm
$NFD$ is a best possible pre-sorted online bounded space algorithm
(Theorem~\ref{thm:k-item_online_sorted_unit_nextfit_k}), by combining
with the lower bound by Epstein~\cite{DBLP:journals/siamdm/Epstein06}.
We prove that the $k$-cardinality-constrained version of algorithm
$NF$ (Next Fit) is a best possible online 1-bounded space algorithm
(Theorems~\ref{thm:k-item_online_general_unit_nextfit_k}
and~\ref{thm:k-item_online_general_unit_lower}).
\subsection{Previous Results}
\label{subsec:previous_results}
(I)~In the literature we can find the following results for the
classic bin packing problem: While a tight bound of $\gamma \approx
1.69$ for online bounded space algorithms, achieved by algorithm
$Harmonic$, is established by Lee and
Lee~\cite{Lee:1985:SOB:3828.3833}, a tight bound for online unbounded
space algorithms is unknown so far.  The current best upper and lower
bounds are $1.58$~\cite{conf/esa/BaloghBDEL18} and $\frac{1363 -
  \sqrt{1387369}}{120} \approx
1.54$~\cite{10.1007/s00453-021-00818-7}, respectively.  For online
1-bounded space algorithms, it is known that a tight bound of 2 is
achieved by $NF$ (Next Fit)~\cite{johnson1973near}.

Pre-sorted online algorithms such as $NFD$~\cite{10.1137/0602019} and
$FFD$~\cite{johnson1973near} have been studied by many researchers for
a long time.  The refinement of the analysis of $FFD$ has attracted
great interest~\cite{Dsa2007TheTB,DOSA201313}. Balogh et al.\ gave a
lower bound of $\frac{54}{47}\approx 1.15$ for pre-sorted online
unbounded space algorithms~\cite{DBLP:journals/tcs/BaloghBG12}.

(II)~The results below are known for the $k$-cardinality constrained
bin packing problem.
Epstein
provided a tight bound of $\lambda_{k}$ for online bounded space
algorithms, achieved by algorithm $CCH_{k}$ (Cardinality Constrained
Harmonic)~\cite{DBLP:journals/siamdm/Epstein06}.  This is the
counterpart to the work of Lee and Lee mentioned
above~\cite{Lee:1985:SOB:3828.3833}.  Since the lower bound is proved
by a non-increasing item sequence, the tight bound also applies to
pre-sorted online bounded space algorithms.

Babel et al.\ designed an online unbounded space algorithm with an
asymptotic approximation ratio of $\nu_{k} = 2 + \frac{k +
  \eta_{k}(\eta_{k} - 3)} {(k - \eta_{k} + 1) \eta_{k}}$, where
$\eta_{k} = \lceil \frac{- k + \sqrt{k^{3} - 2k}}{k - 2} \rceil$ for
each $k$~\cite{DBLP:journals/dam/BabelCKK04}.  For $k = 2$ and $3$,
there are better algorithms: one with an asymptotic approximation
ratio of $1+1/\sqrt{5} \approx 1.45$ for $k =
2$~\cite{DBLP:journals/dam/BabelCKK04} and one with an asymptotic
approximation ratio of $\frac{7}{4} = 1.75$ for $k =
3$~\cite{DBLP:journals/siamdm/Epstein06}.  Balogh et al.\ also gave a
lower bound for online unbounded space algorithms~\cite{BALOGH202034}:
$\frac{10}{7} \approx 1.43$ for $k = 2$, $1.56$ for $k = 3$, and
$\mu_{k} = \frac{ 2k + k^{2} - k^{3} + \sqrt{ k^{6} - 2 k^{5} - 3
    k^{4} + 12 k^{3} - 12 k^{2} + 4 k}}{2}$ for general $k$.

The lower bound of $\frac{54}{47} \approx 1.15$ for pre-sorted online
unbounded space algorithms by Balogh et
al.~\cite{DBLP:journals/tcs/BaloghBG12} is valid for $k \geq 5$.  The
book~\cite{DBLP:conf/dagstuhl/1996oa} introduces a lower bound of
$\frac{8}{7} \approx 1.14$ for pre-sorted online unbounded space
algorithms by Csirik, Galambos, and Tur\'{a}n with their proof, which
is still the best each for $k = 3$ and $4$.  Tinhofer showed that the
$2$-cardinality-constrained version of $FFD$ achieves an asymptotic
approximation ratio of 1~\cite{Tinhofer1995BinpackingAM}.
\section{The Classic Bin Packing Problem}
\label{sec:none}
We formulate the \emph{classic} bin packing problem.  (We have added
the term classic to distinguish the problem from the $k$-cardinality
constrained version.)  Let $\mathbb{N}$ be the set of all positive
integers.  The input is an item sequence $I = (a_{1}, a_{2}, \ldots,
a_{n}) \in (0, 1]^{n}$, where each item $i$ is of size $a_{i}$.  We
  refer to an item of size $a$ as an \emph{$a$-item}.  An item
  sequence $I = (a_{1}, a_{2}, \ldots, a_{n}) \in (0, 1]^{n}$ is said
    to be \emph{sorted} if $a_{1} \geq a_{2} \geq \cdots \geq a_{n}$.

A solution to the classic bin packing problem is an assignment $f:
\{1, 2, \ldots, n\} \to \mathbb{N}$ which maps each item index to a
bin index and satisfies the constraint that for all bin indices $j \in
\mathbb{N}$, $\sum_{i: f(i) = j} a_{i} \leq 1$ holds.  The constraint
means that the sum of the sizes of the items in each bin does not
exceed $1$.  If bin $j$ is assigned to item $i$, i.e., $f(i) = j$, we
say that bin $j$ \emph{contains} item $i$.  An item is said to be a
\emph{sole} item if a bin is assigned exclusively to the item.

Given an item sequence $I$ and an assignment $f$ for $I$, the number
of non-empty bins is the number of elements in the range of assignment
$f$, denoted by $|f(I)|$.  The goal of the classic bin packing
problem is to minimize $|f(I)|$.

An algorithm for the classic bin packing problem is to output an
assignment $f: \{1, 2, \ldots, n\} \to \mathbb{N}$.  If an algorithm
assigns bin $j$ to item $i$, we interchangeably say that the algorithm
\emph{packs} item $i$ into bin $j$.  Without loss of generality, we
assume that when an algorithm is about to assign one of the bins that
have not previously had any items assigned to it (often referred to as
\emph{opening} the bin), the algorithm always chooses the bin of the
\emph{smallest} index.  We refer to a bin with a smaller
(respectively, larger) index as an \emph{earlier} (respectively, a
\emph{later}) bin.

We write the number of non-empty bins of an assignment obtained by an
algorithm $ALG$ for an item sequence $I$ as $ALG(I)$, without
specifying the assignment itself.  We denote by $OPT(I)$ the minimum
number of bins that are assigned to all items in $I$.  We also denote
by $OPT$ an algorithm that for a given $I$, outputs such an assignment
that the number of non-empty bins is $OPT(I)$.

The definition of the asymptotic competitive ratio of an algorithm
$ALG$, denoted by $R_{ALG}$, is given by~(\ref{eq:none_acr}).  The
statements of the theorems in this section all take the form of an
inequality for $ALG(I)$ and $OPT(I)$. Bounds on the asymptotic
competitive ratio are obtained from the coefficients of $OPT(I)$.
\subsection{A Max-Min 1-Bounded Space Algorithm}
\label{subsec:none_maxmin_unit_upper}
\begin{algorithm}
  \KwIn{An item sequence $I \in (0, 1]^{n}$}
  \KwOut{An assignment $f: \{1, 2, \ldots, n\} \to \mathbb{N}$}
  sort $I$ in a nonincreasing order and obtain $(a_{1}, a_{2}, \ldots, a_{n})$\;
  $h \leftarrow 1$, $t \leftarrow n$\;
  $l \leftarrow 1$, $S \leftarrow 0$\tcp*[r]{open a new bin}
  \While{$h \leq t$}{
    \While{$h \leq t$}{
      \uIf(\tcp*[f]{first try head}){$S + a_{h} \leq 1$}{
        $f(h) \leftarrow l$, $S \leftarrow S + a_{h}$\;
        $h \leftarrow h + 1$\;}
      \uElseIf(\tcp*[f]{then try tail}){$S + a_{t} \leq 1$}{
        $f(t) \leftarrow l$, $S \leftarrow S + a_{t}$\;
        $t \leftarrow t - 1$\;}
      \lElse(\tcp*[f]{both head and tail are too large}){break}
    }
    $l \leftarrow l + 1$, $S \leftarrow 0$\tcp*[r]{
      close the bin and open a new bin}
  }
  \Return{$f$}\;
  \caption{$MM$~\cite{ZHU201683}.}
  \label{algo:mm}
\end{algorithm}
Zhu proposed the following algorithm $MM$~\cite{ZHU201683}
(see Algorithm~\ref{algo:mm}
for the pseudocode):
First, sort the item sequence in a non-increasing order.  Then,
repeatedly assign the current single open bin first to as many of the
head items in the remaining sequence as possible and next to as many
of the tail items in the remaining sequence as possible.  Open a new
bin only when the current single open bin cannot accept the next tail
item.  If algorithm $MM$ has assigned a bin to the head (respectively,
tail) item in the currently remaining item sequence, then in the
resulting assignment we call that item a \emph{head} (respectively,
\emph{tail}) item.

The theorem below proves $R_{MM} \leq \frac{3}{2} = 1.5$.  The proof
is done either by an analysis of levels of bins or by an analysis
using weights as used
in~\cite{johnson1973near,10.1137/0602019,DBLP:journals/siamdm/Epstein06},
depending on whether an item of size at most $\frac{1}{3}$ is
contained in the latest bin in the assignment.
\begin{theorem}
  It holds that $MM(I) \leq \frac{3}{2} \cdot OPT(I) + 1$ for all item
  sequences $I$.
  \label{thm:mm_cr}
\end{theorem}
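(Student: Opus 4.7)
The plan is to mirror the case split flagged in the excerpt and analyze each case with a different technique---both yielding $MM(I) \leq \frac{3}{2} OPT(I) + 1$. Let $m = MM(I)$, write $B_j$ for the $j$-th bin opened by $MM$ and $S_j$ for its level; recall that $MM$ closes $B_j$ (for $j < m$) only when both the current head and tail items fail to fit, that is, $S_j + a_h > 1$ and $S_j + a_t > 1$.

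For Case 1 (the last bin $B_m$ contains an item $x$ with $a_x \leq \frac{1}{3}$) I would use a level argument. At the moment any earlier bin $B_j$ closed, the item $x$ was still in the remaining sequence, so the then-current tail item has size at most $a_x \leq \frac{1}{3}$, and the closing condition forces $S_j > \frac{2}{3}$. Summing over $j < m$ and bounding $\sum_{j<m} S_j$ by the total item size (which is at most $OPT(I)$) yields $\frac{2}{3}(m-1) < OPT(I)$, i.e., the desired inequality.

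For Case 2 (no item of size $\leq \frac{1}{3}$ lies in $B_m$) I would use the weight technique of Johnson, Lee--Lee, and Epstein with the weights
\[
  w(a) \;=\;
  \begin{cases}
    1 & \text{if } a > \tfrac{1}{2}, \\
    \tfrac{1}{2} & \text{if } \tfrac{1}{3} < a \leq \tfrac{1}{2}, \\
    0 & \text{if } a \leq \tfrac{1}{3}.
  \end{cases}
\]
A short case check over OPT-bin configurations (at most one item $> \frac{1}{2}$, at most two items in $(\frac{1}{3}, \frac{1}{2}]$, plus any number of smalls) shows every OPT bin has weight $\leq \frac{3}{2}$, so $\sum_i w(a_i) \leq \frac{3}{2} OPT(I)$. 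For the matching lower bound I would exploit two structural consequences of Case 2: (i) the head of every $MM$-bin has size $> \frac{1}{3}$, since the head of $B_m$ does and heads are non-increasing along the sorted sequence; and (ii) any non-last bin whose head lies in $(\frac{1}{3}, \frac{1}{2}]$ receives a second item of size $> \frac{1}{3}$, because otherwise the second packed item is small and by sorted order the entire suffix after the bin closes consists of smalls, forcing $B_m$ to also consist of smalls and contradicting Case 2. Together (i) and (ii) show every non-last bin has weight at least $1$ (either from a single item $> \frac{1}{2}$ or from two items in $(\frac{1}{3}, \frac{1}{2}]$), so $\sum_i w(a_i) \geq m - 1$; combining with the OPT-bound gives $m - 1 \leq \frac{3}{2} OPT(I)$, proving the claim.

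The delicate step will be verifying (ii) rigorously from $MM$'s interleaving of head- and tail-packings: one has to trace the pointer updates to confirm that once the head pointer has passed the last medium-sized item in the sorted sequence, every subsequent item packed from either end is small, and this smallness then propagates into $B_m$. With (ii) in hand, both cases package up cleanly into the theorem.
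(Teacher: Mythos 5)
Your proposal is correct and follows essentially the same route as the paper's proof: the identical case split on whether the last bin contains an item of size at most $\frac{1}{3}$, the same level argument (each closed bin exceeds level $\frac{2}{3}$) in the first case, and the same weight function with the same structural observation (each non-last bin holds one head item of size $> \frac{1}{2}$ or two head items in $(\frac{1}{3},\frac{1}{2}]$) in the second. The step you flag as delicate is handled in the paper by the same sortedness argument you sketch, so no new idea is needed.
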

\begin{proof}
  \begin{figure}[t]
    \centering
    \includegraphics[width=0.8\linewidth]{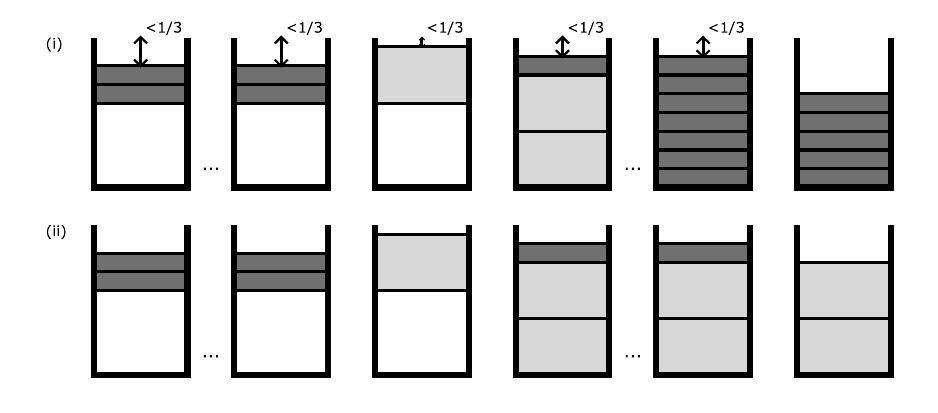}
    \caption{The assignment by $MM$.
      The white, light grey, and dark grey items are of
      classes 1, 2, and 3, respectively.}
    \label{fig:mm_cr}
  \end{figure}
  Without loss of generality, let $I = (a_{1}, a_{2}, \ldots, a_{n})
  \in (0, 1]^{n}$ be a sorted item sequence and assume that $I$ does
    not change by sorting by algorithm $MM$.
    In this proof, we call an item of size $(\frac{1}{2}, 1]$ a
      \emph{class~1} item, an item of size $(\frac{1}{3}, \frac{1}{2}]$
        a \emph{class~$2$} item, and an item of size $(0, \frac{1}{3}]$
          a \emph{class~$3$} item.
          
          (i)~The case where the latest bin in the assignment by
          algorithm $MM$ for $I$ contains at least one class~3 item
          (see Figure~\ref{fig:mm_cr}).  Since $I$ is sorted,
          throughout the execution of algorithm $MM$, the tail item in
          the currently remaining item sequence must always be a
          class~3 item.  Therefore, the sum of the sizes of the items
          of each bin, except the latest one, must exceed
          $\frac{2}{3}$. (If there were a bin which does not satisfy
          this, $MM$ would have packed one of the tail class~3 items
          additionally into the bin.)  Hence, by summing up the sizes
          of the items for all the bins other than the latest one, we
          get $\sum_{i=1}^{n} a_{i} > \frac{2}{3} \cdot (MM(I) -
          1)$.
          Since $\sum_{i=1}^{n} a_{i} \leq OPT(I)$, we have
          $MM(I) < \frac{3}{2} \cdot OPT(I) + 1$.
  
          (ii)~The case where the latest bin in the assignment by
          algorithm $MM$ for $I$ does not contain any class~3 item
          (see Figure~\ref{fig:mm_cr}).  The following weight function
          assigns a weight to each item according to the class:
          \begin{alignat*}{2}
            w_{1}(x) & =
            \begin{cases}
              1,           & x \in (\frac{1}{2}, 1];\\
                \frac{1}{2}, & x \in (\frac{1}{3}, \frac{1}{2}];\\
                  0,           & x \in (0, \frac{1}{3}].
            \end{cases}
          \end{alignat*}
          Since $I$ is sorted, throughout the execution of algorithm
          $MM$, the head item in the currently remaining item sequence
          must always be a class~1 or class~2 item.  Hence, there is
          no bins that contain only class~3 items in the assignment.
          Focus on the head items in the assignment.  Ignore the
          latest bin.  Then, due to the behavior of $MM$, each of the
          remaining bins turns out to be a bin containing one head
          class~1 item whose weight is $1$, or a bin containing two
          head class~2 items whose total weight is $\frac{1}{2} +
          \frac{1}{2} = 1$.  We thus obtain that the sum of the
          weights of the items in each of those bins is at least $1$.
          Therefore, $\sum_{i=1}^{n} w_{1}(a_{i}) \geq MM(I) - 1$
          holds in total.
          
          On the other hand, the sum of the weights of items that fit
          in a single bin is at most $\frac{3}{2}$, where the maximum
          is achieved by a class~1 item and a class~2 item. Therefore,
          $\sum_{i=1}^{n} w_{1}(a_{i}) \leq \frac{3}{2} \cdot OPT(I)$
          holds. We thus have $MM(I) \leq \frac{3}{2} \cdot OPT(I) +
          1$.
\end{proof}
\subsection{A Lower Bound for Max-Min Bounded Space Algorithms}
\label{subsec:none_maxmin_unit_lower}
\subsubsection{A Lower Bound for 1-Bounded Space Algorithms}
\label{subsubsec:none_maxmin_1_bounded_lower}
The next theorem states that $R_{ALG} \geq \frac{5}{4} = 1.25$ holds
for any max-min 1-bounded space algorithm $ALG$.  Compare with
algorithm $FFD$ (First Fit Decreasing) that achieves $R_{FFD} =
\frac{11}{9} (\approx 1.22)$~\cite{johnson1973near}.

$FFD$ first sorts the given item sequence in a non-increasing order
and then runs algorithm $FF$ (First Fit), that is, repeatedly packs
the head item in the currently remaining item sequence into the bin
with the smallest index that the item can fit in and opens a new bin
only when the item cannot fit in any open bin.  Thus, $FFD$ is a
pre-sorted online unbounded space algorithm.  The theorem reveals that
with only a single open bin, no max-min algorithm can outperform $FFD$
(see Table~\ref{tab:cr_without_cardi}).
\begin{theorem}
  For any max-min 1-bounded space algorithm $ALG$, and any positive
  integer $m$ divisible by 2, there exists a sorted item sequence $I$
  such that $OPT(I) = m$ and $ALG(I) \geq \frac{5}{4} \cdot OPT(I) -
  \frac{1}{4}$.
  \label{thm:none_maxmin_unit_lower}
\end{theorem}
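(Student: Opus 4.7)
The plan is to construct, for each max-min 1-bounded space algorithm $ALG$ and each even $m$, a sorted item sequence $I$ with $OPT(I) = m$ and $ALG(I) \geq \lceil (5m-1)/4 \rceil$. The key structural fact is that a max-min 1-bounded algorithm's packing has a very restricted form: each bin consists of items from the head of the sub-sequence remaining at the bin's opening together with items from its tail. Consequently, the multiset of items in any single bin cannot mix items from the ``middle'' of the sorted sequence freely with items from both ends.

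First I would build $I$ from three size classes: $m$ ``large'' items of size just above $\tfrac{1}{2}$ (each of which must occupy its own bin since two do not fit), together with ``medium'' items of size around $\tfrac{1}{4}$ and ``small'' items of size slightly below $\tfrac{1}{4}$, in counts and sizes chosen so that the optimum packs one large, one medium, and one small per bin (summing to exactly $1$), yielding $OPT(I) = m$. The exact sizes and counts would be tailored to the given $ALG$: since the algorithm is deterministic and known to the adversary, one can vary the perturbation signs of the medium and small items within this three-class template to expose the particular inefficiency of $ALG$'s decision rule.

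Next I would carry out a case analysis of $ALG$'s packing. Each of the $m$ large items occupies its own bin, contributing $m$ ``large-bins''. Each such bin admits at most two tail items (three small items alongside a large overflow) and, while small items remain, a medium cannot appear as the tail. A pigeonhole count over the $m$ small items shows that at most $\tfrac{m}{2}$ large-bins host two smalls; the remaining $\tfrac{m}{2}$ large-bins absorb at most one tail item each (either one small or, after all smalls are consumed, one medium, because two mediums together with a large overflow). The mediums not absorbed by large-bins must be packed in separate bins of at most four items each, contributing at least another $\tfrac{m}{4}$ bins. Adding these yields the claimed $ALG(I) \geq \tfrac{5m}{4} - \tfrac{1}{4}$.

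The main obstacle is blocking the ``clever'' max-min strategy that closes a large-bin early and opens a subsequent bin in which a single leftover small has been exposed at the tail, allowing the combination large $+$ small $+$ medium to fit in one bin and spoil the intended lower bound. Preventing this escape is precisely what forces the adaptive choice of sizes: depending on whether $ALG$ ever triggers this ``last-small'' coincidence, the adversary shifts the medium size slightly upward so that large $+$ medium $+$ small strictly exceeds $1$, compensated by a symmetric shift in the small size so that OPT still achieves $m$ bins via a different perfect triple. The case $m \equiv 2 \pmod 4$ is handled by appending a half-gadget contributing the $-\tfrac{1}{4}$ slack in the bound.
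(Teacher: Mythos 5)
Your template (larges just above $\frac{1}{2}$, mediums and smalls near $\frac{1}{4}$, optimum formed by perfect triples) is the same as the paper's, but your counting step makes a fatal concession: you allow each large-bin to host \emph{two} smalls, controlled only by the pigeonhole ``at most $\frac{m}{2}$ large-bins host two smalls.'' Under that concession the claimed bound is false. On an instance with sizes uniform within each class (large $=\frac{1}{2}+\epsilon$, medium $=\frac{1}{4}+\epsilon$, small $=\frac{1}{4}-2\epsilon$), the algorithm $MM$ itself packs $a_1$ with two smalls, closes the bin, packs $a_2$ with two smalls, and so on; after $\frac{m}{2}$ such bins all smalls are gone, the tail becomes a medium, so the next $\frac{m}{2}$ larges each absorb one medium, and the remaining $\frac{m}{2}$ mediums go three per bin, for a total of $m+\frac{m}{6}=\frac{7}{6}m$ bins, well below $\frac{5}{4}m-\frac{1}{4}$. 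This escape cannot be blocked by the adaptive size shifts you describe: whenever the optimum consists of triples large $+$ medium $+$ small $=1$ and the smalls are the smallest items, any large plus the two smallest items sums to at most $1$, so two smalls always fit with a large; your shift only outlaws the large $+$ small $+$ medium coincidence, which the $\frac{7}{6}m$ packing never uses. The missing idea is exactly what the paper's construction supplies: index-dependent, geometrically decaying perturbations $a_i=\frac{1}{2}+\delta r^{i-1}$, $b_i=\frac{1}{4}-\varepsilon-\delta r^{i-1}$, with the $b$-items ordered so that the tail is consumed in \emph{increasing} index order, guaranteeing $a_i+b_{i+1}+b_{i+2}>1$. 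This yields the key inductive inequality ($D_i\le i+1$ in the paper): the \emph{total} number of smalls absorbed by the large-bins for $a_1,\dots,a_i$ is at most $i+1$ --- one extra beyond the first bin's two --- rather than $2i$. That global inequality, not a per-bin capacity bound, is what makes the $\frac{5}{4}$ arithmetic work.

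Separately, your final addition does not add up even on its own terms: after $\frac{m}{2}$ large-bins take two smalls each and the other $\frac{m}{2}$ large-bins take one medium each, only about $\frac{m}{2}$ mediums remain, and at four (or three) per bin these contribute about $\frac{m}{8}$ (or $\frac{m}{6}$) further bins, not the $\frac{m}{4}$ you claim; so even granting every step you reach roughly $\frac{9}{8}m$, short of the target. The paper instead splits into two cases according to whether the last large ($a_m$) or the last small ($b_m$) is packed first, and in the second case sums four counts --- at least $\frac{m-(p+1)}{4}$ bins of smalls, $p$ plus $(m-p)$ large-bins, and at least $\frac{p}{3}$ bins of leftover mediums --- to obtain $\frac{5}{4}m+\frac{1}{12}(p-3)\ge\frac{5}{4}m-\frac{1}{4}$.
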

\begin{proof}
  Fix a max-min 1-bounded space algorithm $ALG$ and a positive integer
  $m$ divisible by 2.  Choose $r$, $\varepsilon$, and $\delta$ such
  that $0 < r < \frac{\sqrt{5} - 1}{2}$, $0 < \varepsilon$, $0 <
  \delta$, $\varepsilon + \delta < \frac{1}{4} - \frac{1}{5} =
  \frac{1}{20}$, and $\delta r^{m-3} (1 - r - r^{2}) > 2 \varepsilon$.
  (For example, $r = \frac{\sqrt{3} - 1}{2}$, $\varepsilon =
  \frac{9r^{m-3}+10}{80(r^{m-3}+4)}$, and $\delta =
  \frac{13}{40(r^{m-3}+4)}$ satisfy the condition.)  Set $a_{i} =
  \frac{1}{2} + \delta r^{i-1}$ and $b_{i} = \frac{1}{4} - \varepsilon
  - \delta r^{i-1}$ for $i = 1, 2, \ldots, m$, as well as $c =
  \frac{1}{4} + \varepsilon$.  Take
  \begin{align*}
    I = (
    a_{1}, a_{2}, \ldots, a_{m},
    \overbrace{
      c, c, \ldots, c
    }^{m},
    b_{m}, b_{m-1}, \ldots, b_{1}).
  \end{align*}
  The items in $I$ are located in a non-increasing order.  Without
  loss of generality, assume that $I$ does not change by sorting by
  $ALG$.  Note that the order of the indices of $b_{\cdot}$-items is
  the reverse of their order of appearance.  Since $\frac{1}{2} <
  a_{i} < \frac{1}{2} + \delta < 1$, $\frac{1}{4} < c < \frac{1}{2} -
  \frac{1}{5} < \frac{1}{3}$, and $\frac{1}{5} < \frac{1}{4} -
  \varepsilon - \delta < b_{i} < \frac{1}{4}$, a single bin can
  contain at most one $a_{\cdot}$-item, at most three $c$-items, or at
  most four $b_{\cdot}$-items.

  To prove the theorem, the following property is important.
  \begin{equation}
    a_{i} + b_{i+1} + b_{i+2} > 1
    \text{ for } i = 1, 2, \ldots, m - 2.
    \label{eq:none_items_overflow}
  \end{equation}
  This can be shown as follows: $0 < r < \frac{\sqrt{5} - 1}{2}$
  implies $1 - r - r^{2} > 0$. Thus, for $i = 1, 2, \ldots, m - 2$, we
  get $a_{i} + b_{i+1} + b_{i+2} = 1 - 2 \varepsilon + \delta r^{i-1}
  (1 - r - r^{2}) \geq 1 - 2 \varepsilon + \delta r^{m-3} (1 - r -
  r^{2}) > 1$.

  It is seen that $a_{i} + c + b_{i} = 1$ for $i = 1, 2, \ldots, m$,
  which means that if the $a_{i}$-item, a $c$-item, and the $b_{i}$-item
  are packed into a bin, the bin is just full. Therefore, $OPT(I) = m$
  holds.

  Since $ALG$ is a max-min algorithm, $ALG$ repeatedly packs either
  the head or tail item in the currently remaining item sequence into
  a bin.  Besides, since $ALG$ is a 1-bounded space algorithm, $ALG$
  packs each item always into the latest bin, not into any earlier
  bin.  We denote by $D_{i}$ the number of $b_{\cdot}$-items that are
  packed by $ALG$ together with $a_{1}, a_{2}, \ldots, a_{i}$ for $i =
  1, 2, \ldots, m$. The following holds:
  \begin{equation}
    D_{i} \leq i + 1
    \text{ for } i = 1, 2, \ldots, m - 1.
    \label{eq:none_few_small_items}
  \end{equation}
  This is shown by induction.  Since $a_{1} = \frac{1}{2} + \delta$
  and $b_{i} = \frac{1}{4} - \varepsilon - \delta r^{i-1}$,
  the $a_{1}$-item and at most two $b_{\cdot}$-items can be packed
  together.  Thus, $D_{1} \leq 2$ holds.  Next, assume $D_{j-1} \leq
  j$ for some $j \geq 2$.  If $D_{j-1} \leq j - 1$, then $D_{j} \leq
  D_{j-1} + 2 \leq j - 1 + 2 = j+1$ immediately holds since
  the $a_{j}$-item and at most two $b_{\cdot}$-items can be packed
  together.  If $D_{j-1} = j$, then the maximum index of the
  $b_{\cdot}$-items that can be packed together with $a_{1}, a_{2},
  \ldots, a_{j-1}$ is at least $j$.  Since $ALG$ assigns bins to
  $a_{1}, a_{2}, \ldots, a_{j}$ along this order, the index of the
  $b_{\cdot}$-item that is packed together with the $a_{j}$-item is at
  least $j+1$.  By~(\ref{eq:none_items_overflow}), we have $a_{j} +
  b_{j'} + b_{j'+1} > a_{j} + b_{j+1} + b_{j+2} > 1$. Hence, at most
  one $b_{\cdot}$-item can be packed together with the $a_{j}$-item.  We
  thus have $D_{j} \leq D_{j-1} + 1 \leq j + 1$.
  In this way~(\ref{eq:none_few_small_items}) is proved.
  
  \begin{figure}[t]
    \centering
    \includegraphics[width=0.9\linewidth]{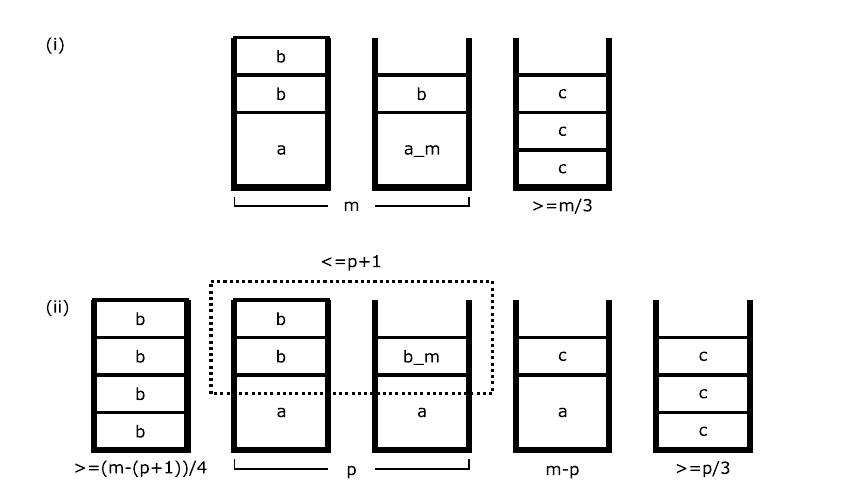}
    \caption{The assignment by $ALG$ for $I$.}
    \label{fig:none_maxmin_unit_lower}
  \end{figure}
  Let us evaluate $ALG(I)$ (see
  Figure~\ref{fig:none_maxmin_unit_lower}).
  
  (i)~The case where $ALG$
  packs the $a_{m}$-item before the $b_{m}$-item.  When the $a_{m}$-item has just
  been packed, all $m$ of $c$-items must remain unprocessed.  Since at
  most one $a_{\cdot}$-item can fit in a bin, the $a_{1}$-item through
  the $a_{m}$-item occupy exactly $m$ bins.  Since at most three $c$-items
  can fit in a bin, $m$ of $c$-items use more than $\frac{m}{3}$ bins.
  Therefore, we obtain
  \begin{equation*}
    ALG(I) \geq m + \frac{m}{3} = \frac{4}{3} m.
  \end{equation*}

  (ii)~If $ALG$ packs the $b_{m}$-item before the $a_{m}$-item. Let $p$ be the
  maximum index of the $a_{\cdot}$-items that are contained in the bin
  containing the $b_{m}$-item or in earlier bins than it.  If there is no
  such $a_{\cdot}$-items, let $p = 0$.
  By~(\ref{eq:none_few_small_items}), there are at most $(p + 1)$ of
  $b_{\cdot}$-items packed together with $a_1, a_2, \ldots, a_p$.
  Thus, in order to pack the remaining $b_{\cdot}$-items, at least
  $\frac{m - (p + 1)}{4}$ bins are needed.

  On the other hand, when
  the $b_{m}$-item has just been packed, $(m - p)$ of $a_{\cdot}$-items and $m$ of
  $c$-items are all still unprocessed. $(m - p)$ of $a_{\cdot}$-items
  require $(m - p)$ bins. Since one $a_{\cdot}$-item and at most
  one $c$-item can fit in a bin, at most $(m - p)$ of
  $c$-items are packed together with $a_{\cdot}$-items. To pack the
  remaining $c$-items, $\frac{m - (m - p)}{3} = \frac{p}{3}$ or more bins
  should be required. Therefore, we get
  \begin{equation*}
    ALG(I) \geq \frac{m - (p + 1)}{4} + p + (m - p) + \frac{p}{3}
    =    \frac{5}{4} m + \frac{1}{12} (p - 3)
    \geq \frac{5}{4} m - \frac{1}{4}.
  \end{equation*}

  The theorem is proved since $\frac{4}{3} m > \frac{5}{4} m - \frac{1}{4}$.
\end{proof}
\subsubsection{A Lower Bound for $B$-Bounded Space Algorithms}
\label{subsubsec:none_maxmin_b_bounded_lower}
We obtain a weaker lower bound than
Theorem~\ref{thm:none_maxmin_unit_lower}.  More specifically, we
present Theorem~\ref{thm:none_maxmin_bounded_lower}, which states that
$R_{ALG} \geq \frac{7}{6} (\approx 1.17)$ holds for any max-min
bounded space algorithm $ALG$.
\begin{theorem}
  For any max-min $B$-bounded space algorithm $ALG$ and any positive
  integer $m$ that is at least $2 B$ and divisible by 3, there exists
  a sorted item sequence $I$ such that $OPT(I) = m$ and $ALG(I) \geq
  \frac{7}{6} \cdot OPT(I) - B$.
  \label{thm:none_maxmin_bounded_lower}
\end{theorem}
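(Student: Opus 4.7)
The plan is to adapt the adversarial construction of Theorem~\ref{thm:none_maxmin_unit_lower} to the $B$-bounded setting. With $B$ bins open simultaneously, the algorithm has extra room to defer commitment, so one expects the coefficient in the lower bound to weaken from $\tfrac{5}{4}$ to $\tfrac{7}{6}$ and the additive term to degrade from $\tfrac{1}{4}$ to $B$.

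I would choose parameters $r,\varepsilon,\delta$ (under conditions analogous to those in the 1-bounded proof, and with the range of validity pushed far enough to accommodate $B$) and build a sorted instance with three families: $m$ large items $a_i$ just above $\tfrac{1}{2}$ arranged geometrically and strictly decreasing; a batch of fixed-size medium items $c$ just above $\tfrac{1}{4}$; and $m$ small items $b_i$ just below $\tfrac{1}{4}$. Parameters are calibrated so that $a_i + c + b_i = 1$ for each $i$, giving $OPT(I) = m$, while an overflow inequality of the form $a_i + b_{i+1} + b_{i+2} > 1$ analogous to~(\ref{eq:none_items_overflow}) continues to hold. The total number of items (and the divisibility assumption $3 \mid m$) is chosen so that all $c$-items can be cleanly packed three per bin when they must share no space with an $a_\cdot$-item.

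The argument then follows the two-case structure of Theorem~\ref{thm:none_maxmin_unit_lower}, splitting on whether $ALG$ packs the $a_m$-item before or after the $b_m$-item. The central quantitative modification is a $B$-relaxed version of~(\ref{eq:none_few_small_items}): the number of $b_\cdot$-items packed together with $a_1,\dots,a_p$ is at most $p + B$ rather than $p + 1$, the intuition being that every pairing above the 1-bounded allowance must reside in a distinct bin that is still open, and only $B$ bins can be open at once. With this relaxed bound, case~(i) ($a_m$ packed first) gives $ALG(I) \ge m + (m-B)/3$, since only at most $B$ of the forced $a_\cdot$-bins can remain open to absorb a $c$-item; case~(ii) ($b_m$ packed first) combines the usual counts of $a_\cdot$-bins, residual $c$-bins, and residual $b_\cdot$-bins and, after optimizing over the free parameter $p$, yields a bound of the form $\tfrac{7}{6}m - B$. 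Taking the smaller of the two bounds delivers the claimed inequality.

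The main obstacle is the $B$-relaxed induction. Unlike in the single-bin setting, several open bins may simultaneously contain a large item awaiting small tail pairings, so one has to argue that the overflow inequality still binds after at most $B$ ``free'' pairings. I plan to track, for each currently open bin, the maximum index of a $b_\cdot$-item placed in it, and apply the overflow inequality against the smallest such index to block further mismatched pairings among the remaining open bins. Once this bookkeeping is in place, the residual counting of $a_\cdot$-, $c$-, and $b_\cdot$-bins mirrors the proof of Theorem~\ref{thm:none_maxmin_unit_lower} almost verbatim.
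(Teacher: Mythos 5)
Your plan diverges from the paper's proof in an essential way: the paper does \emph{not} reuse the construction of Theorem~\ref{thm:none_maxmin_unit_lower} for this theorem. It switches to a much simpler instance consisting of $m$ items each of the three fixed sizes $\frac{1}{2}+\varepsilon$, $\frac{1}{3}+\varepsilon$, $\frac{1}{6}-2\varepsilon$, and the whole argument is direct counting plus the observation that at most $B$ previously assigned bins are still open when the later, smaller items are reached; no analogue of inequality~(\ref{eq:none_few_small_items}) is needed. The crux of your proposal is precisely the step you do not prove: the ``$B$-relaxed'' bound $D_p \le p + B$. This is not a routine extension of the paper's induction. The 1-bounded induction works because, with a single open bin, the contents of each bin are consecutive in the packing order, so any $b_\cdot$-item joining the bin of $a_j$ is forced to have index at least $D_{j-1}+1$, and the overflow inequality~(\ref{eq:none_items_overflow}) then caps that bin at one extra $b_\cdot$-item. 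With $B$ open bins this sequentiality is lost: a max-min algorithm may pack tail items \emph{before} the matching head items (``banking''). For instance, it can open $B$ bins containing $b_1,\dots,b_B$ alone, then place $a_1,\dots,a_B$ into them (each $a_j+b_j \le 1$), then add one further $b_\cdot$-item to each ($a_j+b_j+b_t<1$ for any $t>j$); this legally reaches $D_B = 2B = B+B$ with no sole $a_\cdot$-bins, and your proposed bookkeeping (tracking the maximum $b$-index per open bin and invoking the overflow inequality) never confronts such bins, because the $a$-item arrives after its low-index $b$-partner. Handling these interleavings is the entire difficulty of a bounded-space version of~(\ref{eq:none_few_small_items}); your sketch does not engage with it, so the proof has a genuine gap at its central step.

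There is also an internal inconsistency suggesting the target bound was asserted rather than derived. If $D_p \le p+B$ did hold, then redoing case~(ii) of Theorem~\ref{thm:none_maxmin_unit_lower} with $O(B)$ corrections gives $ALG(I) \ge p + \frac{m-p-B}{4} + (m-p) + \frac{p}{3} - O(B) = \frac{5}{4}m + \frac{p}{12} - O(B)$, i.e.\ a coefficient of $\frac{5}{4}$, not $\frac{7}{6}$; the constant $\frac{7}{6}$ in the paper's proof comes from the different geometry of its instance (six $(\frac{1}{6}-2\varepsilon)$-items per bin, two $(\frac{1}{3}+\varepsilon)$-items per bin), not from anything in your construction. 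So a correct completion of your plan would prove a strictly stronger statement than the theorem --- it would lift the bounded-space entry of Table~\ref{tab:cr_without_cardi} from $1.17$ to $1.25$ and bear directly on the question the authors leave open in Section~\ref{sec:future_work} about a gap between max-min 1-bounded and bounded space algorithms. That the authors instead settled for $\frac{7}{6}$ via a different construction is strong evidence that your missing lemma, whether or not it is ultimately true, is not the minor adaptation your proposal treats it as.
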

\begin{proof}
  Fix a max-min $B$-bounded space algorithm $ALG$ and a positive
  integer $m$ that is at least $2 B$ and divisible by 3.  Choose
  $\varepsilon$ such that $0 < \varepsilon < \frac{1}{2} \cdot
  (\frac{1}{6} - \frac{1}{7}) = \frac{1}{84}$.  Take
  \begin{equation*}
    I = \Bigl(
    \overbrace{
      \frac{1}{2} + \varepsilon, \ldots, \underline{\frac{1}{2} + \varepsilon}
    }^{m},
    \overbrace{
      \frac{1}{3} + \varepsilon, \ldots, \frac{1}{3} + \varepsilon
    }^{m},
    \overbrace{
      \underline{\frac{1}{6} - 2 \varepsilon},
      \ldots, \frac{1}{6} - 2 \varepsilon
    }^{m} \Bigr).
  \end{equation*}
  Here, the last $(\frac{1}{2} + \varepsilon)$-item and the first
  $(\frac{1}{6} - 2 \varepsilon)$-item are important, so we underlined
  these items.  The items in $I$ are located in a non-increasing
  order.  Without loss of generality, assume that $I$ does not change
  by sorting by $ALG$.
  
  Since the sum of the item sizes of a $(\frac{1}{2} +
  \varepsilon)$-item, a $(\frac{1}{3} + \varepsilon)$-item, and a
  $(\frac{1}{6} - 2 \varepsilon)$-item is exactly 1, $OPT(I) = m$
  holds.

  \begin{figure}[t]
    \centering
    \includegraphics[width=0.8\linewidth]{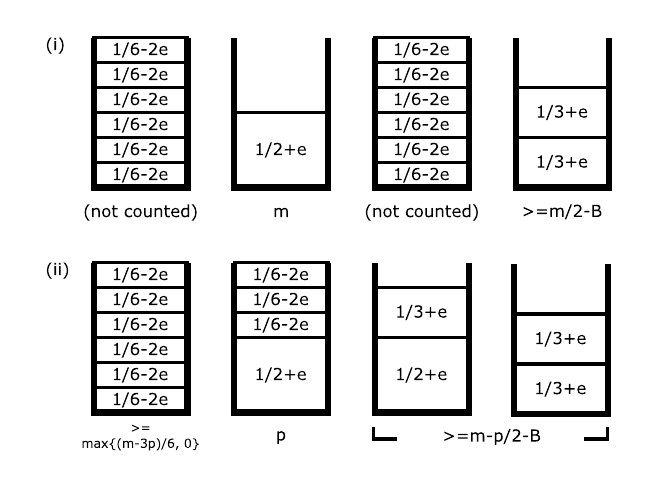}
    \caption{The assignment by $ALG$ for $I$.
    }
    \label{fig:none_maxmin_bounded_lower}
  \end{figure}
  Let us evaluate $ALG(I)$ (see Figure~\ref{fig:none_maxmin_bounded_lower}).
  Since $ALG$ is a max-min algorithm, $ALG$
  repeatedly packs either the head or
  tail item in the currently remaining item sequence into a bin.

  (i)~The case where $ALG$ packs the last $(\frac{1}{2} +
  \varepsilon)$-item before the first $(\frac{1}{6} - 2
  \varepsilon)$-item.  
  Consider the moment
  immediately after $ALG$ has packed the last $(\frac{1}{2} +
  \varepsilon)$-item.  
  Since at most one
  $(\frac{1}{2} + \varepsilon)$-item can fit in a bin, the number
  of bins assigned so far is at least $m$. At most $B$
  of these bins remain open.  Next, consider all the unprocessed items
  at this moment.
  Ignore $(\frac{1}{6} - 2 \varepsilon)$-items
  and see only $(\frac{1}{3} + \varepsilon)$-items. Since at most two
  $(\frac{1}{3} + \varepsilon)$-items can fit in a bin and there
  are at most $B$ bins, the number of bins that are needed to pack $(\frac{1}{3} +
  \varepsilon)$-items is at least $(\frac{m}{2} - B)$, which is
  nonnegative because we have assumed $m \geq 2 B$.  From above derive
  \begin{equation*}
    ALG(I) \geq m + \frac{m}{2} - B =
    \frac{3}{2} m - B.
  \end{equation*}

  (ii)~The case where $ALG$ packs the first $(\frac{1}{6} - 2
  \varepsilon)$-item before the last $(\frac{1}{2} +
  \varepsilon)$-item.
  When $ALG$ has packed the first $(\frac{1}{6} - 2
  \varepsilon)$-item,
  $ALG$ should have packed $m$ of $(\frac{1}{6} - 2 \varepsilon)$-items.
  Let $p$ be the number of $(\frac{1}{2} +
  \varepsilon)$-items that $ALG$ has packed before the first
  $(\frac{1}{6} - 2 \varepsilon)$-item.  If there is no
  such $(\frac{1}{2} + \varepsilon)$-items, let $p = 0$.  At most one
  $(\frac{1}{2} + \varepsilon)$-item can fit in a bin.
  Also, each of the bins containing a
  $(\frac{1}{2} + \varepsilon)$-item can additionally
  contain at
  most 3 of $(\frac{1}{6} - 2 \varepsilon)$-items.  Thus, the number
  of bins assigned before the first $(\frac{1}{6} - 2
  \varepsilon)$-item gets packed is $\max \{ \frac{m - 3 p}{6}, 0\} +
  p$. At most $B$ of these bins remain open.

  When the first $(\frac{1}{6} - 2 \varepsilon)$-item gets packed, the
  remaining items are $(m - p)$ of $(\frac{1}{2} + \varepsilon)$-items
  and $m$ of $(\frac{1}{3} + \varepsilon)$-items.  Since $m - p \leq
  m$, these items can be packed at most two per bin.  Therefore, the
  number of bins needed to pack these items, making use of the open
  bins, is at least $\frac{(m - p) + m}{2} - B = m - \frac{p}{2} - B$,
  which is nonnegative because we have assumed $m \geq 2 B$.  Hence,
  we evaluate
  \begin{equation*}
    ALG(I) \geq
    \max \Bigl\{ \frac{m - 3 p}{6}, 0\Bigr\} + p + \frac{(m - p) + m}{2} - B
    = \max \Bigl\{ \frac{m}{6}, \frac{p}{2}\Bigr\} + \frac{m}{2} - B
    \geq \frac{7}{6} m - B.
  \end{equation*}
  The above evaluation is obtained since the max operator takes the
  minimum value $\frac{m}{6}$ when $p = \frac{m}{3}$, that is,
  the operands are equal. This is always valid since we have
  assumed that $m$ is divisible by $3$.

  The theorem is proved since $\frac{3}{2} m - B > \frac{7}{6} m - B$.
\end{proof}
\subsection{A Lower Bound for Max-Min Unbounded Space Algorithms}
\label{subsec:none_maxmin_unbounded_lower}
We establish, as Theorem~\ref{thm:none_maxmin_unbounded_lower}, an
even weaker lower bound for unbounded space algorithms: $R_{ALG} \geq
\frac{16}{15} (\approx 1.07)$ holds for any max-min unbounded space
algorithm $ALG$.  The proof involves a technique of suspending the
execution of the algorithm when it packs special items in order to
construct another item sequence depending on the behavior of the
algorithm.  This exploits the weakness of the max-min algorithm in
that it can only see the head and tail items of the currently
remaining item sequence.
\begin{theorem}
  For any max-min unbounded space algorithm $ALG$, and any positive
  integer $m$, there exists a sorted item sequence $I$ such that
  $OPT(I) > m$ and $ALG(I) > \frac{16}{15} \cdot (OPT(I) - 1)$.
  \label{thm:none_maxmin_unbounded_lower}
\end{theorem}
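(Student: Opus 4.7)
The plan is to use an adaptive adversary argument that exploits the defining weakness of max-min algorithms: at every step the algorithm only sees the head and tail of the currently remaining sequence and therefore must commit to irreversible packing decisions without any information about the middle portion. Unlike the previous two theorems, the space restriction is no longer available as a tool, so the lower bound must come purely from this informational handicap. The 16/15 bound suggests a construction where, for every 15 bins in the optimal packing, the algorithm can be forced to open one extra bin, regardless of which of a few possible behaviors it exhibits.

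Concretely, I would fix a tiny $\varepsilon > 0$ and design a palette of item sizes whose bottoms and tops lie in carefully chosen intervals (e.g.\ sizes slightly above $\frac{1}{2}$, near $\frac{1}{4}$, and well below $\frac{1}{6}$), chosen so that several natural perfect or nearly-perfect packings of size $1$ are available to $OPT$. I would then present a sorted sequence whose head and tail consist of items that look innocuous to a max-min probe, and simulate $ALG$'s execution while no middle-sized items have been disclosed yet. I would identify a ``branching moment''—the first time $ALG$ either (a) mixes a head item with one or more of the smallest tail items in a single bin, or (b) closes off a bin of head items without using any tail items. The adversary then suspends the simulation and completes the sorted sequence by revealing a specific count and size of middle items, chosen adaptively so that whichever branch the algorithm took becomes provably wasteful: in case (a) the tail items were ``needed'' by the middle items revealed later, and in case (b) the head items leave the middle items unable to combine efficiently.

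The accounting would then proceed by case analysis. In each branch, I would lower-bound $ALG(I)$ by counting (i) bins consumed by the algorithm's committed prefix, plus (ii) bins required to accommodate the later-revealed middle items given the now-constrained remaining capacities, plus (iii) bins for any untouched items. Simultaneously, I would upper-bound $OPT(I)$ by exhibiting a packing of the entire revealed sequence using the natural perfect combinations, yielding some $OPT(I) = m'$ with $m' > m$. Comparing the two gives $ALG(I) \geq \frac{16}{15}(OPT(I) - 1)$ in each branch, with the ``$-1$'' absorbing any partially-filled last bin in the construction.

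The main obstacle will be twofold. First, choosing the item palette and the skeleton so that \emph{every} possible deterministic max-min algorithm falls into one of a small finite set of branching behaviors—this requires that the head and tail items on their own admit only a handful of qualitatively different packings. Second, tuning the branch-specific middle sections so that the ratio comes out to exactly $\frac{16}{15}$ in every branch, rather than being tight in one branch and much larger in another; this balancing is what ultimately forces the specific constant $\frac{16}{15}$ to appear, and it is where the bulk of the calculation will live.
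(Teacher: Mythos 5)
Your high-level device is the right one, and it is in fact the device the paper uses: suspend the execution at a moment determined by the algorithm's own head/tail choices, and exploit the fact that a max-min algorithm cannot see the middle of the sorted sequence, so the adversary may decide \emph{after} the suspension whether the middle items exist at all. (The paper implements this with two concrete sequences: $I_{+}$, containing $N$ items each of sizes $\frac{1}{3}+2\varepsilon$, $\frac{1}{3}+\varepsilon$, $\frac{1}{3}-3\varepsilon$, and a companion $I_{-}$ with no middle items, built from the counts $N_{1}, N_{2}$ observed at the interruption; the algorithm's run on $I_{-}$ is forced to coincide with its run on $I_{+}$ up to that moment, and one then shows the ratio bound fails on at most one of the two sequences.) However, beyond this shared skeleton your proposal has two genuine gaps. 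First, your item palette (sizes above $\frac{1}{2}$, near $\frac{1}{4}$, below $\frac{1}{6}$) is inherited from the 1-bounded and $B$-bounded space lower bounds, where the waste is created by the \emph{space} restriction: once a bin is closed it can never be revisited. With unbounded space that lever is gone --- an algorithm may keep every bin containing a large item open indefinitely and retrofit small items into it later --- so items above $\frac{1}{2}$ give the adversary no leverage at all. The paper's choice of three sizes all in a narrow band around $\frac{1}{3}$ is what creates the unavoidable tension: a bin with one head and one tail item can still absorb a middle item, while a bin with three head/tail items (or two heads) cannot, so the algorithm must gamble on whether middle items exist, and either gamble loses on one of $I_{+}$, $I_{-}$.

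Second, and more fundamentally, your plan rests on the hope that ``every possible deterministic max-min algorithm falls into one of a small finite set of branching behaviors.'' For unbounded space this is false in any useful sense: the algorithm's prefix assignment is an arbitrary partition of the $N_{1}+N_{2}$ packed items subject only to capacity, and you cannot enumerate qualitative behaviors. The paper avoids enumeration entirely: it parameterizes the interrupted assignment by the bin-type counts $y_{ij}$ (bins with $i$ head items and $j$ tail items), writes down linear constraints valid for \emph{every} algorithm --- conservation of items, the capacity limits on how many leftover items each bin type can still absorb, and the assumed bound $ALG(I_{-}) \leq \frac{16}{15}(OPT(I_{-})-1)$ --- and then eliminates variables to force $ALG(I_{+}) > \frac{16}{15}(OPT(I_{+})-1)$. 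That LP-style aggregate counting, together with the case analysis on whether $N_{1}$ or $N_{2}$ is larger (which changes both what items remain and the formula for $OPT(I_{-})$), is where the constant $\frac{16}{15}$ actually comes from; your proposal defers exactly this content (``this is where the bulk of the calculation will live''), so as it stands the proof does not exist yet, and the concrete choices you do commit to would have to be discarded to carry it out.
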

\begin{proof}
  Fix a max-min unbounded space algorithm $ALG$ and a positive integer
  $m$.  Set $N = 4m$.  Choose $\varepsilon$ such that $0 < \varepsilon
  < \frac{1}{36}$.  Take
  \begin{align*}
    I_{+} & = \Bigl(
    \overbrace{
      \frac{1}{3} + 2 \varepsilon, \ldots,
      \underline{\frac{1}{3} + 2 \varepsilon}
    }^{N},
    \overbrace{
      \frac{1}{3} + \varepsilon, \ldots, \frac{1}{3} + \varepsilon
    }^{N},
    \overbrace{
      \underline{\frac{1}{3} - 3 \varepsilon},
      \ldots, \frac{1}{3} - 3 \varepsilon
    }^{N}
    \Bigr).
  \end{align*}
  Here, we underlined the last $(\frac{1}{3} + 2 \varepsilon)$-item
  and the first $(\frac{1}{3} - 3 \varepsilon)$-item, which play an
  important role.  The items in $I_{+}$ are located in a
  non-increasing order.  Without loss of generality, assume that
  $I_{+}$ does not change by sorting by $ALG$.

  Since the sum of the sizes of a $(\frac{1}{3} + 2
  \varepsilon)$-item, a $(\frac{1}{3} + \varepsilon)$-item, and a
  $(\frac{1}{3} - 3 \varepsilon)$-item is exactly 1, we have
  $OPT(I_{+}) = N > m$.

  \begin{figure}[t]
    \centering
    \includegraphics[width=1.0\linewidth]
                    {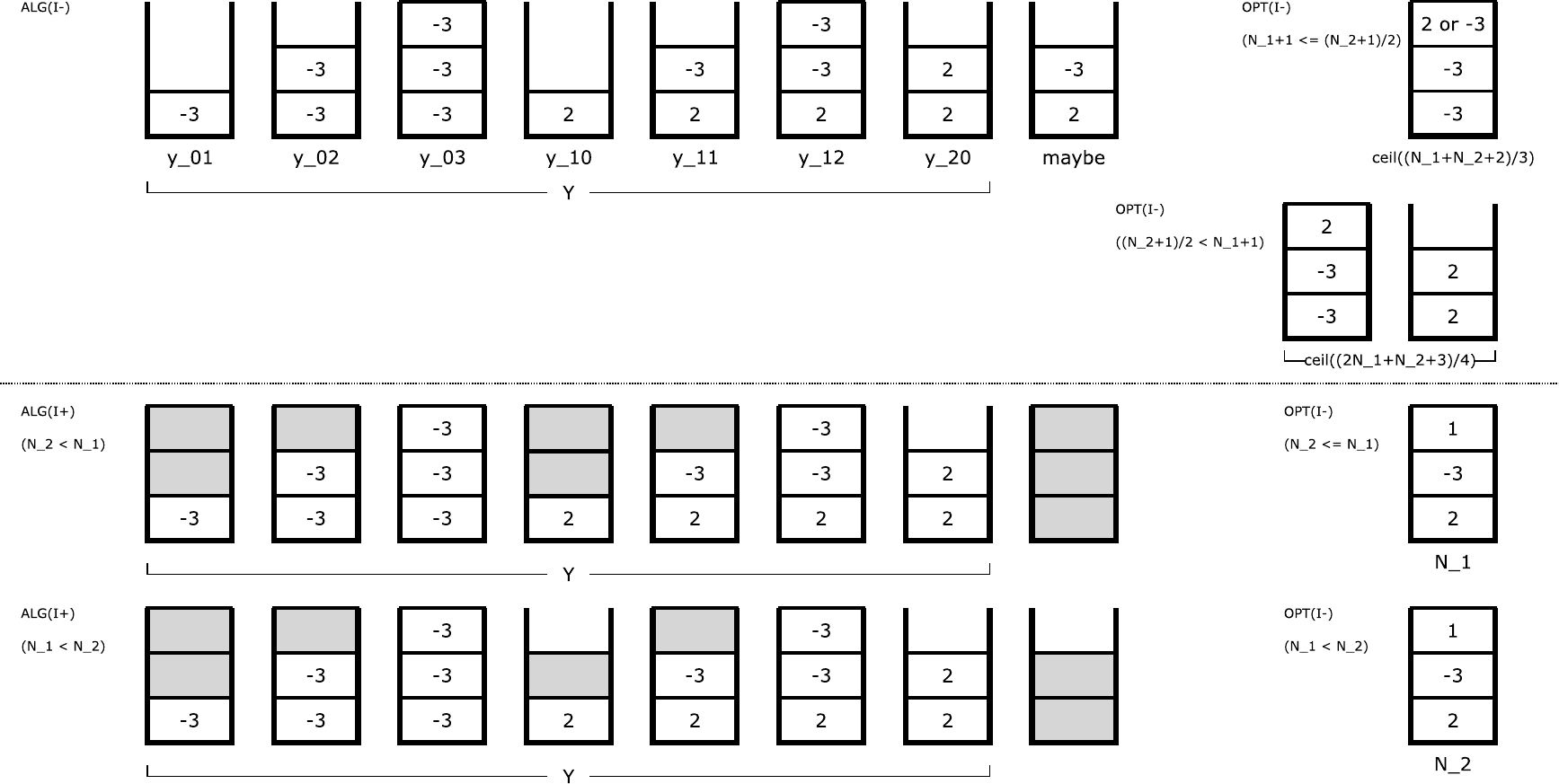}
    \caption{The assignments by $ALG$ and $OPT$ for $I_{-}$ and $I_{+}$.
      An item with label $a$ is of size $\frac{1}{3} + a \varepsilon$.
      The light grey items are the difference of $I_{+}$ and $I_{-}$.
    }
    \label{fig:none_maxmin_unbounded_lower}
  \end{figure}
  (I)~By the definition of max-min algorithms, $ALG$
  repeatedly packs either the head or
  tail item in the currently remaining item sequence into a bin.
  Start $ALG$ for $I_{+}$ and interrupt the execution immediately
  after $ALG$ has packed \emph{either} of the underlined items in
  $I_{+}$.  (The result of the whole run will be analyzed later
  in~(II).)  Among the items in $I_{+}$ that have been packed so far,
  let $N_{1}$ be the number of $(\frac{1}{3} + 2 \varepsilon)$-items,
  and $N_{2}$ be the number of $(\frac{1}{3} - 3 \varepsilon)$-items.
  Obviously, $\max \{N_{1}, N_{2}\} = N$.

  Now, consider another item sequence determined by $N_{1}$ and $N_{2}$:
  \begin{align*}
    I_{-} & = \Bigl(
    \overbrace{
      \frac{1}{3} + 2 \varepsilon, \ldots,
      \underline{\underline{\frac{1}{3} + 2 \varepsilon}},
      \frac{1}{3} + 2 \varepsilon
    }^{N_{1} + 1},
    \overbrace{
      \frac{1}{3} - 3 \varepsilon,
      \underline{\underline{\frac{1}{3} - 3 \varepsilon}}, \ldots,
      \frac{1}{3} - 3 \varepsilon
    }^{N_{2} + 1}
    \Bigr).
  \end{align*}
  The double underlined items here are the $N_{1}$-th $(\frac{1}{3} +
  2 \varepsilon)$-item from the head and the $N_{2}$-th $(\frac{1}{3}
  - 3 \varepsilon)$-item from the tail.  Without loss of generality,
  assume that $I_{-}$ does not change by sorting by $ALG$.


  Start $ALG$ for $I_{-}$ and interrupt the execution immediately
  after $ALG$ has packed \emph{both} of the double underlined
  items in $I_{-}$.
  Then, the two items between the double underlined
  items are
  necessarily left unprocessed.
  The reason is confirmed as follows.  Assume that $N_{1} = N$.  (The
  case of $N_{2} = N$ can be discussed in the same way.)  Then, the
  last item that is packed before the interruption $ALG$ for $I_{+}$
  is the $N_{1}$-th item from the head.  Go back to the moment
  immediately before this item is packed.  At this moment, for
  $I_{+}$, $ALG$ has packed $(N_{1} - 1)$ items from the head and
  $N_{2}$ items from the tail.  The same situation should arise in the
  execution of $ALG$ for $I_{-}$ as well, since the items that $ALG$
  sees for $I_{-}$ are exactly the same as those seen for $I_{+}$.
  Specifically, during the execution for both $I_{-}$ and $I_{+}$,
  the head and tail items in the
  currently remaining item sequence have remained a $(\frac{1}{3} + 2
  \varepsilon)$-item and a $(\frac{1}{3} - 3 \varepsilon)$-item,
  respectively.  Hence, also for $I_{-}$, $ALG$ next packs is the
  $N_{1}$-th item from the head.  Now both the double underlined items
  have been packed and the execution is interrupted, leaving the two
  items between the double underlined items unprocessed.

  See the assignment after the interruption of the execution of $ALG$
  for $I_{-}$, which should be identical to
  the assignment obtained by interrupting $ALG$ for $I_{+}$.
  We refer to a bin that contains $i$ of $(\frac{1}{3} + 2
  \varepsilon)$-items and $j$ of $(\frac{1}{3} - 3 \varepsilon)$-items
  as an \emph{$ij$-bin}. Let $y_{ij}$ be the number of $ij$-bins in
  the assignment.
  The number of non-empty bins in the assignment is
  \begin{equation}
    Y :=
    y_{01} + y_{02} + y_{03} + y_{10} + y_{11} + y_{12} + y_{20}
    \label{eq:alg_i_minus}
  \end{equation}
  (see Figure~\ref{fig:none_maxmin_unbounded_lower}).
  $N_{1}$ of $(\frac{1}{3} + 2 \varepsilon)$-items and
  $N_{2}$ of $(\frac{1}{3} - 3 \varepsilon)$-items are packed somewhere.
  Therefore, it should hold that
  \begin{equation}
    y_{10} + y_{11} + y_{12} + 2 y_{20} - N_{1}
    = 0 \label{eq:2e_items}
  \end{equation}
  and
  \begin{equation}
    y_{01} + 2 y_{02} + 3 y_{03} + y_{11} + 2 y_{12} - N_{2}
    = 0. \label{eq:3e_items}
  \end{equation}

  Let $ALG$ resume the execution for $I_{-}$ and pack the remaining
  two items in $I_{-}$.  It is immediate that
  \begin{equation}
    Y \leq ALG(I_{-}).
    \label{eq:alg_i_minus_bound}
  \end{equation}

  Next, evaluate $OPT(I_{-})$ (see
  Figure~\ref{fig:none_maxmin_unbounded_lower}).  If $N_{1} + 1 \leq
  \frac{N_{2} + 1 }{2}$, it is optimal to create as many
  bins containing $3$ of $(\frac{1}{3} - 3 \varepsilon)$-items in them or
  bins containing $2$ of $(\frac{1}{3} - 3 \varepsilon)$-items and with $1$ of
  $(\frac{1}{3} + 2 \varepsilon)$-item as possible. Thus, we have
  \begin{equation}
    OPT(I_{-}) = \Bigl\lceil \frac{(N_{1} + 1) + (N_{2} + 1)}{3} \Bigr\rceil
    \geq \frac{N_{1} + N_{2} + 2}{3} > \frac{N}{3} > m.
    \label{eq:opt_i_minus_1}
  \end{equation}
  On the other hand,
  if $\frac{N_{2} + 1}{2} < N_{1} + 1$, it is optimal first to create
  as many bins containing $2$ of $(\frac{1}{3} - 3
  \varepsilon)$-items as possible, next to add 
  $(\frac{1}{3} + 2 \varepsilon)$-items to these bins, and finally
  to create as many bins containing $2$ of $(\frac{1}{3} + 2 \varepsilon)$-items.
  Then, we have
  \begin{equation}
    OPT(I_{-}) = 
    \Bigl\lceil \frac{N_{2} + 1}{2} + \frac{(N_{1} + 1) - \frac{N_{2} + 1}{2}}{2}
    \Bigr\rceil = \Bigl\lceil \frac{2 N_{1} + N_{2} + 3}{4} \Bigr\rceil
    > \frac{N_{1} + N_{2}}{4}
    \geq \frac{N}{4} = m.
    \label{eq:opt_i_minus_2}
  \end{equation}
  Note that we have applied $N_{1} + N_{2} \geq \max \{N_{1}, N_{2}\} = N$.

  (II)~To prove the theorem, it is sufficient that at least one of
  $ALG(I_{-}) > \frac{16}{15} \cdot (OPT(I_{-}) - 1)$ or $ALG(I_{+}) >
  \frac{16}{15} \cdot (OPT(I_{+}) - 1)$ holds. If $ALG(I_{-}) >
  \frac{16}{15} \cdot (OPT(I_{-}) - 1)$, the proof is 
  complete. Therefore, from now on, assume
  \begin{equation}
    Y \leq
    ALG(I_{-}) \leq \frac{16}{15} \cdot (OPT(I_{-}) - 1),
    \label{eq:alg_i_minus_assumption}
  \end{equation}
  where we have applied~(\ref{eq:alg_i_minus_bound}).
  In the rest of the proof, we are going to show
  $ALG(I_{+}) > \frac{16}{15} \cdot (OPT(I_{+}) - 1)$.
  Let $ALG$ resume the execution for $I_{+}$ and pack all the
  remaining items in $I_{+}$ (see
  Figure~\ref{fig:none_maxmin_unbounded_lower}).
  Set $r = \frac{ALG(I_{+})}{OPT(I_{+})}$.

  (i)~The case where
  \begin{equation}
    N_{2} - N_{1} < 0. \label{eq:case_ro}
  \end{equation}
  The remaining items in $I_{+}$ are $(N_{1} - N_{2})$ of
  $(\frac{1}{3} - 3 \varepsilon)$-items and $N_{1}$ of $(\frac{1}{3} +
  \varepsilon)$-items. See how many of these items can be packed into
  the bins that already contain items
  (see Figure~\ref{fig:none_maxmin_unbounded_lower}).
  Due to the size
  constraint, the $20$-bins can no longer hold any items. The $02$-bins
  and the $11$-bins can each hold just one item. The $01$-bins and the
  $10$-bins can each hold at most two items. As for the further
  remaining items, at most three of them can fit in a bin.
  Thus overall, we evaluate
  \begin{equation*}
    ALG(I_{+}) \geq Y +
    \max\Bigl\{
    \frac{(N_{1} - N_{2}) + N_{1} - 2 y_{01} - y_{02} - 2 y_{10} - y_{11}}{3}
    , 0 \Bigr\}.
  \end{equation*}
  Applying $r = \frac{ALG(I_{+})}{OPT(I_{+})}$, we obtain
  \begin{equation}
    Y +
    \frac{(N_{1} - N_{2}) + N_{1} - 2 y_{01} - y_{02} - 2 y_{10} - y_{11}}{3}
    - r N_{1}  \leq 0. \label{eq:alg_i_minus_i_1}
  \end{equation}

  From~(\ref{eq:opt_i_minus_2}) and~(\ref{eq:alg_i_minus_assumption}),
  it is derived that
  \begin{equation}
    Y - \frac{16}{15} \cdot
    \frac{2 N_{1} + N_{2} + 3}{4}
    < 0. \label{eq:alg_i_minus_16_15_a}
  \end{equation}
  Calculating
  $\frac{4}{45} \cdot (\ref{eq:case_ro})
  + 1 \cdot (\ref{eq:alg_i_minus_i_1})
  + \frac{1}{3} \cdot (\ref{eq:alg_i_minus_16_15_a})
  - \frac{2}{3} \cdot (\ref{eq:2e_items})
  - \frac{1}{3} \cdot (\ref{eq:3e_items})$
  and eliminating $Y$ by~(\ref{eq:alg_i_minus}), we get
  \begin{equation}
    \frac{y_{01} + y_{02} + y_{03}}{3}
    + \frac{16}{15} N_{1} -\frac{4}{15}
    < r N_{1}.
    \label{eq:ro_lb}
  \end{equation}
  The fact that $y_{01}$, $y_{02}$, and $y_{03}$ are all nonnegative,
  $OPT(I_{+}) = N_{1}$, and $r = \frac{ALG(I_{+})}{OPT(I_{+})}$ leads us to
  that $ALG(I_{+}) > \frac{16}{15} \cdot OPT(I_{+}) -
  \frac{4}{15} > \frac{16}{15} \cdot (OPT(I_{+}) - 1)$. 
  (Actually,
  this derivation corresponds to solving a linear optimization problem of
  minimizing $r$ under the constraints~(\ref{eq:case_ro}),
  (\ref{eq:alg_i_minus_i_1}), (\ref{eq:alg_i_minus_16_15_a}),
  (\ref{eq:2e_items}), and~(\ref{eq:3e_items}).)

  (ii)~The case where
  \begin{equation}
    N_{1} - N_{2} < 0. \label{eq:case_yi}
  \end{equation}
  The remaining items in $I_{+}$ are $(N_{2} - N_{1})$ of
  $(\frac{1}{3} + 2 \varepsilon)$-items and $N_{2}$ of $(\frac{1}{3}
  + \varepsilon)$-items.
  Similarly to~(i), see how many of these items can be packed
  into the bins that already contain items
  (see Figure~\ref{fig:none_maxmin_unbounded_lower}).
  Due to the size
  constraint, the $20$-bins can no longer hold any items. The $02$-bins,
  the $10$-bin, and the $11$-bins
  can each hold just one item. The $01$-bins can
  each hold at most two items.
  As for the further remaining items,
  at most two of them can fit in a bin.
  Therefore, we have
  \begin{equation*}
    ALG(I_{+}) \geq Y +
    \max\Bigl\{
    \frac{(N_{2} - N_{1}) + N_{2} - 2 y_{01} - y_{02} - y_{10} - y_{11}}{2}
    , 0 \Bigr\}.
  \end{equation*}
  Using $r = \frac{ALG(I_{+})}{OPT(I_{+})}$, we get
  \begin{equation}
    Y +
    \frac{(N_{2} - N_{1}) + N_{2} - 2 y_{01} - y_{02} - y_{10} - y_{11}}{2}
    - r N_{2}  \leq 0. \label{eq:alg_i_minus_ii_1}
  \end{equation}
  We perform further case analysis for $OPT(I_{-})$.

  (ii-a)~The case where
  $N_{1} < N_{2}$ and $\frac{N_{2} + 1}{2} < N_{1} + 1$, that is,
  $\frac{N_{2} - 1}{2} < N_{1} < N_{2}$.  In this case,
  (\ref{eq:alg_i_minus_16_15_a}) is valid.
  Calculating
  $\frac{1}{30} \cdot (\ref{eq:case_yi})
  + 1 \cdot (\ref{eq:alg_i_minus_ii_1})
  + 1 \cdot (\ref{eq:alg_i_minus_16_15_a})
  - 1 \cdot (\ref{eq:2e_items})
  - \frac{1}{2} \cdot (\ref{eq:3e_items})$
  and eliminating $Y$ by~(\ref{eq:alg_i_minus}), we derive
  \begin{equation}
    \frac{y_{01} + y_{02} + y_{03} + y_{10}}{2}
    + \frac{6}{5} N_{2} - \frac{4}{5}
    < r N_{2}.
    \label{eq:yi_a_lb}
  \end{equation}
  Applying the fact that $y_{01}$, $y_{02}$, $y_{03}$, and $y_{10}$ are all
  nonnegative, $OPT(I_{+}) = N_{2}$, and $r =
  \frac{ALG(I_{+})}{OPT(I_{+})}$, we conclude
  $ALG(I_{+}) > \frac{6}{5} \cdot
  OPT(I_{+}) - \frac{4}{5} > \frac{16}{15} \cdot (OPT(I_{+}) - 1)$.

  (ii-b)~The case where
  \begin{equation}
    N_{1} - \frac{N_{2} - 1}{2} \leq 0. \label{eq:case_yi_b}
  \end{equation}
  (\ref{eq:opt_i_minus_1}) and~(\ref{eq:alg_i_minus_assumption}) imply
  that
  \begin{equation}
    Y - \frac{16}{15} \cdot
    \frac{N_{1} + N_{2}}{3}
    < 0. \label{eq:alg_i_minus_16_15_b}
  \end{equation}
  Calculating
  $\frac{8}{45} \cdot (\ref{eq:case_yi_b})
  + 1 \cdot (\ref{eq:alg_i_minus_ii_1})
  + \frac{1}{2} \cdot (\ref{eq:alg_i_minus_16_15_a})
  - \frac{1}{2} \cdot (\ref{eq:2e_items})
  - \frac{1}{2} \cdot (\ref{eq:3e_items})$
  and eliminating $Y$ by~(\ref{eq:alg_i_minus}), we obtain
  \begin{equation}
    \frac{y_{01} + y_{02}}{2} + \frac{37}{30} N_{2} - \frac{4}{15}
    < r N_{2}.
    \label{eq:yi_b_lb}
  \end{equation}
  Since $y_{01}$ and $y_{02}$ are nonnegative,
  $OPT(I_{+}) = N_{2}$, and $r = \frac{ALG(I_{+})}{OPT(I_{+})}$,
  it follows that
  $ALG(I_{+}) > \frac{37}{30} \cdot OPT(I_{+}) - \frac{4}{15}
  > \frac{16}{15} \cdot (OPT(I_{+}) - 1)$.
\end{proof}

\subsection{A Tight Bound for Pre-Sorted Online Bounded Space Algorithms}
\label{subsec:none_online_sorted_bounded_tight}
Recall the sequence $\{\pi_{i}\}_{i = 1}^{\infty}$ defined by $\pi_{1}
= 2$ and $\pi_{i+1} = \pi_{i}(\pi_{i} - 1) + 1$ for $i \geq 1$ in
Section~\ref{subsec:ourcontribution}.  The values are: $\pi_{1} = 2$,
$\pi_{2} = 3$, $\pi_{3} = 7$, $\pi_{4} = 43$, $\pi_{5} = 1807$, and so
on.  By induction, we can see that $\{\pi_{i}\}_{i = 1}^{\infty}$ is
monotonically increasing and
\begin{equation}
  1 - \sum_{i=1}^{j} \frac{1}{\pi_{j}} = \frac{1}{\pi_{j+1} - 1}
  \text{ for all } j \geq 1.
  \label{eq:pi_sum}
\end{equation}
We also present the following convergence:
\begin{equation}
  \sum_{i=1}^{\infty} \frac{1}{\pi_{i}-1} = \gamma  (\approx 1.69).
  \label{eq:gamma_converge}
\end{equation}

Lee and Lee provided a tight bound of $\gamma$ on the asymptotic
approximation ratio of online bounded space
algorithms~\cite{Lee:1985:SOB:3828.3833}.  In this section, by
combining with an existing result, we provide a tight bound for
pre-sorted online bounded space algorithms, which is also equal to
$\gamma$.  This reveals that pre-sorting gives no advantage for online
bounded space algorithms.

Algorithm $NFD$ (Next Fit Decreasing) first sorts the given item
sequence in a non-increasing order and then runs algorithm $NF$ (Next
Fit), that is, repeatedly packs, into the current single open bin, as
many of the head items in the remaining item sequence as possible and
opens a new bin only when the item cannot fit in the single open bin.
It is seen that $NFD$ is a pre-sorted online 1-bounded space
algorithm.

Baker and Coffman Jr.\ proved $R_{NFD} =
\gamma$~\cite{10.1137/0602019}, which we introduce as
Theorem~\ref{thm:none_online_sorted_bounded_upper}.  Together with
Theorem~\ref{thm:none_online_sorted_bounded_lower}, which we will
prove soon, we conclude that $NFD$ is best possible.  It should be
remarked that since $NFD$ is a 1-bounded space algorithm, our bound is
tight for pre-sorted online 1-bounded space algorithms as well as for
pre-sorted online bounded space algorithms.  In other words,
pre-sorted online algorithms do not gain from maintaining a large
number of open bins, as long as the number is bounded by a constant.

Theorem~\ref{thm:none_online_sorted_bounded_lower}
intuitively follows from the fact that 
any pre-sorted online bounded space algorithm can not combine items of different sizes
if the item sequence contains a sufficiently long sequence of
items of the same size.
The item sequence used to prove
Theorem~\ref{thm:none_online_sorted_bounded_lower} is essentially the
same as that used in the proof of the main theorem in the
paper~\cite{10.1137/0602019}. The reason why the statement of
Theorem~\ref{thm:none_online_sorted_bounded_lower} is nested is to
allow $K$, which is a parameter determined by how close $\delta$ is to
zero, to be chosen independently of $m$.
\begin{theorem}[\cite{10.1137/0602019}]
  It holds that $NFD(I) \leq \gamma \cdot OPT(I) + 3$
  for all item sequences $I$.
  \label{thm:none_online_sorted_bounded_upper}
\end{theorem}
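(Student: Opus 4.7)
The plan is a weight-function argument, of the kind originally developed for this setting by Baker and Coffman. Define a weight function $w:(0,1]\to\mathbb{R}_{\ge 0}$ that is piecewise constant on the size classes induced by the sequence $\{1/\pi_i\}$: with the convention $\pi_0 := 1$, class $i \ge 1$ is the interval $(1/\pi_i,\,1/\pi_{i-1}]$, so class $1$ is $(1/2,1]$, class $2$ is $(1/3,1/2]$, class $3$ is $(1/7,1/3]$, and so on, and every item of class $i$ receives weight $w(x) = 1/(\pi_i-1)$. Writing $W(I) := \sum_{j=1}^{n} w(a_j)$, the proof reduces to sandwiching $W(I)$ between $NFD(I) - O(1)$ and $\gamma \cdot OPT(I)$.

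For the upper bound $W(I) \le \gamma \cdot OPT(I)$, the key is to show that any feasible bin has total weight at most $\gamma$. Intuitively, the weight-maximizing feasible set is obtained by taking ``one item of each class'' with sizes just above $1/\pi_i$; by~(\ref{eq:pi_sum}) the sizes of such a set sum to $\sum_{i\ge 1} 1/\pi_i = 1$ in the limit, while its weight $\sum_{i\ge 1} 1/(\pi_i - 1)$ equals $\gamma$ by~(\ref{eq:gamma_converge}). One can verify by induction on classes, using the recursion $\pi_{i+1} = \pi_i(\pi_i - 1) + 1$, that no other feasible combination yields more weight: replacing ``one item of class $i$'' by multiple smaller items simply reshuffles weight already accounted for in deeper classes. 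Summing the per-bin inequality over an optimal packing yields the bound.

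For the lower bound $W(I) \ge NFD(I) - O(1)$, it suffices to show that, on average, each NFD bin carries weight at least $1$. A ``pure'' NFD bin with all items in a single class $i$ is closed only when the next item (also of class $i$, since the sequence is sorted) fails to fit, forcing the load to exceed $1 - 1/\pi_{i-1}$; a counting argument using the size bounds in class $i$ then yields the required weight. The main obstacle is the treatment of \emph{transition bins}, those that contain items of two (or more) consecutive classes: a naive per-bin accounting can leave such a bin with weight below $1$. One handles these via an amortized charging along the sorted item sequence, using the fact that class transitions can occur at most a constant number of times in each ``phase'' of the packing so that excess weight accumulated in densely packed bins covers any local shortfall. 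Because only $O(1)$ boundary bins and transitions escape this accounting, the final bound picks up the additive constant $3$, giving $NFD(I) \le \gamma \cdot OPT(I) + 3$.
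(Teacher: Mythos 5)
First, a point of orientation: the paper does not prove this theorem at all --- it is imported as a known result of Baker and Coffman~\cite{10.1137/0602019} --- so your attempt has to stand on its own, and it does not. The $OPT$-side of your sandwich is fine (your weights are dominated pointwise by the harmonic weights $1/j$ on $(\frac{1}{j+1},\frac{1}{j}]$, whose per-bin supremum is $\gamma$), but the $NFD$-side, $W(I) \ge NFD(I) - O(1)$, is false for your weight function, and not just on exceptional bins. Your weight $1/(\pi_i - 1)$ for the whole class $(\frac{1}{\pi_i}, \frac{1}{\pi_{i-1}}]$ is calibrated to bins holding $\pi_i - 1$ items, which only happens when sizes sit near the bottom of the class. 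Take $n$ items of size $\frac{1}{4}+\varepsilon$: they lie in your class~$3 = (\frac{1}{7},\frac{1}{3}]$ and get weight $\frac{1}{6}$ each, while $NFD$ packs exactly three per bin, so $NFD(I) = \frac{n}{3}$ but $W(I) = \frac{n}{6} = \frac{1}{2}\,NFD(I)$; the claimed inequality fails by a linear amount. (The theorem itself is not violated, since $OPT$ also packs three per bin --- it is your accounting that breaks.) Your proposed repair collapses on the same instance: load $> 1 - \frac{1}{\pi_{i-1}} = \frac{2}{3}$ and item sizes $\le \frac{1}{3}$ give at least $3$ items per bin, hence weight at least $\frac{1}{2}$ --- exactly the realized value, not the required $1$. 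In general your counting argument yields only $\pi_{i-1}\cdot\frac{1}{\pi_i - 1} = \frac{1}{\pi_{i-1}-1}$ per pure class-$i$ bin, which is far below $1$ for every $i \ge 3$.

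The natural fix --- refining to the harmonic intervals $(\frac{1}{j+1},\frac{1}{j}]$ with weight $\frac{1}{j}$, which does make every pure bin worth at least $1$ while keeping the per-bin $\gamma$ bound --- then founders on your second hand-waved claim, that transition bins cost only $O(1)$. That claim is also false. Consider, for $j = 2, \ldots, K$, one item of size $\frac{1}{j}$ together with $j-1$ items of size $\frac{1}{j+1}$ (plus one final item of size $\frac{1}{K+1}$). $NFD$ produces exactly the bins $\{\frac{1}{j}, \frac{1}{j+1}, \ldots, \frac{1}{j+1}\}$, each of which is a transition bin of weight $1 - \frac{j-1}{j(j+1)}$, and the total shortfall $\sum_{j=2}^{K} \frac{j-1}{j(j+1)} = \Theta(\log K)$ is unbounded. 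So no argument of the form ``all but $O(1)$ bins carry weight $\ge 1$'' can prove the theorem; a correct proof must play the deficit of such cascading transition bins off against slack on the $OPT$ side (in this instance the item sizes sit at the tops of their intervals, so optimal bins have weight near $1$, far below $\gamma$). That balancing act is precisely what makes the Baker--Coffman analysis a substantial piece of work rather than a short weight-function computation, and it is the idea missing from your proposal.
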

\begin{theorem}
  For any $\delta$ satisfying $0 < \delta < \gamma$, there exists an
  integer $K \geq 1$ such that: For any pre-sorted online $B$-bounded space
  algorithm $ALG$ and any positive integer $m$, there exists a sorted
  item sequence $I$ such that $OPT(I) = m$ and $ALG(I) > (\gamma -
  \delta) \cdot OPT(I) - B (K - 1)$.
  \label{thm:none_online_sorted_bounded_lower}
\end{theorem}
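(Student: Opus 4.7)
The plan is to reuse the Baker–Coffman construction that gives the matching upper bound $R_{NFD} \leq \gamma$, but to refine the bin-counting so that the $B$ open bins that may be inherited across phase boundaries are tracked explicitly. First, using the convergence (\ref{eq:gamma_converge}), fix an integer $K \geq 1$ such that $\sum_{i=1}^{K} \frac{1}{\pi_i - 1} > \gamma - \delta$; this $K$ depends only on $\delta$, as the theorem requires. Using (\ref{eq:pi_sum}), choose $\epsilon > 0$ small enough that $\sum_{i=1}^{K} (\frac{1}{\pi_i} + \epsilon) \leq 1$, set $a_i = \frac{1}{\pi_i} + \epsilon$, and observe that $\pi_i a_i > 1$, so at most $\pi_i - 1$ items of size $a_i$ fit in any single bin. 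Given $ALG$, $B$, and $m$, let $I$ consist of $m$ copies of $a_1$, followed by $m$ copies of $a_2$, and so on, up to $m$ copies of $a_K$. Since $\pi_1 < \pi_2 < \cdots$, $I$ is sorted non-increasingly, and $OPT(I) = m$ follows immediately: one item of each of the $K$ sizes fits together in a single bin by the choice of $\epsilon$ (so $OPT(I) \leq m$), while the $m$ items of size $a_1 > 1/2$ force $OPT(I) \geq m$.

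For the lower bound on $ALG(I)$, partition the run into $K$ phases, phase $i$ being the processing of the $m$ copies of $a_i$. In phase $1$, each $a_1$-item occupies its own bin, contributing exactly $m$ new bins. For $i \geq 2$, at most $B$ bins are open at the start of phase $i$, and each such bin can accept at most $\pi_i - 1$ items of size $a_i$, so at most $B(\pi_i - 1)$ of the $m$ items of size $a_i$ can land in previously opened bins; the remaining items must go into bins opened during phase $i$, each of which again holds at most $\pi_i - 1$ copies of $a_i$, requiring at least $\frac{m}{\pi_i - 1} - B$ newly opened bins. Bins opened in distinct phases are distinct, since each bin is opened exactly once during the run, so summing yields
\begin{equation*}
ALG(I) \;\geq\; m \;+\; \sum_{i=2}^{K} \Bigl( \frac{m}{\pi_i - 1} - B \Bigr)
\;=\; m \sum_{i=1}^{K} \frac{1}{\pi_i - 1} - (K-1)B
\;>\; (\gamma - \delta) \cdot OPT(I) - B(K-1).
\end{equation*}

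The main care point is the phase-by-phase book-keeping. The per-phase contribution $\frac{m}{\pi_i - 1} - B$ could be negative when $m$ is small relative to $B(\pi_i - 1)$; this is harmless, because the true contribution is $\max\{0,\frac{m}{\pi_i - 1} - B\}$, and $\max\{0,x\} \geq x$ preserves the displayed chain. The uniform capacity bound of $\pi_i - 1$ size-$a_i$ items per bin (whether pre-existing or freshly opened) follows directly from $\pi_i a_i > 1$, which is why the pre-sorted ordering is essential: it groups items of each size together, so the algorithm, restricted to at most $B$ open bins, cannot efficiently combine different sizes except through a small constant number of ``carry-over'' bins between consecutive phases.
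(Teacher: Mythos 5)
Your proposal is correct and takes essentially the same approach as the paper's proof: the same choice of $K$ from the convergence of $\sum_i \frac{1}{\pi_i - 1}$, the same sorted instance of $m$ copies each of $\frac{1}{\pi_i}+\epsilon$ for $i=1,\ldots,K$, and the same phase-by-phase accounting in which at most $B$ open bins carry over between size classes, giving $ALG(I) \ge m\sum_{i=1}^{K}\frac{1}{\pi_i-1} - B(K-1) > (\gamma-\delta)\cdot OPT(I) - B(K-1)$. The only differences are cosmetic: you certify $OPT(I)=m$ via the items of size exceeding $\frac{1}{2}$ rather than via an exact unit sum, and you make the $\max\{0,\cdot\}$ truncation of per-phase contributions explicit, both of which are fine.
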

\begin{proof}
  Fix $\delta$ such that $0 < \delta < \gamma$.  Choose an integer $K
  \geq 1$ such that $\gamma - \delta < \sum_{i=1}^{K}
  \frac{1}{\pi_{i}-1}$.  (\ref{eq:gamma_converge}) and
  $\frac{1}{\pi_{1}-1} = 1$ ensure that this choice is always
  possible.  Fix a max-min $B$-bounded space algorithm $ALG$ and
  a positive integer $m$.  Set $\varepsilon = \frac{1}{K (\pi_{K} -
    1)}$.  Take
  \begin{align*}
    I = \Bigl(
    \overbrace{
      \frac{1}{\pi_{1}} + \varepsilon, \ldots, \frac{1}{\pi_{1}} + \varepsilon
    }^{m},
    \overbrace{
      \frac{1}{\pi_{2}} + \varepsilon, \ldots, \frac{1}{\pi_{2}} + \varepsilon
    }^{m},
    \ldots,
    \overbrace{
      \frac{1}{\pi_{K}} + \varepsilon, \ldots, \frac{1}{\pi_{K}} + \varepsilon
    }^{m}
    \Bigr).
  \end{align*}
  The items in $I$ are located in a non-increasing order.  Without
  loss of generality, assume that $I$ does not change by sorting by
  $ALG$.

  We know $1 - \sum_{i=1}^{K} \frac{1}{\pi_{i}} = \frac{1}{\pi_{K} -
    1}$ by~(\ref{eq:pi_sum}).  Thus, $\sum_{i=1}^{K} \left(
  \frac{1}{\pi_{i}} + \varepsilon \right) = \sum_{i=1}^{K}
  \frac{1}{\pi_{i}} + K \varepsilon = 1$ holds, which means that
  the sum of the sizes of items, one for each size, is 1 and therefore
  $OPT(I) = m$ holds.

  Investigate the assignment by $ALG$ for $I$.  Since $ALG$ is a
  pre-sorted online algorithm, $ALG$ repeatedly packs the head item in the
  currently remaining item sequence into a bin.
  For each $i = 1, 2,
  \ldots, K$, $\left( \frac{1}{\pi_{i}} + \varepsilon \right) \pi_{i}
  = 1 + \varepsilon \pi_{i} > 1$ holds true.  From the monotonically
  increasing property of $\{\pi_{i}\}_{i = 1}^{\infty}$, it follows
  that $\left( \frac{1}{\pi_{i}} + \varepsilon \right) (\pi_{i} - 1) =
  1 - \frac{1}{\pi_{i}} + \varepsilon (\pi_{i} - 1) \leq 1 -
  \frac{\pi_{i} - 1}{\pi_{i+1} - 1} + \frac{\pi_{i} - 1}{\pi_{K+1} -
    1} \leq 1$.  These two facts imply that $(\frac{1}{\pi_{i}} +
  \varepsilon)$-items can be packed into a bin at most $(\pi_{i} -
  1)$.

  Therefore, first, $ALG$ assigns at least $m$ bins to
  $(\frac{1}{\pi_{1}} + \varepsilon)$-items, leaving at most $B$ bins
  open.  Next, for each $i = 2, 3, \ldots, K$, $ALG$ assigns at least
  $(\frac{m}{\pi_{i} - 1} - B)$ bins to $(\frac{1}{\pi_{i}} +
  \varepsilon)$-items, leaving at most $B$ bins open.  In total, we
  bound
  \begin{align*}
    ALG(I) & \geq m + \Bigl(\frac{m}{\pi_{1} - 1} - B \Bigr ) + 
    \Bigl(\frac{m}{\pi_{2} - 1} - B \Bigr ) + \cdots
    + \Bigl( \frac{m}{\pi_{K+1} - 1} - B \Bigr ) \\
    & = \Bigl( \sum_{i=1}^{K} \frac{1}{\pi_{i} - 1} \Bigr ) m - B(K - 1) 
    > (\gamma - \delta) m - B(K - 1),
  \end{align*}
  which proves the theorem.
\end{proof}

\section{The $k$-Cardinality Constrained Bin Packing Problem}
\label{sec:k-item}
We consider the \emph{$k$-cardinality constrained} bin packing
problem~\cite{DBLP:journals/siamdm/Epstein06}, which is a variant of
the classic bin packing problem with an additional constraint that
each bin can be assigned to at most $k$ items.  Throughout this
section, for $j = 1, 2, \ldots, k-1$, we call an item of size
$(\frac{1}{j+1}, \frac{1}{j}]$ a \emph{class~$j$} item, and we call an
  item of size $(0, \frac{1}{k}]$ a \emph{class~$k$} item.

An algorithm is called a \emph{$k$-cardinality constrained} algorithm
if for any item sequence, it outputs an assignment that assigns each
bin to at most $k$ items.  We denote by $OPT_{k}(I)$ the minimum
number of bins that are assigned to all items in $I$ while ensuring
that each bin is assigned to at most $k$ items.  We also denote by
$OPT_{k}$ an algorithm that for a given $I$, outputs such an
assignment that the number of non-empty bins is $OPT_{k}(I)$ and each
bin is assigned to at most $k$ items.

The definition of the asymptotic competitive ratio of an algorithm
$ALG$, denoted by $R_{ALG}$, is given by~(\ref{eq:k_acr}).  Similarly
to Section~\ref{sec:none}, bounds on the asymptotic competitive ratio
are obtained from the coefficients of $OPT_{k}(I)$ in the inequalities
in the statements of the theorems.
\subsection{A Max-Min 1-Bounded Space Algorithm}
\label{subsec:k-item_maxmin_unit_algorithm}
\begin{algorithm}
  \KwIn{An item sequence $I \in (0, 1]^{n}$}
  \KwOut{An assignment $f: \{1, 2, \ldots, n\} \to \mathbb{N}$}
  sort $I$ in a nonincreasing order and obtain $(a_{1}, a_{2}, \ldots, a_{n})$\;
  $h \leftarrow 1$, $t \leftarrow n$\;
  $l \leftarrow 1$, $S \leftarrow 0$\tcp*[r]{open a new bin}
  \While{$h \leq t$}{
    \For(\tcp*[f]{iteration for at most k items}){$i=1$ \KwTo $k$}{
      \lIf(\tcp*[f]{no items left}){$h > t$}{\Return{$f$}}
      \uElseIf(\tcp*[f]{first try head}){$S + a_{h} \leq 1$}{
        $f(h) \leftarrow l$, $S \leftarrow S + a_{h}$\;
        $h \leftarrow h + 1$\;}
      \uElseIf(\tcp*[f]{then try tail}){$S + a_{t} \leq 1$}{
        $f(t) \leftarrow l$, $S \leftarrow S + a_{t}$\;
        $t \leftarrow t - 1$\;}
      \lElse(\tcp*[f]{both head and tail are too large}){break}
    }
    $l \leftarrow l + 1$, $S \leftarrow 0$\tcp*[r]{
      close the bin and open a new bin}
  }
  \Return{$f$}\;
  \caption{$MM_{k}$.}
  \label{algo:mmk}
\end{algorithm}
In this section we propose a max-min 1-bounded space $k$-cardinality
constrained algorithm $MM_{k}$
(see Algorithm~\ref{algo:mmk} for the pseudocode),
which is simply a modified version of algorithm $MM$ in
Section~\ref{sec:none} that opens a new bin every time it has packed
$k$ items into a bin.  Similarly to algorithm $MM$, if algorithm
$MM_{k}$ assigns a bin to the head (respectively, tail) item in the
currently remaining item sequence, then in the resulting assignment we
call that item a \emph{head} (respectively, \emph{tail}) item.

Lemma~\ref{lem:mm_2_subset} states a significant property of the
assignment by $MM_{2}$, which is useful for the analysis not only for
$k = 2$ but also for $k \geq 3$.
\begin{lemma}
  Suppose that $I = (a_{1}, a_{2}, \ldots, a_{n}) \in (0, 1]^{n}$ is a
    sorted item sequence such that the assignment $f$ by $MM_{2}$ for
    $I$ has at least one bin with a sole class~1 item.  Let $p$ be the
    maximum index of such sole class~1 items in $f$. Let $U$ be the
    set of indices of items that are packed with items $1, 2, \ldots,
    p$ in $f$, which may be empty.  Then, for an arbitrary assignment
    without cardinality constraints $g$ for $I$, denoting by $W$ the
    set of indices of items that are packed with items $1, 2, \ldots,
    p$ in $g$, it holds that $W \subseteq U$.
  \label{lem:mm_2_subset}
\end{lemma}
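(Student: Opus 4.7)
The plan is to pinpoint a single index $s$ that controls both $U$ and $W$: the value of the tail pointer $t$ at the moment $MM_{2}$ is about to open the bin that ends up holding item $p$ alone. Since items $1,\ldots,p$ are all class~$1$ (each of size strictly greater than $1/2$), no two of them can share a bin in any assignment, so both $U$ and $W$ are disjoint from $\{1,\ldots,p\}$ by definition, and it suffices to examine which indices $j > p$ can belong to $W$.

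First I would upper-bound $W$. Inspecting Algorithm~\ref{algo:mmk} with $k=2$, the fact that $MM_{2}$ leaves item $p$ sole means that after $a_{p}$ is packed, the second iteration of the \emph{for} loop fails on both the head and the tail branches; in particular $a_{p} + a_{s} > 1$. By the sorted order, every index $j \leq s$ satisfies $a_{j} \geq a_{s} > 1 - a_{p} \geq 1 - a_{i}$ for each $i \leq p$. Consequently no index $j \leq s$ can lie in the same bin as any item of $\{1,\ldots,p\}$ in $g$, even inside a many-item bin, because the pairwise sum $a_{i}+a_{j}$ alone already exceeds $1$. Therefore $W \subseteq \{s+1,\ldots,n\}$.

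Next I would lower-bound $U$ by tracking the tail pointer $t$ as $MM_{2}$ processes items $1,\ldots,p-1$. Each such item is class~$1$, so in its bin the second iteration encounters a class~$1$ head (item $i+1$ with $i+1 \leq p$) that fails by size, and then attempts the tail. The pointer $t$ strictly decreases precisely when such a tail-pack succeeds, and in that event the item placed into the bin of item $i \in \{1,\ldots,p-1\}$ is exactly the tail item at that instant. Since $t$ descends from $n$ down to $s$ by the time item $p$ is reached, every index in $\{s+1,\ldots,n\}$ has been packed with some index in $\{1,\ldots,p-1\}$, giving $\{s+1,\ldots,n\} \subseteq U$. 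Chained with the previous step, this yields $W \subseteq U$.

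The main subtlety, though not a real obstacle, is the degenerate case $s = p$, where the head pointer has overtaken the tail pointer by the time item $p$ would look for a partner. Then the \emph{for} loop exits via the $h > t$ guard rather than via a size check, so the explicit inequality $a_{p}+a_{s}>1$ is never produced by the algorithm. The conclusion is nonetheless immediate here: all items past $p$ have already been absorbed as tail partners of items in $\{1,\ldots,p-1\}$, so $U = \{p+1,\ldots,n\}$, and $W \subseteq \{p+1,\ldots,n\}$ simply because $W$ is disjoint from $\{1,\ldots,p\}$. Beyond this bookkeeping, I do not anticipate any technical difficulty.
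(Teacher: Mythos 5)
Your proof is correct and follows essentially the same route as the paper's: the paper phrases the threshold by size (every item in $W$ has size at most $1-a_{p}$, and every item of size at most $1-a_{p}$ already lies in $U$ because nothing that small survives past the opening of item $p$'s bin), whereas you phrase it by the tail-pointer index $s$, but these delimit the same set and both arguments hinge on the same observation that item $p$'s soleness forces $a_{s} > 1 - a_{p}$. Your explicit handling of the degenerate case $s = p$ is bookkeeping the paper leaves implicit, but it introduces no new idea.
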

\begin{proof}
  Since $I$ is sorted, without loss of generality, assume that $I$
  does not change by sorting by $MM_{2}$.  Since
  items $1, 2, \ldots, p$ have size at least $a_{p}$,
  all items in $W$ are of size at most $1 -
  a_{p}$. Therefore, to prove the theorem, it suffices to show that
  $U$ is the set of all items in $I$ that are of size at most $1 -
  a_{p}$.

  Immediately after $MM_{2}$ has packed item $p$, there should be no
  items of size at most $1 - a_{p}$ in the remaining item sequence.
  (Otherwise, some of them would have been packed with item $p$, which
  contradicts the definition of $p$.)  On the other hand, $MM_{2}$
  assigns each of the bins earlier than the bin containing item $p$ to one
  of items $1, 2, \ldots, p-1$.  This means that if $I$ has items of
  size at most $1 - a_{p}$, such items all belong to $U$.
\end{proof}
\subsubsection{A Tight Bound for $k = 2$}
\label{subsubsec:2-item_maxmin_unit_upper}
Tinhofer showed that if $FFD$, which is a pre-sorted online unbounded
space algorithm, is run for the $2$-cardinality constrained bin
packing problem, it always outputs an assignment with the minimum
number of non-empty bins~\cite{Tinhofer1995BinpackingAM}.  That is,
the asymptotic approximation ratio is 1.  The following theorem
implies that if we are allowed to employ a max-min algorithm, then a
single open bin is enough to enjoy the optimal performance.
\begin{theorem}
  It holds that $MM_{2}(I) = OPT_{2}(I)$ for all item sequences $I$.
  \label{thm:tsuchiya}
\end{theorem}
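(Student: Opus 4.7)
The plan is to prove $MM_{2}(I) \leq OPT_{2}(I)$ by induction on $n = |I|$; the reverse inequality is automatic since $MM_{2}$ always outputs a valid 2-cardinality packing. I assume without loss of generality that $I$ is sorted and take the empty sequence as the trivial base case. In the inductive step I would distinguish two scenarios according to whether $MM_{2}$'s assignment on $I$ contains at least one bin consisting of a single class~1 item (an item of size greater than $\frac{1}{2}$).

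In the scenario where such a sole class~1 bin exists, I would invoke Lemma~\ref{lem:mm_2_subset} to obtain $p$ and $U$, using the stronger fact embedded in its proof that $U$ is \emph{exactly} the set of items of size at most $1 - a_{p}$. Let $R = I \setminus (\{1, \ldots, p\} \cup U)$, so every item in $R$ has index greater than $p$ and size strictly greater than $1 - a_{p}$. Every item in $\{1, \ldots, p\}$ has size at least $a_{p}$, so an item from $\{1, \ldots, p\}$ and an item from $R$ together exceed capacity $1$ and cannot share a bin in any valid packing. Consequently any packing of $I$ devotes $p$ bins to $\{1, \ldots, p\}$ that are disjoint from those containing $R$, which yields $OPT_{2}(I) \geq p + OPT_{2}(R)$. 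On the $MM_{2}$ side, after the $p$-th bin is closed the state of Algorithm~\ref{algo:mmk} is precisely that of $MM_{2}$ restarted on the sorted sequence $R$, so $MM_{2}(I) = p + MM_{2}(R)$. Since $|R| < n$, the inductive hypothesis supplies $MM_{2}(R) \leq OPT_{2}(R)$, closing this scenario.

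In the complementary scenario no class~1 item is sole, so each class~1 item (if any) is paired by $MM_{2}$ with some non-class-1 item. I would verify directly that once the only items remaining are of size at most $\frac{1}{2}$, the head-first rule of Algorithm~\ref{algo:mmk} always succeeds in packing two items into every bin, with a one-item bin occurring only as the very last bin when the total number of items is odd. Letting $q$ denote the number of class~1 items, this gives $MM_{2}(I) = q + \lceil (n-2q)/2 \rceil = \lceil n/2 \rceil$, matching the elementary lower bound $OPT_{2}(I) \geq \lceil n/2 \rceil$ imposed by the 2-cardinality constraint.

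The step I expect to be most delicate is the reduction $OPT_{2}(I) \geq p + OPT_{2}(R)$ in the first scenario: it hinges on the characterization of $U$ as \emph{all} items of size at most $1 - a_{p}$ (an intermediate conclusion of the proof of Lemma~\ref{lem:mm_2_subset}) rather than merely a superset of the partners used in any single assignment. Without this sharper fact, one could only compare $MM_{2}$ with assignments that happen to use the same partner set for $\{1, \ldots, p\}$, and the inductive decoupling of $R$ from the rest of the instance would collapse.
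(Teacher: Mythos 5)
Your proof is correct, and while it rests on the same two pillars as the paper's --- the case split on whether $MM_{2}$ leaves a sole class~1 item, and Lemma~\ref{lem:mm_2_subset} applied to the maximal index $p$ --- it is organized along a genuinely different route. The paper argues non-inductively, by exact counting: it computes $|f(I)| = p + \lceil (n - p - |U|)/2 \rceil$, bounds any competing assignment by $|g(I)| \geq p + \lceil (n - p - |W|)/2 \rceil$, and concludes from $W \subseteq U$. You instead induct on $n$, splitting the instance into $\{1,\ldots,p\} \cup U$ and the remainder $R$, and proving the two facts $MM_{2}(I) = p + MM_{2}(R)$ (the restart property of the algorithm after bin $p$ closes) and $OPT_{2}(I) \geq p + OPT_{2}(R)$ (the separation property that no item of $R$ can share a bin with any of items $1,\ldots,p$). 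Your decomposition buys a cleaner argument that avoids juggling the two ceiling expressions, at the cost of having to verify the restart property; the paper's counting is shorter but leans on reading off the full structure of $f$. Two remarks. First, the worry in your closing paragraph is unfounded: Lemma~\ref{lem:mm_2_subset} is already quantified over an \emph{arbitrary} assignment $g$, so if some packing put an item of $R$ together with one of items $1,\ldots,p$, that item would lie in $W \setminus U$, contradicting the lemma; the sharper characterization of $U$ as exactly the items of size at most $1 - a_{p}$ (which is indeed established inside the lemma's proof, and which you use) is convenient but not logically necessary. Second, your insistence on restricting attention to $2$-cardinality packings when proving $OPT_{2}(I) \geq p + OPT_{2}(R)$ is exactly right, and in fact tidier than the paper's text, which takes $g$ ``without cardinality constraints'' yet still counts the leftover items two per bin --- a lower bound that fails for unconstrained packings (three small items may share a bin, as in $I = (0.9, 0.3, 0.3, 0.3)$) and is valid only once $g$ obeys the cardinality constraint, which is all the theorem requires.
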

\begin{proof}
  Without loss of generality, let $I = (a_{1}, a_{2}, \ldots, a_{n})
  \in (0, 1]^{n}$ be a sorted item sequence and assume that $I$ does
  not change by sorting by algorithm $MM_{2}$.
  Let $f$ be the assignment by
  $MM_{2}$ for $I$.  If $f$ does not have a bin containing a sole
  class~1 item, all the bins in $f$, except the latest bin, must each
  contain two items, which means that $|f(I)|$ is minimum and
  therefore the proof is complete.
    
  From now and on, assume that $f$ has at least one bin with a sole
  class~1 item.  As in Lemma ~\ref{lem:mm_2_subset}, let $p$ be the
  maximum index of such sole class~1 items in $f$, and $U$ be the set of
  indices of items that are packed with items $1, 2, \ldots, p$ in $f$,
  which may be empty.  On the other hand, let $g$ be an arbitrary
  assignment without cardinality constraints for $I$,
  and $W$ be the set of indices of items that are
  packed with items $1, 2, \ldots, p$ in $g$. To prove the theorem, it
  suffices to show that $|g(I)| \geq |f(I)|$ holds.

  Let us evaluate $|f(I)|$. Since item $p$ is the sole class~1 item to
  which algorithm $MM_{2}$ lastly assigns a bin, $MM_{2}$ then
  continues to pack each class~1 item with another item, or a pair of
  two class~2 items into a bin.  That is to say, in $f$, items $1, 2,
  \ldots, p$ are sole or contained with another item, and other items form
  as many pairs as possible.  Hence, it holds that $|f(I)| = p +
  \lceil \frac{n - p - |U|}{2} \rceil$.

  On the other hand, $|g(I)|$ is bounded from below as follows. Items
  $1, 2, \ldots, p$ are all class~1 items and therefore at most one of
  them can fit in a bin.  In $g$, items in $W$ are packed with items
  $1, 2, \ldots, p$.  From these observations, we know that $|g(I)|$
  is minimum if all the items in $I$, other than items $1, 2, \ldots, p$
  and $W$, are packed two per bin.  We thus obtain $|g(I)| \geq p +
  \lceil \frac{n - p - |W|}{2} \rceil$.  Here, $W \subseteq U$ follows
  from Lemma~\ref{lem:mm_2_subset} and therefore $|W| \leq |U|$ holds.
  Hence, we get $p + \lceil \frac{n - p - |W|}{2} \rceil \geq p +
  \lceil \frac{n - p - |U|}{2} \rceil = |f(I)|$.
\end{proof}
\subsubsection{An Upper Bound for $k \geq 3$}
\label{subsubsec:3ormore-item_maxmin_unit_upper}
Recall the sequence $\{\pi_{i}\}_{i = 1}^{\infty}$ defined by $\pi_{1}
= 2$ and $\pi_{i+1} = \pi_{i}(\pi_{i} - 1) + 1$, and the sequence
$\{\lambda_{j}\}_{j = 1}^{\infty}$ defined by $\lambda_{j} =
\sum_{i=1}^{j} \max \{\frac{1}{\pi_{i}-1}, \frac{1}{j}\}$ for $j \geq
1$ in Section~\ref{subsec:ourcontribution}.
For example, the values of $\{\lambda_{j}\}_{j = 1}^{\infty}$
are: $\lambda_{1} = 1$, $\lambda_{2} = \frac{3}{2} = 1.5$,
$\lambda_{3} = \frac{11}{6} \approx 1.83$, $\lambda_{4} = 2$,
and $\lambda_{5} = \frac{21}{10} = 2.1$.

For the $k$-cardinality constrained bin packing problem with $k \geq
3$, it is not so difficult to obtain $R_{MM_{k}} \leq \lambda_{k}$ by
a standard analysis using weights.  In what follows, we conduct a more
precise analysis to reveal that $R_{MM_{k}} \leq \lambda_{k} -
\frac{1}{k}$, breaking the lower bound barrier of $\lambda_{k}$ for
pre-sorted online bounded space
algorithms~\cite{DBLP:journals/siamdm/Epstein06}.  For example, the
values of $\lambda_{k} - \frac{1}{k}$ are: $\frac{3}{2} = 1.5$ for $k
= 3$, $\frac{7}{4} = 1.75$ for $k = 4$, and $\frac{19}{10} = 1.9$ for
$k = 5$.

The main idea of our analysis is to attempt to run $MM_{2}$ as well for
the item sequence in order to extract a subsequence of items to which
an analysis using ``discounted'' weights can be applied.
Lemma~\ref{lem:mm_2_atsumi_9}, proven with the help of
Lemma~\ref{lem:mm_2_subset}, enables us to extract such a subsequence.
\begin{lemma}
  Let $I$ be a sorted item sequence, and $I_{1}$ be a subsequence of
  $I$ consisting of the sole class~1 items in the assignment $f$ by
  $MM_{2}$ for $I$.  Then, for an arbitrary assignment without
  cardinality constraints $g$ for $I$, there exists an assignment without cardinality constraints $g'$
  such that $|g(I)| = |g'(I)|$ and
  every item in $I_{1}$ is a sole item in $g'$.
  \label{lem:mm_2_atsumi_9}
\end{lemma}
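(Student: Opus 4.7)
My plan is to construct $g'$ explicitly by rearranging only the contents of the bins of $g$ that contain the class~1 items indexed by $\{1, \ldots, p\}$, where $p = i_r$ is the largest index in $I_1$, while leaving all other bins untouched. (If $I_1 = \emptyset$ I simply take $g' = g$.) Since items $1, 2, \ldots, p$ are all class~1 (their sizes exceed $1/2$), any feasible assignment must place them in $p$ distinct bins. In $f$, the cardinality constraint $k = 2$ forces each item of $\{1, \ldots, p\} \setminus I_1$ to be paired with exactly one companion, which yields a bijection $c \colon \{1, \ldots, p\} \setminus I_1 \to U$ with $a_j + a_{c(j)} \le 1$ for every $j$ in the domain. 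This bijection is the essential piece of structure I will exploit.

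Next, given $g$, let $W$ denote the set of companions of $\{1, \ldots, p\}$ in $g$, and let $B_1, \ldots, B_p$ be the $p$ bins of $g$ containing items $1, \ldots, p$; their combined contents are precisely $\{1, \ldots, p\} \cup W$. By Lemma~\ref{lem:mm_2_subset}, $W \subseteq U$. I would then define $g'$ to agree with $g$ on every bin outside $\{B_1, \ldots, B_p\}$, and to re-use the same $p$ bin slots for items $1, \ldots, p$ with the following new contents: each $i_s \in I_1$ is placed alone, while each $j \in \{1, \ldots, p\} \setminus I_1$ is placed together with $c(j)$ if $c(j) \in W$ and alone otherwise. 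Since $c$ is a bijection and $W \subseteq U$, every element of $W$ is of the form $c(j)$ for a unique $j \in \{1, \ldots, p\} \setminus I_1$, so each $w \in W$ lands in exactly one new bin.

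Verification then splits into four routine checks: the capacity constraint holds because any non-singleton new bin is of the form $\{j, c(j)\}$ with $a_j + a_{c(j)} \le 1$ inherited from $f$; every item of $I$ is placed exactly once, since items outside $\{1, \ldots, p\} \cup W$ are carried over untouched; the total number of non-empty bins is unchanged because we used $p$ bins for items $1, \ldots, p$ both in $g$ and in $g'$; and every $i_s \in I_1$ is sole by construction. The only non-routine input, and thus what I would call the crux, is ensuring that the redistribution map is well-defined, which is exactly the content of Lemma~\ref{lem:mm_2_subset}: without the inclusion $W \subseteq U$ there would be no guarantee that each companion present in $g$ has an ``$f$-partner'' to accompany it in $g'$.
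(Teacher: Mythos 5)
Your proof is correct and takes essentially the same approach as the paper: both invoke Lemma~\ref{lem:mm_2_subset} to obtain $W \subseteq U$ and then repack each companion item of $W$ into the bin of its $MM_{2}$-partner, leaving all other bins untouched, so that no new bins open and the items of $I_{1}$ become sole. Your explicit bijection $c$ and the reuse of the bin slots $B_{1}, \ldots, B_{p}$ is just a more carefully stated rendering (avoiding any ambiguity between the bin labels of $f$ and $g$) of the paper's definition $g'(i) := f(i)$ for $i \in W$ and $g'(i) := g(i)$ otherwise.
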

\begin{proof}
  If $I_{1}$ is empty or every item in $I_{1}$ is
  a sole item in $g$, then the proof is complete.  From now and on,
  assume that neither is the case.

  Let $W$ and $U$ be the sets of indices of items that are packed
  together with items $1, 2, \ldots, p$ in $g$ and $f$, respectively.
  By Lemma~\ref{lem:mm_2_subset}, $W \subseteq U$ holds.  Therefore,
  we can repack each item in $W$ into the bin that is assigned to it
  in $f$, without opening new bins.  More specifically, we define a
  new assignment $g'$ as follows: For each $i = 1, 2, \ldots, n$,
  $g'(i) := f(i)$ if $i \in W$ and $g'(i) := g(i)$ otherwise.  Here,
  we do not open any new bins, therefore $|g(I)| = |g'(I)|$.  Since to
  each item in $W$, $g'$ assigns a bin that does not contain items in
  $I_{1}$, every item $I_{1}$ is a sole item in $g'$.
\end{proof}
We employ the following weight function which assigns a weight to each
item according to the class.  The weight of $1$ minus $\frac{1}{k}$
for a class~1 item will contribute to the performance evaluation
smaller by $\frac{1}{k}$.
\begin{alignat}{2}
  w_{2}(x) & =
  \begin{cases}
    1 - \frac{1}{k}, & x \in (\frac{1}{2}, 1];\\
    \frac{1}{j}, & x \in (\frac{1}{j + 1}, \frac{1}{j}]
    \text{ for } j = 2, \ldots, k - 1;\\
    \frac{1}{k}, & x \in (0, \frac{1}{k}].
  \end{cases}
  \label{eq:mm_k_cr_weight}
\end{alignat}
Given $I = (a_{1}, a_{2}, \ldots, a_{n})$, define $w_{2}(I) =
\sum_{i=1}^{n} w_{2}(a_{i})$.  The next theorem bounds from above the
sum of the weights of items that fit in a single bin.  The proof is
given by modifying the proof of the upper bound part of Theorem~3
in~\cite{DBLP:journals/siamdm/Epstein06}. 
\begin{lemma}
  For any $k \geq 3$ and any $I = (a_{1}, a_{2}, \ldots, a_{n}) \in
  (0, 1]^{n}$ such that $n \leq k$ and $\sum_{i=1}^{n} a_{i} \leq 1$,
    it holds that $w_{2}(I) \leq \lambda_{k} - \frac{1}{k}$.
  \label{lem:mm_k_cr_opt_max_weight}
\end{lemma}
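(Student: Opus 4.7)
The weight function $w_{2}$ coincides with the Cardinality Constrained Harmonic weight used in the upper-bound part of Theorem~3 of~\cite{DBLP:journals/siamdm/Epstein06}, except that class~1 items are assigned weight $1 - \frac{1}{k}$ instead of $1$. The value $\lambda_{k} - \frac{1}{k}$ is precisely what Epstein's canonical extremal bin (one item of size just above $1/\pi_{l}$ at each harmonic level, padded with class-$k$ items) achieves under this modified weight, since only the single class~1 item has its weight reduced, exactly by $\frac{1}{k}$. The plan is therefore to mirror Epstein's weight-accounting argument for an arbitrary valid bin, using $w_{2}$ in place of her weight.

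Because any class~1 item has size greater than $\frac{1}{2}$, a bin contains at most one such item, which suggests a two-case split. In Case~A the bin contains exactly one class~1 item; here $w_{2}(I)$ equals Epstein's weight of $I$ minus $\frac{1}{k}$, so her bound of $\lambda_{k}$ immediately yields $w_{2}(I) \le \lambda_{k} - \frac{1}{k}$. In Case~B the bin contains no class~1 item and $w_{2}$ agrees with Epstein's weight on $I$, so the task reduces to tightening Epstein's bound by $\frac{1}{k}$ in this regime.

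For Case~B one mirrors Epstein's inductive accounting under the additional constraint that the largest item has class at least~$2$ and hence contributes weight at most $\frac{1}{2}$ rather than~$1$. In Epstein's canonical extremal configuration the first item is a class~1 item of weight~$1$; when it is absent, the freed size budget may be redistributed at later positions---for example as a second class~2 item or as a class~3 item in place of the class-$\min(\pi_{3} - 1, k)$ item---but the cap $\frac{1}{k}$ on class~$\ge k$ items limits how much weight can be recovered. Following the $\pi_{i}$-recurrence that defines $\lambda_{k}$ under this constraint, one shows that the maximum recoverable weight is at most $\frac{1}{2} - \frac{1}{k}$, leaving a net loss of at least $\frac{1}{k}$ compared to Epstein's extremal value.

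The hard part is clearly Case~B. In particular, the naive bound $\lambda_{k} - \frac{1}{2}$, obtained by simply substituting a class~2 item for the class~1 item at the first position, is already false for $k \ge 5$: for instance, when $k = 5$ the configuration consisting of two class~2 items, one class~3 item, and two class-$5$ items has $w_{2}$-weight $\frac{26}{15} > \frac{8}{5} = \lambda_{5} - \frac{1}{2}$. The correct $\frac{1}{k}$ improvement must therefore be extracted by carefully following the harmonic structure under the ``no class~1'' restriction, which is where the cap $\frac{1}{k}$ plays its decisive role.
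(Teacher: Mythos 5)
Your case split (exactly one class~1 item vs.\ none) is the same decomposition the paper uses, phrased via the first index $p$ at which the sequence deviates from the canonical sequence $b_{i} = \frac{1}{\pi_{i}} + \varepsilon$: the paper's case $p \geq 2$ is your Case~A, and $p = 1$ is your Case~B. Your Case~A is correct and in fact cleaner than the paper's treatment: since at most one item can exceed $\frac{1}{2}$, and $w_{2}$ differs from Epstein's weight $w_{3}$ only by $\frac{1}{k}$ on that single item, Lemma~\ref{lem:nextfit_k_cr_opt_max_weight} immediately gives $w_{2}(I) = w_{3}(I) - \frac{1}{k} \leq \lambda_{k} - \frac{1}{k}$, whereas the paper re-runs Epstein's efficiency argument from scratch in this case. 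Your observation that the naive bound $\lambda_{k} - \frac{1}{2}$ fails for $k \geq 5$ is also correct (your $k=5$ configuration has $w_{2}$-weight $\frac{26}{15} > \frac{8}{5}$), and it correctly signals where the difficulty lies.

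However, Case~B --- which you yourself identify as the hard part --- is asserted, not proven. The sentence ``following the $\pi_{i}$-recurrence \ldots one shows that the maximum recoverable weight is at most $\frac{1}{2} - \frac{1}{k}$'' is a restatement of the claim to be established in that case, with no argument behind it; ``mirror Epstein's inductive accounting under the no-class-1 constraint'' does not specify what replaces her extremal structure, and your own counterexample shows the extremal structure genuinely changes (the optimizer can take \emph{two} class~2 items, a class~3 item, etc.). The paper's proof of this case needs real work: it introduces $q$, the first index of weight $\frac{1}{k}$; for $q = 3$ it uses $w_{2}(a_{2}) \leq \frac{1}{2}$, and for $q \geq 4$ it proves the efficiency bound $\sum_{i=2}^{q-1} w_{2}(a_{i}) < \frac{u}{u-1}\bigl(1 - \frac{1}{u}\bigr) = 1$, where $a_{1} \in (\frac{1}{u}, \frac{1}{u-1}]$; and it then splits into $3 \leq k \leq 6$ versus $k \geq 7$, because $\lambda_{k} - \frac{1}{k}$ has a different closed form in the two regimes ($\frac{5}{2} - \frac{3}{k}$ when $3 \leq k \leq 6$, but involving $\frac{1}{\pi_{3}-1} = \frac{1}{6}$ and beyond when $k \geq 7$), and the slack against the bound must be verified separately in each. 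None of this sub-case analysis, nor any substitute for it, appears in your proposal, so the lemma is not yet proved.
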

\begin{proof}
  We begin by giving an item sequence whose sum of the weights
  by $w_{2}$ is exactly $\lambda_{k} - \frac{1}{k}$: Such an item sequence 
  is obtained by taking the size $b_{i} = \frac{1}{\pi_{i}} + \varepsilon$
  for $i = 1, 2, \ldots, k$, where $\varepsilon$ is a constant such that
  $0 < \varepsilon \leq \frac{1}{k(\pi_{k+1}-1)}$.  By 
  (\ref{eq:pi_sum}), we have $\sum_{i=1}^{k} b_{i} \leq \sum_{i=1}^{k}
  \frac{1}{\pi_{k}} + k \cdot \frac{1}{k(\pi_{k+1}-1)} = 1$, 
  which implies that these items can fit in a bin.
  Since $b_{i} \in (\frac{1}{\pi_{i}}, \frac{1}{\pi_{i} - 1}]$, 
    we calculate $\sum_{i=1}^{k} w_{2}(b_{i}) = 1 - \frac{1}{k} +
    \sum_{i=2}^{k} \max \{\frac{1}{\pi_{i}-1}, \frac{1}{k}\} =
    \lambda_{k} - \frac{1}{k}$.  We will use this $(b_{1}, b_{2},
    \ldots, b_{k})$ throughout the proof.

  Every item sequence that maximizes the sum of the weights consists
  of exactly $k$ items, since any item sequence of less than $k$ items can
  increase its weight by replacing each item with a slightly smaller
  one without changing the weight, and
  adding an infinitesimal item of weight $\frac{1}{k}$.

  Now, we are going to prove that an arbitrarily fixed $I = (a_{1},
  a_{2}, \ldots, a_{k})$ such that $a_{1} \geq a_{2} \geq \cdots \geq
  a_{k}$ satisfies $w_{2}(I) \leq \sum_{i=1}^{k} w_{2}(b_{i}) (=
  \lambda_{k} - \frac{1}{k})$.  If $a_{i} \in (\frac{1}{\pi_{i}},
  \frac{1}{\pi_{i} - 1}]$ for $i$ such that $\pi_{i} \leq k$, and
    $a_{i} \in (0, \frac{1}{k}]$ for $i$ such that $\pi_{i} > k$, then
      $w_{2}(I) = \lambda_{k} - \frac{1}{k}$, and the proof is
      complete. Otherwise, let $p$ be the first index that does not
      satisfy the above. If $w_{2}(a_{p}) = \frac{1}{k}$, then
      $w_{2}(I) = \sum_{i=1}^{p-1} w_{2}(a_{i}) + \sum_{i=p}^{k}
      w_{2}(a_{i}) = \sum_{i=1}^{p-1} w_{2}(b_{i}) + \sum_{i=p}^{k}
      \frac{1}{k} \leq \sum_{i=1}^{k} w_{2}(b_{i})$, and the proof is
      complete in this case as well.

  In the remainder of the proof, assume that $w_{2}(a_{p}) >
  \frac{1}{k}$. Let $q$ be the smallest index for which $w_{2}(a_{q})
  = \frac{1}{k}$. By $a_{1} \geq a_{2} \geq \cdots \geq a_{k}$, it
  should hold that $a_{k} \leq \frac{1}{k}$, thus such an index must
  exist.

  (I)~The case where $p \geq 2$.  The proof is done as the same as the
  upper bound part of the proof of Theorem~3
  in~\cite{DBLP:journals/siamdm/Epstein06}.  Since $\{a_{i}\}_{j =
    1}^{k}$ is non-increasing and $a_{p} \notin (\frac{1}{\pi_{p}},
  \frac{1}{\pi_{p} - 1}]$, $a_{p} \leq \frac{1}{\pi_{p}}$ and
    $w_{2}(a_{p}) \leq \frac{1}{\pi_{p}} < \frac{1}{\pi_{p} - 1} =
    w_{2}(b_{p})$ hold.

  If $q = p + 1$, then we get $w_{2}(I) = \sum_{i=1}^{p-1}
  w_{2}(b_{i}) + w_{2}(a_{p}) + \sum_{i=q}^{k} \frac{1}{k} <
  \sum_{i=1}^{p-1} w_{2}(b_{i}) + w_{2}(b_{p}) + \sum_{i=q}^{k}
  w_{2}(b_{i}) = \sum_{i=1}^{k} w_{2}(b_{i})$.

 For the case of $q \geq p + 2$, let us focus on the ``efficiency'' of
 weights.  From (\ref{eq:mm_k_cr_weight}), we know that if $x \in
 (0, \frac{1}{\pi_{p}}]$ then $\frac{w_{2}(x)}{x} < \frac{\pi_{p} +
     1}{\pi_{p}}$.  In addition, we evaluate the total size of items
   from $p$ to $k$: $\sum_{i=p}^{k} a_{i} \leq 1 - \sum_{i=1}^{p-1}
   a_{i} < 1 - \sum_{i=1}^{p-1} \frac{1}{\pi_{i}} = \frac{1}{\pi_{p} -
     1}$, where we used $a_{i} \in (\frac{1}{\pi_{i}},
   \frac{1}{\pi_{i} - 1}]$ for $i \leq p - 1$, as well
     as~(\ref{eq:pi_sum}).  Therefore, we get $\sum_{i=p}^{q-1}
     w_{2}(a_{i}) = \sum_{i=p}^{q-1} \frac{w_{2}(a_{i})}{a_{i}} \cdot
     a_{i} < \frac{\pi_{p} + 1}{\pi_{p}} \cdot \frac{1}{\pi_{p} - 1} =
     \frac{\pi_{p} + 1}{\pi_{p}(\pi_{p} - 1)}$.  On the other hand, we
     have $w_{2}(b_{p}) + w_{2}(b_{p+1}) = \frac{1}{\pi_{p} - 1} +
     \frac{1}{\pi_{p+1} - 1} = \frac{1}{\pi_{p} - 1} +
     \frac{1}{\pi_{p}(\pi_{p} - 1)} = \frac{\pi_{p} +
       1}{\pi_{p}(\pi_{p} - 1)}$.  In total, $w_{2}(I) =
     \sum_{i=1}^{p-1} w_{2}(a_{i}) + \sum_{i=p}^{q-1} w_{2}(a_{i}) +
     \sum_{i=q}^{k} w_{2}(a_{i}) \leq \sum_{i=1}^{p-1} w_{2}(b_{i}) +
     (w_{2}(b_{p}) + w_{2}(b_{p+1})) + \sum_{i=q}^{k} \frac{1}{k} \leq
     \sum_{i=1}^{k} w_{2}(b_{i})$.

  (II)~The case where $p = 1$. $w_{2}(a_{1}) > \frac{1}{k}$ and $a_{1}
     \notin (\frac{1}{2}, 1]$ imply that $\frac{1}{k} < a_{1} \leq
       \frac{1}{2}$.  Together with $k \geq 3$, we evaluate
       $w_{2}(a_{1}) \leq \frac{1}{2} < 1 - \frac{1}{k} =
       w_{2}(b_{1})$.

  If $q = 2$, then the proof is complete since $w_{2}(I) =
  w_{2}(a_{1}) + \sum_{i=2}^{k} \frac{1}{k} < w_{2}(b_{1}) +
  \sum_{i=2}^{k} w_{2}(b_{i}) = \sum_{i=1}^{k} w_{2}(b_{i})$.

  For the case of $q \geq 3$, we are going to evaluate $w_{2}(I)$
  differently depending on the value of $k$.  In preparation for this,
  we bound $\sum_{i=2}^{q-1} w_{2}(a_{i})$ from above.  If $q = 3$,
  \begin{equation}
    \sum_{i=2}^{q-1} w_{2}(a_{i}) = w_{2}(a_{2}) \leq \frac{1}{2}
    \label{eq:case_2_efficiency_1}
  \end{equation}
  is easily obtained since $\{a_{i}\}_{j = 1}^{k}$ is non-increasing.
  For the case of $q \geq 4$, let us focus on the ``efficiency'' of
  weights.  Let $u$ be the index such that $a_{1} \in
  (\frac{1}{u}, \frac{1}{u - 1}]$.  (\ref{eq:mm_k_cr_weight}) says
    that if $x \in (0, \frac{1}{u - 1}]$, then $\frac{w_{2}(x)}{x} <
      \frac{u}{u - 1}$.  In addition, we evaluate the total size of
      items from $2$ to $k$: $\sum_{i=2}^{k} a_{i} \leq 1 - a_{1} < 1
      - \frac{1}{u}$.  Therefore,
      \begin{equation}
        \sum_{i=2}^{q-1} w_{2}(a_{i}) =
        \sum_{i=2}^{q-1} \frac{w_{2}(a_{i})}{a_{i}} \cdot a_{i} <
        \frac{u}{u - 1} \Bigl( 1 - \frac{1}{u} \Bigr ) = 1
        \label{eq:case_2_efficiency_2}
      \end{equation}
      Now we are ready.

  (II-a)~The case where $3 \leq k \leq 6$.  Noting that
      $\frac{1}{\pi_{i} - 1} \leq \frac{1}{6} \leq \frac{1}{k}$ for $i
      \geq 3$, we have $\sum_{i=1}^{k} w_{2}(b_{i}) = \left(1 -
      \frac{1}{k}\right) + \frac{1}{2} + (k - 2) \cdot \frac{1}{k} =
      \frac{5}{2} - \frac{3}{k}$.  If $q = 3$, we evaluate $ w_{2}(I)
      =w_{2}(a_{1}) + w_{2}(a_{2}) + (k - 2) \cdot \frac{1}{k} \leq
      \frac{1}{2} + \frac{1}{2} + (k - 2) \cdot \frac{1}{k} =2 -
      \frac{2}{k} =(\frac{5}{2} - \frac{3}{k}) - (\frac{1}{2} -
      \frac{1}{k}) < \sum_{i=1}^{k} w_{2}(b_{i}) $ by
      applying~(\ref{eq:case_2_efficiency_1}).  If $q \geq 4$, we
      obtain $ w_{2}(I) = w_{2}(a_{1}) + \sum_{i=2}^{q-1} w_{2}(a_{i})
      + \sum_{i=q}^{k} w_{2}(a_{i}) < \frac{1}{2} + 1 + (k - q + 1)
      \cdot \frac{1}{k} = \frac{5}{2} - \frac{q - 1}{k} = (\frac{5}{2}
      - \frac{3}{k}) - \frac{q - 4}{k} \leq \sum_{i=1}^{k}
      w_{2}(b_{i}) $ from~(\ref{eq:case_2_efficiency_2}).

  (II-b) The case where $k \geq 7$.  If $q = 3$, we obtain $ w_{2}(I)
      = w_{2}(a_{1}) + w_{2}(a_{2}) + \sum_{i=3}^{k} w_{2}(a_{i}) \leq
      \frac{1}{2} + \frac{1}{2} + \sum_{i=3}^{k} \frac{1}{k} < (1 -
      \frac{1}{k}) + \frac{1}{2} + \sum_{i=3}^{k} \frac{1}{k} =
      w_{2}(b_{1}) + w_{2}(b_{2}) + \sum_{i=3}^{k} \frac{1}{k} <
      \sum_{i=1}^{k} w_{2}(b_{i}) $ by~(\ref{eq:case_2_efficiency_1}).
      For $q \geq 4$, noting that $w_{2}(b_{1}) + w_{2}(b_{2}) +
      w_{2}(b_{3}) = \frac{1}{\pi_{1} - 1} + \frac{1}{\pi_{2} - 1} +
      \frac{1}{\pi_{3} - 1} = \left(1 - \frac{1}{k}\right) +
      \frac{1}{2} + \frac{1}{6} = \frac{5}{3} - \frac{1}{k} >
      \frac{3}{2}$, we evaluate $ w_{2}(I) = w_{2}(a_{1}) +
      \sum_{i=2}^{q-1} w_{2}(a_{i}) + \sum_{i=q}^{k} w_{2}(a_{i}) <
      \frac{1}{2} + 1 + \sum_{i=q}^{k} \frac{1}{k} < w_{2}(b_{1}) +
      w_{2}(b_{2}) + w_{2}(b_{3}) + \sum_{i=q}^{k} \frac{1}{k} \leq
      \sum_{i=1}^{k} w_{2}(b_{i}) $ by
      applying~(\ref{eq:case_2_efficiency_2}).
\end{proof}
We are now ready to claim that $R_{MM_{k}} \leq \lambda_{k} -
\frac{1}{k}$.
\begin{theorem}
  For any $k \geq 3$, it holds that $MM_{k}(I) \leq \left(\lambda_{k}
  - \frac{1}{k}\right) \cdot OPT_{k}(I) + k$ for all item sequences
  $I$.
  \label{thm:mm_k_cr}
\end{theorem}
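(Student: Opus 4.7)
The plan is a two-sided weight analysis with weight function $w_2$ from \eqref{eq:mm_k_cr_weight}, together with a structural argument that invokes $MM_2$ as an analytical device through Lemma~\ref{lem:mm_2_atsumi_9}.

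The upper side of the weight inequality is immediate. Every bin of an optimal $k$-cardinality constrained assignment holds at most $k$ items summing to at most $1$, so Lemma~\ref{lem:mm_k_cr_opt_max_weight} applied bin by bin and summed yields $w_2(I) \leq (\lambda_{k} - \tfrac{1}{k}) \cdot OPT_{k}(I)$. The lower side is the work: I aim to show $w_2(I) \geq MM_{k}(I) - k$. I would classify each bin in $MM_k$'s output: (i) bins with exactly $k$ items contribute weight at least $1$, since a non-class-$1$ bin gets $k \cdot \tfrac{1}{k}$ while a bin with a class-$1$ item gets $(1-\tfrac1k)+(k-1)\cdot\tfrac1k = 2-\tfrac2k\geq 1$; (ii) non-full bins that closed because neither head nor tail fits, but that contain no sole class-$1$ item, contribute weight at least $1$ by a short case analysis on the closing condition $S+a_t>1$ and the item classes compatible with this inequality; (iii) the last bin is absorbed into the additive slack; (iv) sole class-$1$ bins contribute only $1-\tfrac{1}{k}$, giving a deficit of $\tfrac{1}{k}$ each.

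To handle the deficit from the sole class-$1$ bins of $MM_k$, I plan an amortized bookkeeping that uses the simulated run of $MM_2$. Let $I_1$ denote the sole class-$1$ items in $MM_2$'s assignment. The key monotonicity is that at every moment $MM_k$ begins processing a class-$1$ item $a_i$, its tail pointer lies no further to the right than that of $MM_2$ (because within any single class-$1$ bin, $MM_k$ packs at least as many tail items as $MM_2$ does). Consequently each $a_i \in I_1$ is also sole in $MM_k$, and any additional sole class-$1$ bin that $MM_k$ produces beyond $I_1$ arises only because $MM_k$ packed two or more tails into some earlier class-$1$ bin; each such extra tail contributes a surplus of $\tfrac{1}{k}$ to that earlier bin's weight, which is exactly what is needed to cover the corresponding downstream deficit. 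For the deficit associated with $I_1$ itself, I invoke Lemma~\ref{lem:mm_2_atsumi_9} on $g = OPT_k$ (viewed as an unconstrained assignment) to obtain $g'$ with $|g'(I)| = OPT_k(I)$ in which every item of $I_1$ is sole, sharpening the upper bound to $w_2(I) \leq |I_1|(1-\tfrac1k) + (OPT_k(I)-|I_1|)(\lambda_k - \tfrac1k)$; the $\tfrac{1}{k}$ deficit per $I_1$-bin on the $MM_k$ side is then matched by the $\lambda_k - 1$ reduction per $I_1$-bin on the $OPT_k$ side, which dominates it for all $k \geq 2$.

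The main obstacle is the amortization in the previous paragraph: making precise, uniformly over all sorted inputs, that every sole class-$1$ bin of $MM_k$ outside of $I_1$ can be charged to a distinct unit of surplus weight $\tfrac{1}{k}$ sitting in some earlier class-$1$ bin of $MM_k$. This requires correlating the behavior of $MM_k$ and $MM_2$ step by step on the shared prefix of heads, tracking the divergence in tail pointers, and verifying that each new sole class-$1$ bin of $MM_k$ can be traced back to a strictly earlier class-$1$ bin in $MM_k$ in which at least one extra tail item was absorbed. Once this charging scheme is established, the two weight bounds combine to yield $MM_k(I) \leq (\lambda_k - \tfrac{1}{k}) \cdot OPT_k(I) + k$, the additive $k$ absorbing the last bin and any boundary corrections in the charging.
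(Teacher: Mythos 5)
Your overall architecture is recognizably close to the paper's: a two-sided analysis with $w_{2}$, a simulated run of $MM_{2}$ to isolate the problematic class-1 items, and Lemma~\ref{lem:mm_2_atsumi_9} applied to the optimal assignment so that the $I_{1}$-deficit on the $MM_{k}$ side is matched by a discount on the $OPT_{k}$ side (your final algebra does close, since $\lambda_{k} - \frac{1}{k} \geq 1$ for $k \geq 3$). However, there are two genuine gaps. The first is that your claim (ii) is false: a non-full bin closed by $MM_{k}$ containing no class-1 item at all can have weight strictly below $1$. Take $k = 5$ and $I = (0.5,\, 0.33,\, 0.33,\, 0.33,\, 0.2,\, 0.2, \ldots)$: the first bin receives the heads $0.5$ (class 2) and $0.33$ (class 3), then closes because $0.83 + 0.33 > 1$ and $0.83 + 0.2 > 1$, so its weight is $\frac{1}{2} + \frac{1}{3} = \frac{5}{6} < 1$ with only two of five allowed items. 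Such deficient bins arise whenever the heads in a bin straddle a class boundary, and there can be up to $k - 1$ of them. The repair is exactly the paper's counting: except for at most $k-1$ bins whose heads span more than one class, every closed non-last bin whose heads are all of class $j \geq 2$ contains exactly $j$ of them (head weight exactly $1$), and only class-1-headed bins need the sole/non-sole distinction. But note that this repair consumes the entire additive budget, $1 + (k-1) = k$, so your slack accounting ("the last bin and any boundary corrections in the charging") must be redone; the class-1 deficits really do have to be handled entirely off-budget, as you intend.

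The second gap is the amortization itself, which you correctly identify as the crux but do not carry out, and whose stated justification is wrong: it is \emph{not} true that "within any single class-1 bin, $MM_{k}$ packs at least as many tail items as $MM_{2}$ does." When $MM_{k}$'s tail pointer has advanced further than $MM_{2}$'s, its current tail item is \emph{larger} and may fail to fit where $MM_{2}$'s fits --- this is precisely the mechanism that creates sole class-1 bins of $MM_{k}$ outside $I_{1}$, so the per-bin statement contradicts the very phenomenon you are charging for. What is true is the cumulative invariant (at any common point of the head sequence, $MM_{k}$ has consumed at least as many tails as $MM_{2}$), from which one gets both that sole-in-$MM_{2}$ implies sole-in-$MM_{k}$ and that the nonnegative potential $D$ (tails consumed by $MM_{k}$ minus tails consumed by $MM_{2}$) drops by one at each extra sole bin and rises by $j-1$ at each class-1 bin where $MM_{k}$ packs $j \geq 2$ tails; that would make your $\frac{1}{k}$-per-unit charging sound. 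Even then, you must handle the fact that the two runs' bin boundaries differ --- a bin with a class-1 head can also absorb later heads (e.g.\ $0.51 + 0.49$), so the runs cannot be correlated bin-by-bin, only through the shared head order. The paper sidesteps this entire difficulty: it proves sole-in-$f$ implies sole-in-$g$ once, decomposes $MM_{k}(I) = |I_{1}| + MM_{k}(I_{2})$ and $OPT_{k}(I) = OPT_{k}(I_{1}) + OPT_{k}(I_{2})$, and inside $I_{2}$ removes the remaining sole class-1 bins by an explicit finite repacking ($g_{2}'$: move each such item's $MM_{2}$-partner into its bin), so no step-aligned simulation or potential argument is ever needed. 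Finally, a subtlety you share with the paper's part (III): Lemma~\ref{lem:mm_2_atsumi_9} yields an assignment without cardinality constraints, so before applying Lemma~\ref{lem:mm_k_cr_opt_max_weight} to the non-$I_{1}$ bins of $g'$ you must check that the repacking preserves $k$-feasibility (it does: modified bins end with at most two items, and all other bins only lose items).
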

\begin{proof}
  Without loss of generality, let $I = (a_{1}, a_{2}, \ldots, a_{n})
  \in (0, 1]^{n}$ be a sorted item sequence and assume that $I$ does
    not change by sorting by algorithm $MM_{k}$.  Let $f$ and $g$ be
    the assignment for $I$ by $MM_{2}$ and $MM_{k}$, respectively.
    Let $I_{1}$ be the subsequence of $I$ that consists of all sole
    class~1 items in $f$.  ($I_{1}$ may be empty, but this does not
    affect the proof.) Let $I_{2}$ be the item subsequence obtained by
    removing $I_{1}$ from $I$.
    
    (I)~We begin by claiming that $MM_{k}(I) = MM_{k}(I_{1}) +
    MM_{k}(I_{2})$.  It is seen from the pseudocode that in both $f$
    and $g$, all class~1 items are head items and, starting from bin
    1, they are packed in a non-increasing order of size into bins.
    Since algorithm $MM_{k}$ has a looser cardinality constraint than
    $MM_{2}$, an item which is packed with a class~1 item in $g$ is
    packed in the same bin in $f$ or in an earlier bin than that bin.
    Thus, if $MM_{k}$ fails to pack an item with a class~1 item, then
    $MM_{2}$ also fails to pack any item with the same class~1 item.
    That is, each sole class~1 item in $f$ is a sole item in $g$ as
    well.  This observation further implies that the assignment
    by $MM_{2}$ for $I_{2}$ is identical, except for
    the difference in bin indices, to the assignment obtained by
    removing all the bins that contain a sole item from $g$.  As a result,
    we conclude that
    \begin{align*}
      MM_{k}(I) = MM_{k}(I_{1}) + MM_{k}(I_{2}).
    \end{align*}

    (II)~We next evaluate $MM_{k}(I)$. Since $I_{1}$ consists of only
    class~1 items, it is easy to obtain that $MM_{k}(I_{1}) =
    OPT_{k}(I_{1}) = |I_{1}|$.  Then, let us consider $MM_{k}(I_{2})$.
    Let $f_{2}$ and $g_{2}$ be the assignment for $I_{2}$ by $MM_{2}$
    and $MM_{k}$, respectively.  Note that there may be sole class~1
    item in $g_{2}$.  We will now construct an assignment $g_{2}'$
    based on assignment $g_{2}$, which satisfies $g_{2}'(I_{2}) =
    g_{2}(I_{2})$ but has no sole class~1 items.

    First, set the assignment $g_{2}'$ to be exactly the same as the
    assignment $g_{2}$.  Then, repeat the following modification
    procedure while the assignment $g_{2}'$ has a sole class~1 item:
    Find the maximum index $z$ among sole class~1 items in $g_{2}'$.
    Formally, $z = \max \{i | a_{i} \in I_{2},
    |{g_{2}'}^{-1}(g_{2}'(i))| = 1, a_{i} \in (\frac{1}{2}, 1]\}$.
      Then, find the index $y$ such that $f_{2}(y) = f_{2}(z)$ and $y
      \neq z$.  That is, item $y$ is packed together with item $z$ in
      $f_{2}$.  This $y$ must uniquely exist, since there is no sole
      class~1 item in $f_{2}$ by the definition of $I_{2}$ and
      $MM_{2}$ is a $2$-cardinality constrained algorithm.  Given such
      $y$, modify $g_{2}'(y) := f_{2}(y)$. Now item $z$ is no longer a
      sole item in $g_{2}'$ (see Figure~\ref{fig:mm_k_cr}).
      \begin{figure}[t]
        \centering
        \includegraphics[width=0.8\linewidth]{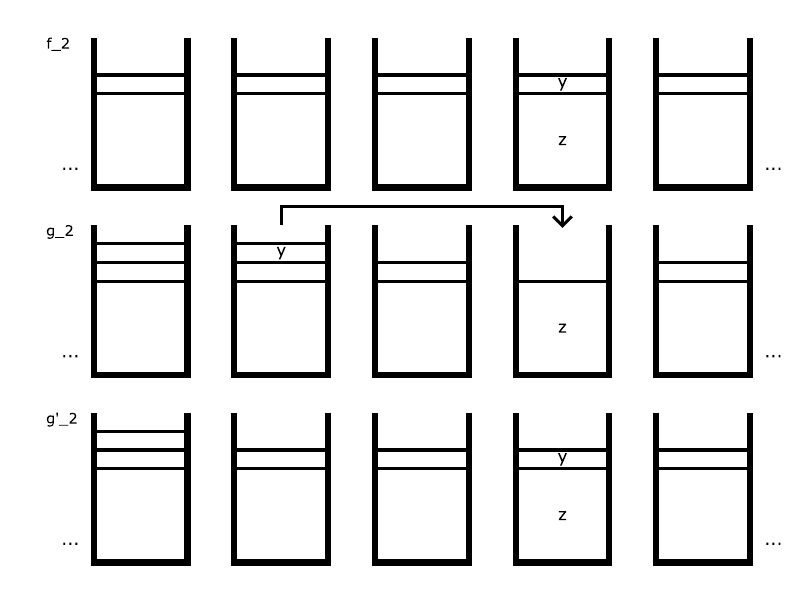}
        \caption{Construction of $g_{2}'$ from $g_{2}$ and $f_{2}$.}
        \label{fig:mm_k_cr}
      \end{figure}
      We below confirm that this procedure does not get stuck.
      Immediately before each modification of $g_{2}'$,
      the bin assigned to item $y$ is earlier than that assigned to item $z$.
      If this were not the case, $MM_{k}$ would have
      packed an item of size at most $a_{y}$ from the tail of the
      remaining item sequence together with item $z$.
      After a modification of $g_{2}'$,
      a class~1 item may become a sole item.
      However, the above observation implies that
      the index of such a new sole class~1 item is smaller than $z$.
      Therefore, this procedure terminates after a finite number of iterations.

      Consider all the bins in $g_{2}'$ other than the latest bin and
      the sum of the weights of the \emph{head} items packed in each
      of these bins.  Observe that algorithm $MM_{k}$ greedily packs
      the head of the currently remaining item sequence into these bins.
      Since there are $k$ classes in total, at most $k - 1$ of these
      bins contain head items of different classes, while the rest
      contain head items of a single class. Furthermore, each of the
      latter bins contains exactly $j$ of class~$j$ items for some $1
      \leq j \leq k$.  If one of the latter bins contains only class~1
      head items, then the sum of the weights of the head items in
      that bin is $(1 - \frac{1}{k}) + \frac{1}{k} = 1$,
      since the assignment $g_{2}'$ was constructed so that no bins
      are assigned only to a single class~1 item.  If one of the
      latter bins contains only class~$j$ head items for $2 \leq j
      \leq k$, then the weights of the head items in that bin is
      $j \cdot \frac{1}{j} = 1$. Consequently, we derive
      \begin{align*}
        w_{2}(I_{2}) \geq MM_{k}(I_{2}) - 1 - (k - 1)
        = MM_{k}(I_{2}) - k.
      \end{align*}
      On the other hand, by Lemma~\ref{lem:mm_k_cr_opt_max_weight},
      the sum of the weights of items that can fit in a bin is
      $\lambda_{k} - \frac{1}{k}$.  Hence, it follows that
      \begin{align*}
        MM_{k}(I_{2}) - k \leq
        w_{2}(I_{2}) \leq \Bigl(\lambda_{k} - \frac{1}{k}\Bigr )
        \cdot OPT_{k}(I_{2}).
      \end{align*}

      (III)~Lemma~\ref{lem:mm_2_atsumi_9} implies that among
      assignments for $I$ with a minimum number of non-empty bins, there exists
      an assignment in which every item in $I_{1}$ is a sole item.
      Therefore, $ OPT_{k}(I) = OPT_{k}(I_{1}) + OPT_{k}(I_{2})$
      holds.  Finally, we obtain
      \begin{align*}
        MM_{k}(I) & = MM_{k}(I_{1}) + MM_{k}(I_{2}) \nonumber \\
        & \leq OPT_{k}(I_{1}) + \Bigl(\lambda_{k} - \frac{1}{k}\Bigr )
        \cdot OPT_{k}(I_{2}) + k
        \\
        & \leq \Bigl(\lambda_{k} - \frac{1}{k}\Bigr ) \cdot
        (OPT_{k}(I_{1}) + OPT_{k}(I_{2})) + k
        \\
        & \leq \Bigl(\lambda_{k} - \frac{1}{k}\Bigr ) \cdot OPT_{k}(I) + k.
      \end{align*}
\end{proof}
\subsection{A Lower Bound for Max-Min 1-Bounded Space Algorithms}
\label{subsec:k-item_maxmin_unit_lower}
\subsubsection{A Lower Bound for 1-Bounded Space Algorithms}
\label{subsubsec:k-item_maxmin_1-bounded_lower}
Theorem~\ref{thm:3-item_maxmin_unit_lower} below states that $R_{ALG}
\geq \frac{4}{3} (\approx 1.33)$ holds for any max-min 1-bounded space
3-cardinality constrained algorithm $ALG$.
The proof of Theorem~\ref{thm:3-item_maxmin_unit_lower} is done in
almost the same way as the proof of
Theorem~\ref{thm:none_maxmin_unit_lower}, except for the cardinality
constraint.
\begin{theorem}
  For any max-min 1-bounded space 3-cardinality constrained algorithm
  $ALG$, and any positive integer $m$, there exists a sorted item
  sequence $I$ such that $OPT_{3}(I) = m$ and
  $ALG(I) \geq \frac{4}{3} \cdot OPT_{3}(I) - \frac{4}{3}$.
  \label{thm:3-item_maxmin_unit_lower}
\end{theorem}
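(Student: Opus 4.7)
My plan is to adapt the construction from the proof of Theorem~\ref{thm:none_maxmin_unit_lower} essentially verbatim to the $3$-cardinality setting. I would take the same sorted item sequence
\[
I = (a_{1}, \ldots, a_{m}, c, \ldots, c, b_{m}, \ldots, b_{1})
\]
with $a_{i} = \frac{1}{2} + \delta r^{i-1}$, $c = \frac{1}{4} + \varepsilon$, $b_{i} = \frac{1}{4} - \varepsilon - \delta r^{i-1}$, under the same choice of $r, \delta, \varepsilon$. Because $a_{i} + c + b_{i} = 1$ places exactly three items per bin, this identity is tight in both total size and cardinality, giving $OPT_{3}(I) = m$. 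The overflow property $a_{i} + b_{i+1} + b_{i+2} > 1$ for $i = 1, \ldots, m - 2$ is inherited unchanged.

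Next I would re-establish $D_{i} \leq i + 1$ for $i = 1, \ldots, m - 1$, where $D_{i}$ counts the $b$-items that $ALG$ packs together with $a_{1}, \ldots, a_{i}$. Under cardinality $3$, a bin containing an $a$-item (size $> \frac{1}{2}$) holds at most two further items, so at most two $b$-items can accompany any single $a$-item; this yields both the base case $D_{1} \leq 2$ and the easy subcase $D_{j-1} \leq j - 1$ of the induction. For the tight subcase $D_{j-1} = j$, since $ALG$ draws $b$-items from the tail in order $b_{1}, b_{2}, \ldots$, the next $b$-index available when $a_{j}$ is packed is at least $j + 1$; the overflow property then forces at most one $b$-item to accompany $a_{j}$, completing the induction.

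The argument splits into two cases. Case~(i): if $ALG$ packs $a_{m}$ before $b_{m}$, no $c$-item has been processed at that moment, since $c$'s can only be reached from the head after the $a$'s are exhausted and from the tail after the $b$'s are exhausted. The $m$ $a$-items occupy $m$ bins and the $m$ $c$-items then require at least $\lceil m/3 \rceil$ more bins by cardinality, yielding $ALG(I) \geq \frac{4m}{3}$. Case~(ii): if $b_{m}$ is packed before $a_{m}$, let $p$ be the maximum $a$-index in the bin containing $b_{m}$ or an earlier bin. By $D_{p} \leq p + 1$ at least $m - p - 1$ $b$-items lie in bins with no $a$-item, needing at least $(m - p - 1)/3$ bins under cardinality $3$. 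Adding the $p$ early $a$-bins, the $m - p$ later $a$-bins (each absorbing at most one $c$-item since $a + 2c > 1$), and at least $p/3$ further $c$-only bins, I obtain
\[
ALG(I) \geq \frac{m - p - 1}{3} + p + (m - p) + \frac{p}{3} = \frac{4m - 1}{3} \geq \frac{4}{3} \cdot OPT_{3}(I) - \frac{4}{3}.
\]

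The main piece of new work is only the induction $D_{i} \leq i + 1$, and here cardinality~$3$ makes the ``at most two $b$-items per $a$-bin'' bound automatic, so the induction is in some sense simpler than in the classic setting. A minor subtlety in Case~(ii) is that the currently open bin may absorb a constant number of further items after $b_{m}$ is packed, but the $1$-bounded nature of $ALG$ ensures bins are opened and closed sequentially, so the four bin classes counted above are pairwise disjoint, and any absorption by the current bin only reduces the leftover $c$-count by at most a constant, leaving the inequality intact with the slack $-\frac{4}{3}$ in the theorem statement.
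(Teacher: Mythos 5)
Your proposal is correct and follows essentially the same route as the paper's own proof: the identical item sequence $(a_1,\ldots,a_m,c,\ldots,c,b_m,\ldots,b_1)$ with $a_i+c+b_i=1$, the same induction $D_i\le i+1$, the same two-case split on whether $a_m$ or $b_m$ is packed first, and the same bin-counting that yields $\frac{4}{3}m-\frac{1}{3}$ (well within the stated slack of $-\frac{4}{3}$). Your closing remark about the currently open bin absorbing a constant number of items is a subtlety the paper silently glosses over, and your treatment of it is sound, so nothing is missing.
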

\begin{proof}
  We construct the same item sequence in Theorem~\ref{thm:none_maxmin_unit_lower}.
  Fix a max-min 1-bounded space algorithm $ALG$ and a positive integer
  $m$ divisible by 2.  Choose $r$, $\varepsilon$, and $\delta$ such
  that $0 < r < \frac{\sqrt{5} - 1}{2}$,
  $0 < \varepsilon$,
  $0 < \delta$,
  $\varepsilon + \delta < \frac{1}{4} - \frac{1}{5} = \frac{1}{20}$,
  and $\delta r^{m-3} (1 - r - r^{2}) > 2 \varepsilon$.
  (For example, $r = \frac{\sqrt{3} - 1}{2}$,
  $\varepsilon = \frac{9r^{m-3}+10}{80(r^{m-3}+4)}$, and $\delta =
  \frac{13}{40(r^{m-3}+4)}$ satisfy the condition.)  Set $a_{i} =
  \frac{1}{2} + \delta r^{i-1}$ and $b_{i} = \frac{1}{4} - \varepsilon
  - \delta r^{i-1}$ for $i = 1, 2, \ldots, m$, as well as $c =
  \frac{1}{4} + \varepsilon$.  Take
  \begin{align*}
    I = (
    a_{1}, a_{2}, \ldots, a_{m},
    \overbrace{
      c, c, \ldots, c
    }^{m},
    b_{m}, b_{m-1}, \ldots, b_{1}).
  \end{align*}
  The items in $I$ are located in a non-increasing order.  Without
  loss of generality, assume that $I$ does not change by sorting by
  $ALG$.  Note that the order of the indices of $b_{\cdot}$-items is
  the reverse of their order of appearance.  Since $\frac{1}{2} <
  a_{i} < \frac{1}{2} + \delta < 1$, $\frac{1}{4} < c < \frac{1}{2} -
  \frac{1}{5} < \frac{1}{3}$, $\frac{1}{5} < \frac{1}{4} - \varepsilon
  - \delta < b_{i} < \frac{1}{4}$, and $ALG$ is a 3-cardinality
  constrained algorithm, each bin in the assignment by $ALG$ for
  $I$ can contain at most one $a_{\cdot}$-item, at most three
  $c$-items, or at most three $b_{\cdot}$-items.

  It is seen that $a_{i} + c + b_{i} = 1$ for $i = 1, 2, \ldots, m$,
  which means that if the $a_{i}$-item, a $c$-item, and the $b_{i}$-item
  are packed into a bin, the bin is just full. Therefore, $OPT_{3}(I) = m$
  holds.

  Let us evaluate $ALG(I)$.  Since $ALG$ is a max-min algorithm, $ALG$
  repeatedly packs either the head or
  tail item in the currently remaining item sequence into a bin.
  Besides, since $ALG$ is a 1-bounded space algorithm, $ALG$
  packs each item always into the latest bin, not into any earlier bin.
  The difference from the proof of
  Theorem~\ref{thm:none_maxmin_unit_lower} is that now $ALG$ is a
  3-cardinality constrained algorithm, therefore the maximum number of
  $b_{\cdot}$-items that can be packed into a bin is 3.  We denote by
  $D_{i}$ the number of $b_{\cdot}$-items that are packed together
  with $a_{1}, a_{2}, \ldots, a_{i}$ for $i = 1, 2, \ldots, m$. Then,
  (\ref{eq:none_few_small_items}) holds as the same as the proof of
  Theorem~\ref{thm:none_maxmin_unit_lower},

  See the assignment by $ALG$ for $I$.  (If we remove one item from
  the bottom left bin of Figure~\ref{fig:none_maxmin_unit_lower} and
  replace the expression under the bin with ``$\geq \frac{m - (p +
    1)}{3}$'', the figure will represent the assignment in this
  proof.)
  
  (i)~The case where $ALG$ packs the $a_{m}$-item before
  the $b_{m}$-item.  As in the proof of
  Theorem~\ref{thm:none_maxmin_unit_lower}, we obtain
  \begin{align*}
    ALG(I) \geq m + \frac{m}{3} = \frac{4}{3} m.
  \end{align*}

  (ii)~If $ALG$ packs the $b_{m}$-item before the $a_{m}$-item. Let $p$ be
  the maximum index of the $a_{\cdot}$-items that are contained in the
  bin that contains the $b_{m}$-item or earlier bins than it.  If there is
  no such $a_{\cdot}$-items, let $p = 0$.
  By~(\ref{eq:none_few_small_items}), there are at most $(p + 1)$ of
  $b_{\cdot}$-items packed together with $a_1, a_2, \ldots, a_p$.
  Thus, in order to pack the remaining $b_{\cdot}$-items, at least
  $\frac{m - (p + 1)}{3}$ bins are needed, which is the only
  difference from the proof of
  Theorem~\ref{thm:none_maxmin_unit_lower}.

  On the other hand, when
  the $b_{m}$-item has just been packed, $(m - p)$ of $a_{\cdot}$-items and $m$ of
  $c$-items are all still unprocessed. $(m - p)$ of $a_{\cdot}$-items
  require $(m - p)$ bins. Since one $a_{\cdot}$-item and at most
  one $c$-item can be in the same bin, at most $(m - p)$ of $m$
  $c$-items are packed together with the $a_{\cdot}$-item. To pack the
  remaining $c$-items, $\frac{m - (m - p)}{3} = \frac{p}{3}$ or more bins
  should be required. Therefore, we get
  \begin{align*}
    ALG(I) \geq \frac{m - (p + 1)}{3} + p + (m - p) + \frac{p}{3}
    =    \frac{4}{3} m - \frac{4}{3}.
  \end{align*}

  The theorem is proved since $\frac{4}{3} m > \frac{4}{3} m - \frac{4}{3}$.
\end{proof}
\subsubsection{A Lower Bound for $B$-Bounded Space Algorithms}
\label{subsubsec:k-item_maxmin_b_bounded_lower}
For $k = 3$, we obtain from
Theorem~\ref{thm:none_maxmin_unbounded_lower} that $R_{ALG} \geq
\frac{16}{15} (\approx 1.07)$ holds for any max-min unbounded space
3-cardinality constrained algorithm $ALG$, since none of the bins
appearing in the proof of
Theorem~\ref{thm:none_maxmin_unbounded_lower} contains 4 or more
items.

The following theorem states that for each $k \geq 4$, $R_{ALG} \geq
\max \{\frac{5}{3} - \frac{1}{k-1}, \frac{3}{2}\}$ holds for any
max-min bounded space $k$-cardinality constrained algorithm $ALG$.
This is the best known lower bound also for 1-bounded space
algorithms.  The proof of
Theorem~\ref{thm:k-item_maxmin_bounded_lower} is similar to that of
Theorem~\ref{thm:none_maxmin_bounded_lower}.
\begin{theorem}
  For any max-min $B$-bounded space $k$-cardinality constrained
  algorithm $ALG$, and any positive integer $m$ that is at least $(k -
  1)B$ and divisible by 3 and $k - 1$, there exists a sorted item
  sequence $I$ such that $OPT_{k}(I) = m$ and: (A)~$ALG(I) \geq
  \left(\frac{5}{3} - \frac{1}{k-1}\right) \cdot OPT_{k}(I) -
  \frac{1}{6} B(k + 5)$ for each $4 \leq k \leq 6$, (B)~$ALG(I) \geq
  \frac{3}{2} \cdot OPT_{k}(I) - 2 B$ for $k = 7$, and (C)~$ALG(I)
  \geq \frac{3}{2} \cdot OPT_{k}(I) - \frac{1}{2} B$ for each $k \geq
  8$.
  \label{thm:k-item_maxmin_bounded_lower}
\end{theorem}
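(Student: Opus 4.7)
The plan is to adapt the adversarial construction from the proof of Theorem~\ref{thm:none_maxmin_bounded_lower} so that the cardinality bound $k$ becomes the binding constraint. I would build a sorted sequence with three size groups---large items of size just above $\tfrac{1}{2}$, medium items, and small items---with sizes and multiplicities tuned so that $OPT_k(I) = m$ (each optimal bin contains exactly one large, one medium, and enough smalls to reach size~$1$ and $k$ items), while every max-min $B$-bounded space $k$-cardinality constrained algorithm is forced into a wasteful packing. The sizes are regime-dependent: for $4 \le k \le 6$ the smalls are sized so that the cardinality cap on a ``large-plus-smalls'' bin is strictly tighter than the size cap, producing the coefficient $\tfrac{5}{3} - \tfrac{1}{k-1}$; for $k \ge 7$ the cardinality becomes slack enough on smalls-only bins that the bound flattens at $\tfrac{3}{2}$, with the gap between cases~(B) and~(C) arising from exactly how many of the $B$ open bins can absorb later items at the critical transition.

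The case split follows the template of Theorem~\ref{thm:none_maxmin_bounded_lower}: fix an arbitrary max-min $B$-bounded space $k$-cardinality constrained algorithm $ALG$ and split on whether $ALG$ packs the last large item before, or after, the first small item. A max-min algorithm only inspects the head and tail of the currently remaining sequence, so no medium is ever touched until either all larges (from the head) or all smalls (from the tail) have been processed; this locks down the state of $ALG$ at the moment of the split and lets us count which items remain.

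In case~(i) (last large first), each of the $m$ large items occupies its own bin by the size constraint, at most $B$ of those bins remain open, and the mediums and smalls are still untouched; the size and cardinality limits on medium-only and medium-plus-small bins force at least $\tfrac{1}{2}m - O(B)$ further bins, so $ALG(I) \ge \tfrac{3}{2} m - O(B)$, which always dominates the claimed coefficient. In case~(ii) (first small first), let $p$ denote the number of larges packed before the first small; each of the $p$ corresponding bins can hold at most $k-1$ smalls by cardinality, so the absorption capacity of large-bins is strictly limited. After the critical step, the remaining $(m-p)$ larges, $m$ mediums, and leftover smalls must be packed with at most $B$ open bins available. Writing the resulting lower bound as a piecewise expression in $p$ and minimizing yields the regime-dependent coefficient on $OPT_k(I)$: $\tfrac{5}{3} - \tfrac{1}{k-1}$ for $4 \le k \le 6$ and $\tfrac{3}{2}$ for $k \ge 7$. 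The minimum of the case~(i) and case~(ii) bounds is then the claimed lower bound in the theorem.

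The main obstacle will be pinning down the exact item sizes and multiplicities so that the minimization over $p$ in case~(ii) lands precisely on the stated coefficients, and so that the specific additive constants $-\tfrac{1}{6} B(k+5)$, $-2B$, and $-\tfrac{1}{2}B$ fall out naturally from the bookkeeping of how many of the $B$ open bins can still absorb items after the first small is packed. Once those size choices are fixed, the remaining argument is case analysis closely parallel to the proof of Theorem~\ref{thm:none_maxmin_bounded_lower}, with the cardinality constraint now playing the role that size alone played there.
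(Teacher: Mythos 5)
Your template (adapt Theorem~\ref{thm:none_maxmin_bounded_lower}; split on whether the last large item or the first small item is packed first) is the same as the paper's, but your construction is structurally too weak to reach the stated coefficients, and the part you defer as ``the main obstacle''---choosing the sizes and multiplicities---is exactly where the content of the proof lies. Concretely: with three groups and optimal bins of the form one large $+$ one medium $+$ $(k-2)$ smalls, having the cardinality cap bind on large-plus-smalls bins forces the smalls to be tiny, hence the medium must be nearly $\tfrac12-\text{large}$ in size, hence in the post-critical stage the leftover larges and mediums pack at most two per bin. Your case-(ii) count is then at best
\begin{equation*}
  ALG(I) \;\geq\; \max\Bigl\{\tfrac{(k-2)m-(k-1)p}{k},0\Bigr\} + p + \Bigl(m-\tfrac{p}{2}\Bigr) - O(B),
\end{equation*}
and minimizing over $p$ (crossing at $p=\tfrac{(k-2)m}{k-1}$) gives the coefficient $1+\tfrac{k-2}{2(k-1)}=\tfrac32-\tfrac{1}{2(k-1)}$. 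This equals the claimed $\tfrac53-\tfrac1{k-1}$ only at $k=4$; for $k=5$ it gives $\tfrac{11}{8}<\tfrac{17}{12}$, for $k=6$ it gives $\tfrac{7}{5}<\tfrac{22}{15}$, and for every $k\geq 7$ it stays strictly below $\tfrac32$, so claims (B) and (C) cannot follow since the theorem's bound is the minimum over your two cases. No retuning of a three-group, one-critical-moment construction escapes this cap.

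What is missing is a fourth size group and a second critical moment. The paper's sequence has $m$ items each of sizes $\tfrac12+\varepsilon$, $\tfrac13+\varepsilon$, $\tfrac16-3\varepsilon$, plus $(k-3)m$ filler items of size $\tfrac{\varepsilon}{k-3}$; an optimal bin takes one of each of the first three together with $(k-3)$ fillers. In case (ii) the paper tracks two parameters: $p$, the number of larges packed before the last filler is consumed from the tail, and $q$, the number packed between that moment and the packing of the first $(\tfrac16-3\varepsilon)$-item. This yields three additive stages---roughly $\bigl(\tfrac12-\tfrac1{k-1}\bigr)m$ from the fillers (cardinality-capped at $k-1$ per large bin), $\tfrac16 m$ from the $(\tfrac16-3\varepsilon)$-items (size-capped at three per large bin), and $m$ from the larges and mediums---summing to $\bigl(\tfrac53-\tfrac1{k-1}\bigr)m$; your construction collapses the first two stages into one and thereby loses the $\tfrac16 m$ term. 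The three regimes (A)/(B)/(C) then fall out of whether bins holding only $(\tfrac16-3\varepsilon)$-items are cardinality-capped at $k$ (for $4\leq k\leq 6$) or size-capped at six (for $k\geq 7$), and of comparing against case (i), which is where the additive constants $-\tfrac16B(k+5)$, $-2B$, and $-\tfrac12 B$ come from.
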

\begin{proof}
  Fix a max-min $B$-bounded space $k$-cardinality constrained
  algorithm $ALG$ and a positive
  integer $m$ that is at least $(k - 1)B$
  and divisible by 3 and $k - 1$.  Choose $\varepsilon$ such that $0 <
  \varepsilon < \frac{1}{3} \cdot (\frac{1}{6} - \frac{1}{7}) =
  \frac{1}{126}$.  Take
  \begin{align*}
    I = \Bigl(
    \overbrace{
      \frac{1}{2} + \varepsilon, \ldots, \underline{\frac{1}{2} + \varepsilon}
    }^{m},
    \overbrace{
      \frac{1}{3} + \varepsilon, \ldots, \frac{1}{3} + \varepsilon
    }^{m},
    \overbrace{
      \underline{\frac{1}{6} - 3 \varepsilon},
      \ldots, \frac{1}{6} - 3 \varepsilon
    }^{m},
    \overbrace{
      \underline{\frac{\varepsilon}{k - 3}},
      \ldots, \frac{\varepsilon}{k - 3}
    }^{(k - 3) m}
    \Bigr).
  \end{align*}
  Here, the last $(\frac{1}{2} + \varepsilon)$-item, the first
  $(\frac{1}{6} - 3 \varepsilon)$-item, and the first
  $(\frac{\varepsilon}{k - 3})$-item are important, so we underlined
  these items.  The items in $I$ are located in a non-increasing
  order.  Without loss of generality, assume that $I$ does not change
  by sorting by $ALG$.

  Since the sum of the item sizes of a $(\frac{1}{2} +
  \varepsilon)$-item, a $(\frac{1}{3} + \varepsilon)$-item, a
  $(\frac{1}{6} - 3 \varepsilon)$-item, and $(k - 3)$ of
  $(\frac{\varepsilon}{k - 3})$-items is exactly one, $OPT(I) = m$
  holds.

  \begin{figure}[t]
    \centering
    \includegraphics[width=1.0\linewidth]{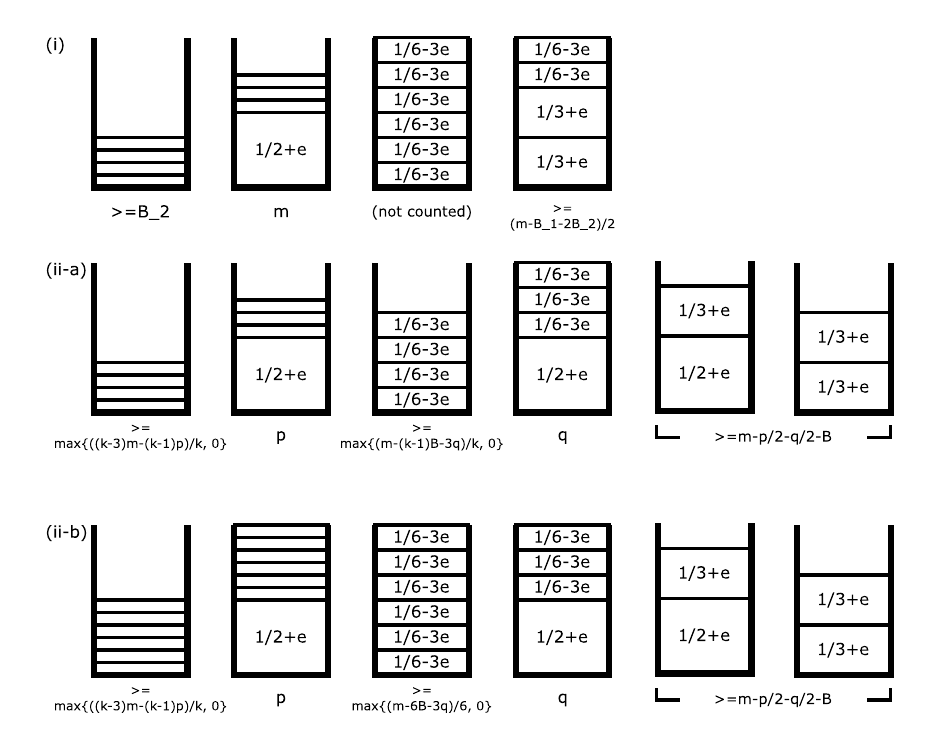}
    \caption{The assignment by $ALG$ for $I$.
    The items without labels are of size $\frac{\varepsilon}{k - 3}$.}
    \label{fig:k-item_maxmin_bounded_lower}
  \end{figure}
  Let us evaluate $ALG(I)$ (see
  Figure~\ref{fig:k-item_maxmin_bounded_lower}).  Since $ALG$ is a
  max-min algorithm, $ALG$ repeatedly packs either the head or tail
  item in the currently remaining item sequence into a bin.

  (i)~The case where $ALG$ packs the last $(\frac{1}{2} +
  \varepsilon)$-item before the first $(\frac{1}{6} - 3
  \varepsilon)$-item.
  Focus on the moment immediately after $ALG$ has packed the last
  $(\frac{1}{2} + \varepsilon)$-item.
  Suppose that 
  there are $B_{1}$ open bins that contain a $(\frac{1}{2} +
  \varepsilon)$-item and $B_{2}$ open bins
  that do not contain a $(\frac{1}{2} +
  \varepsilon)$-item,
  where $B_{1} + B_{2} \leq B$.
  Then, the number of bins assigned
  so far is at least $m + B_{2}$.

  Next, consider all the items currently unprocessed at this moment.
  Ignoring the
  $(\frac{1}{6} - 3 \varepsilon)$-items, just see how the $m$ of
  $(\frac{1}{3} + \varepsilon)$-items are going to be packed.
  While each of the open $B_{1}$ bins
  will be assigned to at most one $(\frac{1}{3} +
  \varepsilon)$-item, each of the open $B_{2}$ bins will be assigned to
  at most two $(\frac{1}{3} + \varepsilon)$-items.
  Besides, the
  remaining $(\frac{1}{3} + \varepsilon)$-items will be packed by
  opening new bins, each with at most two items. This is always
  valid, since $m \geq (k - 1)B \geq 2 B$
  implies $m - B_{1} -  2 B_{2} \geq m - 2 B \geq 0$.
  Therefore, we have
  \begin{align*}
    ALG(I) \geq m + B_{2} + \frac{m - B_{1} - 2 B_{2}}{2} =
    \frac{3}{2} m - \frac{1}{2} B_{1} \geq
    \frac{3}{2} m - \frac{1}{2} B.
  \end{align*}

  (ii)~The case where $ALG$ packs the first $(\frac{1}{6} - 3
  \varepsilon)$-items before the last $(\frac{1}{2} +
  \varepsilon)$-items.

  (ii-a)~The case of $4 \leq k \leq 6$. Let $p$ be the number of
  $(\frac{1}{2} + \varepsilon)$-items which has been packed before the
  first $(\frac{\varepsilon}{k - 3})$-item gets packed. It is seen that
  there can be at most one $(\frac{1}{2} + \varepsilon)$-item in a bin
  and such a bin can additionally contain at most $(k-1)$ of
  $(\frac{\varepsilon}{k - 3})$-items. Then, the minimum number of
  bins assigned before the first $(\frac{\varepsilon}{k - 3})$-item was
  packed is at least $\max\left\{\frac{(k-3)m - (k-1)p}{k}, 0\right\}
  + p$. At most $B$ of these bins remain open, each of which can
  additionally contain at most $(k-1)$ items.

  Next, consider the period after the first $(\frac{\varepsilon}{k -
    3})$-item got packed and before the first $(\frac{1}{6} - 3
  \varepsilon)$-item gets packed. Let $q$ be the number of
  $(\frac{1}{2} + \varepsilon)$-items which get packed during this
  period. Since at most $3$ of $(\frac{1}{6} - 3 \varepsilon)$-items
  are additionally packed into each bin that contains one
  $(\frac{1}{2} + \varepsilon)$-item, the number of bins assigned during
  this period is at least $\max\left\{\frac{m - (k - 1)B - 3q}{k},
  0\right\} + q$. At most $B$ bins of those that have been assigned until the
  end of this period remain open, each of which can additionally
  contain at most 2 of the remaining items.

  At the end of this period, there are $(m - p - q)$ of $(\frac{1}{2}
  + \varepsilon)$-items and $m$ of $(\frac{1}{3} + \varepsilon)$-items
  left.  These items can be packed at most two per bin.  Therefore,
  the number of bins required to pack these items, including the open
  bins mentioned above, is at least $\frac{(m - p - q) + m - 2 B}{2} =
  m - \frac{p}{2} - \frac{q}{2} - B$.

  From these observations, we obtain that $ALG(I)$ is at least
  \begin{multline*}
    \max\Bigl\{\frac{(k-3)m - (k-1)p}{k}, 0\Bigr\} + p +
    \max\Bigl\{\frac{m - (k - 1)B - 3q}{k}, 0\Bigr\} + q +
    m - \frac{p}{2} - \frac{q}{2} - B = \\
    \max\Bigl\{\frac{(k-3)m - (k-1)p}{k} + \frac{p}{2},
    \frac{p}{2}\Bigr\} +
    \max\Bigl\{\frac{m - (k - 1)B - 3q}{k} + \frac{q}{2},
    \frac{q}{2}\Bigr\} +
    m - B.
  \end{multline*}
  The first max operator takes the minimum value
  $\frac{(k-3)m}{2(k-1)}$ when $p = \frac{(k-3)m}{k-1}$, while the
  second max operator takes the minimum value $\frac{1}{6}(m - (k -
  1)B)$ when $q = \frac{1}{3}(m - (k - 1)B)$. These are always valid
  due to the assumption that $m$ is divisible by $3$ and $k-1$, $0
  \leq \frac{1}{3}(m - (k - 1)B) = q$ implied by the assumption $m
  \geq (k - 1)B$, and $m - p- q = \frac{(7-k)m}{3(k-1)} +
  \frac{1}{3}(k - 1)B > 0$. Therefore, it holds that
  \begin{align*}
    ALG(I) \geq
    \frac{(k-3)m}{2(k-1)} +
    \frac{1}{6}(m - (k - 1)B) +
    m - B =
    \Bigl(\frac{5}{3} - \frac{1}{k-1}\Bigr) \cdot m - \frac{1}{6} B(k + 5).
  \end{align*}

  (ii-b)~The case of $k \geq 7$. We evaluate the number of bins in the
  same way as~(ii-a), noting that up to 6 of $(\frac{1}{6} - 3
  \varepsilon)$-items can be contained by one bin. We get
  \begin{align}
    & ALG(I) \nonumber \\
    & \geq \max\Bigl\{\frac{(k-3)m - (k-1)p}{k}, 0\Bigr\} + p +
    \max\Bigl\{\frac{m - 6 B - 3q}{6}, 0\Bigr\} + q +
    m - \frac{p}{2} - \frac{q}{2} - B \nonumber \\
    & \geq \max\Bigl\{\frac{(k-3)m - (k-1)p}{k}, 0\Bigr\} + p +
    \frac{m - 6 B - 3q}{6} + q +
    m - \frac{p}{2} - \frac{q}{2} - B \nonumber \\
    & = \max\Bigl\{\frac{(k-3)m - (k-1)p}{k} + \frac{p}{2},
    \frac{p}{2}\Bigr\} + \frac{7}{6} m - 2 B
    \label{eq:k-item_maxmin_unbounded_lower_max} \\
    & \geq \frac{(k-3)m}{2(k-1)} + \frac{7}{6} m - 2 B
    \nonumber \\
    & = \Bigl(\frac{5}{3} - \frac{1}{k-1} \Bigr) m - 2 B. \nonumber
  \end{align}
  The max operator in~(\ref{eq:k-item_maxmin_unbounded_lower_max})
  takes the minimum value $\frac{(k-3)m}{2(k-1)}$ when $p =
  \frac{(k-3)m}{k-1}$, that is, when both operands are equal. This is
  always valid since we have assumed that $m$ is divisible by $k-1$.

  Finally, compare the lower bound values for each $k$. When $4 \leq k
  \leq 6$, $\left(\frac{5}{3} - \frac{1}{k-1}\right) \cdot m -
  \frac{1}{6} B(k + 5) < \frac{3}{2} m - \frac{1}{2} B$ holds. Thus,
  the lower bound value of~(ii-a) is the smallest, which leads us to
  the claim~(A).  When $k = 7$, $\frac{5}{3} - \frac{1}{k-1} =
  \frac{3}{2}$ holds and the lower bound value of~(ii-b) is the
  smallest, which shows the claim~(B).  When $k \geq 8$, the lower
  bound value of~(i) is the smallest.  That is, the claim~(C) holds
  true.
\end{proof}
\subsection{A Tight Bound for Pre-Sorted Online 1-Bounded Space Algorithms}
\label{subsec:k-item_online_sorted_unit}
We denote by $NFD_{k}$ a $k$-cardinality constrained version of
algorithm $NFD$ (see
Section~\ref{subsec:none_online_sorted_bounded_tight}) that opens a
new bin every time it has packed $k$ items into a bin.  The following
Theorems~\ref{thm:k-item_online_sorted_unit_nextfit_k}
and~\ref{thm:k-item_online_bounded_lower}~\cite{DBLP:journals/siamdm/Epstein06}
state that $NFD_{k}$ is best possible among online bounded space
$k$-cardinality constrained algorithms, achieving $R_{NFD_{k}} =
\lambda_{k}$, though $NFD_{k}$ is a 1-bounded space algorithm.

Compare with the fact that algorithm $CCH_{k}$ (Cardinality
Constrained Harmonic) is best possible among online bounded space
$k$-cardinality constrained algorithms, achieving the same asymptotic
approximation ratio $R_{CCH_{k}} =
\lambda_{k}$~\cite{DBLP:journals/siamdm/Epstein06}.  Our result
explains a trade-off between algorithm classes: if pre-sorting is
allowed, then a single open bin is enough for the same performance.

To analyze the performance of $NFD_{k}$, we use the following weight
function defined in the paper \cite{DBLP:journals/siamdm/Epstein06}:
\begin{alignat*}{2}
  w_{3}(x) & =
  \begin{cases}
    \frac{1}{j}, & x \in (\frac{1}{j + 1}, \frac{1}{j}]
    \text{ for } j = 1, \ldots, k - 1;\\
    \frac{1}{k}, & x \in (0, \frac{1}{k}].
  \end{cases}
\end{alignat*}
Given $I = (a_{1}, a_{2}, \ldots, a_{n})$, define $w_{3}(I) =
\sum_{i=1}^{n} w_{3}(a_{i})$.  The sum of the weights of items that
fit in a single bin is known to be bounded from above:
\begin{lemma}[\cite{DBLP:journals/siamdm/Epstein06}]
  For any $k \geq 2$ and any $I = (a_{1}, a_{2}, \ldots, a_{n}) \in (0, 1]^{n}$
    such
  that $n \leq k$ and $\sum_{i=1}^{n} a_{i} \leq 1$, it holds that
  $w_{3}(I) \leq \lambda_{k}$.
  \label{lem:nextfit_k_cr_opt_max_weight}
\end{lemma}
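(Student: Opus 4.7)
My plan is to deduce the lemma directly from Lemma~\ref{lem:mm_k_cr_opt_max_weight} whenever $k \geq 3$, and to verify the case $k = 2$ by direct inspection. Comparing the two weight functions, I would first observe the pointwise identity
\begin{equation*}
  w_{3}(x) - w_{2}(x) =
  \begin{cases}
    1/k, & x \in (\tfrac{1}{2}, 1],\\
    0, & x \in (0, \tfrac{1}{2}].
  \end{cases}
\end{equation*}
Because $\sum_{i=1}^{n} a_{i} \leq 1$ and every item of size greater than $\tfrac{1}{2}$ contributes more than $\tfrac{1}{2}$ to this sum, the sequence $I$ contains at most one such item. Summing the identity across $I$ therefore yields $w_{3}(I) \leq w_{2}(I) + 1/k$. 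For $k \geq 3$, Lemma~\ref{lem:mm_k_cr_opt_max_weight} then gives $w_{2}(I) \leq \lambda_{k} - 1/k$, whence $w_{3}(I) \leq \lambda_{k}$, completing this case.

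\textbf{Base case $k = 2$.} Since Lemma~\ref{lem:mm_k_cr_opt_max_weight} is stated only for $k \geq 3$, I would handle $k = 2$ by hand. Here $\lambda_{2} = \max\{1, \tfrac{1}{2}\} + \max\{\tfrac{1}{2}, \tfrac{1}{2}\} = \tfrac{3}{2}$, and the constraint $n \leq 2$ permits at most two items. If some item lies in $(\tfrac{1}{2}, 1]$, the budget $\sum a_{i} \leq 1$ forces the remaining (at most one) item to lie in $(0, \tfrac{1}{2}]$, so $w_{3}(I) \leq 1 + \tfrac{1}{2} = \tfrac{3}{2}$. Otherwise every item lies in $(0, \tfrac{1}{2}]$, giving $w_{3}(I) \leq n \cdot \tfrac{1}{2} \leq 1 \leq \tfrac{3}{2}$.

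\textbf{Alternative approach and main obstacle.} As an alternative, one could mirror the direct argument of Lemma~\ref{lem:mm_k_cr_opt_max_weight}: exhibit the extremal sequence $b_{i} = 1/\pi_{i} + \varepsilon$ for sufficiently small $\varepsilon$, use~(\ref{eq:pi_sum}) to check $\sum b_{i} \leq 1$, and verify $\sum_{i=1}^{k} w_{3}(b_{i}) = \lambda_{k}$ by splitting according to whether $\pi_{i} \leq k$ (so $b_{i} \in (1/\pi_{i}, 1/(\pi_{i}-1)]$ contributes $1/(\pi_{i}-1)$) or $\pi_{i} > k$ (so $b_{i} \leq 1/k$ contributes $1/k$). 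One would then show that any sorted $I$ deviating from this template at a smallest index $p$ has $a_{p} \leq 1/\pi_{p}$, and control the tail via the efficiency bound $w_{3}(x)/x \leq (\pi_{p}+1)/\pi_{p}$ on $(0, 1/\pi_{p}]$ together with the telescoping identity $(\pi_{p}+1)/(\pi_{p}(\pi_{p}-1)) = 1/(\pi_{p}-1) + 1/(\pi_{p+1}-1)$. The main technical obstacle in that approach is the careful case split at $p=1$ (where $\pi_{p-1}$ is undefined) and small values of $k$, which is precisely the intricate analysis already carried out in the proof of Lemma~\ref{lem:mm_k_cr_opt_max_weight}. The reduction sketched above avoids revisiting those details, reducing the work to the trivial hand check for $k = 2$.
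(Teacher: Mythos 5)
Your proof is correct, and it takes a genuinely different route from the paper: the paper does not prove Lemma~\ref{lem:nextfit_k_cr_opt_max_weight} at all, but imports it from Epstein~\cite{DBLP:journals/siamdm/Epstein06}, whose original argument is the direct extremal-sequence analysis (the one you sketch as your ``alternative approach,'' and which the paper adapts to prove its own Lemma~\ref{lem:mm_k_cr_opt_max_weight} for $w_{2}$). You invert that dependency: starting from the paper's already-proven $w_{2}$ bound, you observe that $w_{3}(x)-w_{2}(x)=\frac{1}{k}$ exactly on $(\frac{1}{2},1]$ and vanishes elsewhere, that the budget $\sum_{i} a_{i}\leq 1$ admits at most one item of size exceeding $\frac{1}{2}$, hence $w_{3}(I)\leq w_{2}(I)+\frac{1}{k}\leq \lambda_{k}$ for $k\geq 3$, and you close the remaining case $k=2$ by a direct check against $\lambda_{2}=\frac{3}{2}$. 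All of these steps are sound: the two weight functions agree on $(0,\frac{1}{2}]$, Lemma~\ref{lem:mm_k_cr_opt_max_weight} imposes no sortedness hypothesis, and it appears earlier in the paper (Section~\ref{subsubsec:3ormore-item_maxmin_unit_upper}) than Lemma~\ref{lem:nextfit_k_cr_opt_max_weight} (Section~\ref{subsec:k-item_online_sorted_unit}) and is proven without reference to it, so there is no circularity. What each approach buys: the paper's citation keeps the historical dependency (Epstein's lemma is the primitive, and the paper's $w_{2}$ lemma is derived by modifying its proof), whereas your reduction makes the paper self-contained at essentially zero cost, confining the intricate case analysis at $p=1$ and small $k$ to a single place rather than duplicating it for both weight functions.
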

The theorem below proves $R_{NFD_{k}} \leq \lambda_{k}$.
The proof is done similarly to the analysis of $NFD$ in~\cite{10.1137/0602019},
except for the cardinality constraint.
\begin{theorem}
  For any $k \geq 2$, it holds that $NFD_{k}(I) \leq \lambda_{k} \cdot
  OPT_{k}(I) + k$ for all item sequences $I$.
  \label{thm:k-item_online_sorted_unit_nextfit_k}
\end{theorem}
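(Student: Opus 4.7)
The plan is to follow the weight-function technique of Baker and Coffman, using the weight function $w_3$ already defined. Lemma~\ref{lem:nextfit_k_cr_opt_max_weight} gives the upper half $w_3(I) \leq \lambda_k \cdot OPT_k(I)$, so the task reduces to proving the lower half $w_3(I) \geq NFD_k(I) - k$; combining the two inequalities then yields the theorem.

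Let $B_1, \ldots, B_N$ denote the bins produced by $NFD_k$ in the order in which they are opened. Items inside each bin appear in non-increasing size order, and the first item of $B_{i+1}$ is no larger than the last item of $B_i$. Writing $c_i$ for the class of the first (largest) item of $B_i$, the sequence $c_1 \leq c_2 \leq \cdots \leq c_N$ is non-decreasing with values in $\{1, \ldots, k\}$, so at most $k-1$ indices $i < N$ satisfy $c_i < c_{i+1}$; I call these \emph{transition bins}.

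The heart of the argument is the invariant that every non-last, non-transition bin $B_i$ satisfies $w_3(B_i) \geq 1$. If $B_i$ was closed by reaching the cardinality bound, each of its $k$ items has class at most $k$, hence weight at least $1/k$, so $w_3(B_i) \geq 1$. Otherwise $B_i$ was closed by overflow: setting $j = c_i = c_{i+1}$ and letting $y$ denote the first item of $B_{i+1}$, the class-$j$ membership of $y$ gives $y > 1/(j+1)$, and every item of $B_i$ lies in $[y, x_{i,1}] \subseteq (1/(j+1), 1/j]$, i.e., is of class exactly $j$. The overflow condition $\sum_{x \in B_i} x > 1 - y \geq (j-1)/j$ together with the per-item size bound $\sum_{x \in B_i} x \leq |B_i|/j$ forces $|B_i| \geq j$, so $w_3(B_i) = |B_i| \cdot (1/j) \geq 1$.

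At most $k$ bins (the last bin plus the at most $k - 1$ transition bins) can therefore contribute less than $1$ to $w_3(I)$, giving $w_3(I) \geq N - k$. Combined with Lemma~\ref{lem:nextfit_k_cr_opt_max_weight}, this yields $NFD_k(I) \leq w_3(I) + k \leq \lambda_k \cdot OPT_k(I) + k$. The main obstacle I anticipate is identifying this per-bin invariant: the naive pairwise inequality $w_3(B_i) + w_3(y_{i+1}) > 1$ that follows from $w_3(x) \geq x$ only yields a factor-of-two loss, whereas the observation that a size-closed non-transition bin must consist entirely of items of a single class (forcing at least $j$ items when the first item has class $j$) is what tightens the bound to match $\lambda_k$.
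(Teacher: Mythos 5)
Your proof is correct and takes essentially the same route as the paper: both use the weight function $w_{3}$ together with Lemma~\ref{lem:nextfit_k_cr_opt_max_weight} to get $w_{3}(I) \leq \lambda_{k} \cdot OPT_{k}(I)$, and both observe that apart from the latest bin and at most $k-1$ bins at class transitions, every bin produced by $NFD_{k}$ consists of items of a single class $j$, at least $j$ of them, and hence has weight at least $1$, giving $w_{3}(I) \geq NFD_{k}(I) - k$. The only blemish is the claim $y > \frac{1}{j+1}$ in your overflow case, which fails when $j = k$ (class~$k$ is $(0, \frac{1}{k}]$); but in that case all items of $B_{i}$ are of class~$k$ simply because the largest one is, and the counting $|B_{i}| \geq k$ goes through with $y \leq \frac{1}{k}$ alone, so the argument stands.
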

\begin{proof}
  Without loss of generality, let $I = (a_{1}, a_{2}, \ldots, a_{n})
  \in (0, 1]^{n}$ be a sorted item sequence and assume that $I$ does
  not change by sorting by algorithm $NFD_{k}$.

  $NFD_{k}$ greedily packs the head of the currently remaining item
  sequence into the current open bin.  Observe the assignment by
  algorithm $NFD_{k}$ for $I$.  There are at most $(k - 1)$ bins that
  contain items of multiple classes, since there are $k$ classes.
  Except for these bins and the latest bin, all other bins each
  contain exactly $j$ of class~$j$ items for some $1 \leq j \leq k$.
  The sum of weights of the items in such a bin is $j \cdot
  \frac{1}{j} = 1$.  Thus, $w_{3}(I) \geq NFD_{k}(I) - (k - 1) - 1 =
  NFD_{k}(I) - k$ is true.

  On the other hand, if we consider an arbitrary packing that fits
  in a bin, by Lemma~\ref{lem:nextfit_k_cr_opt_max_weight}, the
  sum of the weights is at most
  $\lambda_{k}$. Therefore, $w_{3}(I) \leq \lambda_{k} \cdot
  OPT_{k}(I)$ holds.  Therefore, we have $NFD_{k}(I) \leq \lambda_{k}
  \cdot OPT_{k}(I) + k$.
\end{proof}
On the other hand, since Theorem~3 in the paper
\cite{DBLP:journals/siamdm/Epstein06} is proven using a sorted item
sequence, a lower bound for pre-sorted online bounded space algorithms
is straightforwardly derived.
\begin{theorem}[\cite{DBLP:journals/siamdm/Epstein06}]
  For any $k \geq 2$, any pre-sorted online $B$-bounded space $k$-cardinality
  constrained algorithm $ALG$, and any positive integer $m$, there
  exists a sorted item sequence $I$ such that $OPT_{k}(I) = m$ and
  $ALG(I) \geq \lambda_{k} \cdot OPT_{k}(I) - B k$.
  \label{thm:k-item_online_bounded_lower}
\end{theorem}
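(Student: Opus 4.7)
The plan is to adapt the construction from Theorem~\ref{thm:none_online_sorted_bounded_lower} to the $k$-cardinality-constrained setting, following Epstein's argument in~\cite{DBLP:journals/siamdm/Epstein06}. The key observation is that Epstein's adversarial instance is already sorted in non-increasing order, so the pre-sorting step of a pre-sorted online algorithm leaves the sequence unchanged, and then the standard online bounded-space analysis applies verbatim.

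Let $K$ be the largest index with $\pi_{K} \leq k$. Since $\frac{1}{\pi_{i}-1} \geq \frac{1}{k}$ iff $\pi_{i} \leq k+1$, we obtain $\lambda_{k} = \sum_{i=1}^{K} \frac{1}{\pi_{i}-1} + \frac{k-K}{k}$. Choose $\varepsilon > 0$ and then $\delta > 0$ both small enough that $\sum_{i=1}^{K} \frac{1}{\pi_{i}} + K\varepsilon + (k-K)\delta \leq 1$ and that $\pi_{i}$ items of size $\frac{1}{\pi_{i}} + \varepsilon$ exceed 1. Take $I$ to be the sorted sequence consisting of $m$ items of size $\frac{1}{\pi_{i}} + \varepsilon$ for each $i = 1, 2, \ldots, K$, followed by $(k-K)m$ items of size $\delta$. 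The sequence is non-increasing, so $ALG$'s initial sort leaves $I$ unchanged. A single bin can contain one item from each of the $K$ ``large'' classes together with $k-K$ of the $\delta$-items, for exactly $k$ items whose total size is at most $1$ by the choice of $\varepsilon$ and $\delta$ together with~(\ref{eq:pi_sum}). Hence $OPT_{k}(I) = m$.

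To lower-bound $ALG(I)$, I would argue that a bin can contain at most $\pi_{i}-1$ items of size $\frac{1}{\pi_{i}}+\varepsilon$ by the size constraint (this is tighter than the cardinality constraint since $\pi_{i}-1 \leq k-1$), exactly as in the proof of Theorem~\ref{thm:none_online_sorted_bounded_lower}, and at most $k$ of the $\delta$-items by the cardinality constraint. When the items of class $i \geq 2$ begin to arrive, at most $B$ open bins remain, each capable of absorbing at most $\pi_{i}-1$ further class-$i$ items, so at least $\frac{m}{\pi_{i}-1} - B$ new bins are opened for this class; the analogous bound $\frac{(k-K)m}{k} - B$ holds for the $\delta$-items. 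Summing, with no correction for the first class,
\begin{equation*}
ALG(I) \;\geq\; m + \sum_{i=2}^{K}\Bigl(\frac{m}{\pi_{i}-1} - B\Bigr) + \Bigl(\frac{(k-K)m}{k} - B\Bigr) \;=\; \lambda_{k}\,m - KB \;\geq\; \lambda_{k}\cdot OPT_{k}(I) - Bk,
\end{equation*}
using $K \leq k$ to simplify the constant.

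The main obstacle is bookkeeping rather than any essentially new idea: one must check that the size constraint (not the cardinality constraint) really does bound the packing of class-$i$ items to $\pi_{i}-1$ per bin, which forces $\pi_{K} \leq k$; and one must handle the transition into the $\delta$-class carefully, since those items can in principle be absorbed at rate $k$ per open bin, but the total loss across all $K+1$ class transitions is still at most $KB \leq kB$. Because Epstein's original sequence is already sorted, no modification is needed beyond observing that ``pre-sorted online'' reduces to ``online on a sorted input'' here.
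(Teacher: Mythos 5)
Your proof is correct, but it does not follow the paper's route, because the paper gives no construction at all for this theorem: it simply imports Epstein's Theorem~3 and observes, in one sentence, that since Epstein's adversarial instance is already sorted, the lower bound transfers verbatim from online bounded-space algorithms to pre-sorted online bounded-space ones. You instead re-derive the bound from scratch by extending the paper's own proof of Theorem~\ref{thm:none_online_sorted_bounded_lower}: you keep the harmonic sizes $\frac{1}{\pi_{i}}+\varepsilon$ for the indices $i$ with $\pi_{i} \leq k$ and append $(k-K)m$ items of size $\delta$ so that each optimal bin is saturated in cardinality (exactly $k$ items) as well as nearly in size; the phase-by-phase accounting ($\geq m$ bins for the first class, $\geq \frac{m}{\pi_{i}-1} - B$ new bins for each class $i \geq 2$ by the size constraint, and $\geq \frac{(k-K)m}{k} - B$ new bins for the $\delta$-items by the cardinality constraint) then gives $ALG(I) \geq \lambda_{k} m - KB \geq \lambda_{k} \cdot OPT_{k}(I) - Bk$, using the correct decomposition $\lambda_{k} = \sum_{i=1}^{K}\frac{1}{\pi_{i}-1} + \frac{k-K}{k}$ and $K \leq k$. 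What your approach buys is self-containedness: the statement no longer rests on inspecting Epstein's paper to confirm her sequence is sorted; what the paper's citation buys is brevity. Two small points to tighten: (a) you only argue $OPT_{k}(I) \leq m$ via the explicit packing; add that $OPT_{k}(I) \geq m$ because $I$ contains exactly $km$ items and the cardinality constraint caps each bin at $k$ of them; (b) the requirement that $\pi_{i}$ items of size $\frac{1}{\pi_{i}}+\varepsilon$ overflow a bin is not really a condition on $\varepsilon$ --- it holds for every $\varepsilon > 0$, since $\pi_{i}\bigl(\frac{1}{\pi_{i}}+\varepsilon\bigr) = 1 + \pi_{i}\varepsilon > 1$.
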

\subsection{A Tight Bound for Online 1-Bounded Space Algorithms}
\label{subsec:k-item_online_general_unit}
It has long been known that $NF$ (Next Fit) is a best possible online
1-bounded space algorithm for the classic bin packing
problem~\cite{johnson1973near}.  Let $NF_{k}$ denote a $k$-cardinality
constrained version of algorithm $NF$ that opens a new one every time
it has packed $k$ items into a bin.  In this section, we prove that
$NF_{k}$ is best possible among online 1-bounded space $k$-cardinality
constrained algorithms as well, achieving $R_{NF_{k}} = 3 -
\frac{2}{k}$.
\begin{theorem}
  For any $k \geq 2$, it holds that $NF_{k}(I) \leq \left(3 -
  \frac{2}{k}\right) \cdot OPT_{k}(I) + 1$
  for all item sequences $I$.
  \label{thm:k-item_online_general_unit_nextfit_k}
\end{theorem}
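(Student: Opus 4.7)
My plan is to prove the upper bound by a weighted charging argument that combines the two natural inequalities satisfied by any feasible $k$-cardinality constrained packing. Write $m := NF_k(I)$ and label the bins $B_1, \ldots, B_m$ in the order $NF_k$ opens them; for each $j$ let $s_j$ be the total size of items in $B_j$ and $c_j$ the number of items. Then $\sum_j s_j \leq OPT_k(I)$ (the total item size cannot exceed the total capacity used by an optimal packing) and $\sum_j c_j = n \leq k \cdot OPT_k(I)$ (since each optimal bin holds at most $k$ items). Setting $\alpha := \frac{2(k-1)}{k}$ and $\beta := \frac{1}{k}$, so that $\alpha + k\beta = 3 - \frac{2}{k}$, and defining a per-bin score $g_j := \alpha s_j + \beta c_j$, these two bounds together give $\sum_{j=1}^m g_j \leq (3 - \frac{2}{k}) \cdot OPT_k(I)$ for free.

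The theorem thus reduces to the matching lower bound $\sum_{j=1}^m g_j \geq m - 1$. To prove it, I would use that every $B_j$ with $j < m$ is closed by $NF_k$ for exactly one of two reasons. If $B_j$ is cardinality-closed, then $c_j = k$, so $g_j \geq \beta k = 1$ on its own. If $B_j$ is size-closed, meaning the first item of $B_{j+1}$ failed to fit into $B_j$, then $s_j + s_{j+1} > 1$ and $c_j + c_{j+1} \geq 2$, which together yield $g_j + g_{j+1} > \alpha + 2\beta = 2$. The weights are tuned precisely so that each of these two local bounds contributes on average $1$ per bin covered.

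I would then combine these per-bin and per-pair bounds by a greedy sweep: starting from $j = 1$, take $B_j$ as a standalone contribution $\geq 1$ whenever it is cardinality-closed and advance to $j+1$; otherwise pair $(B_j, B_{j+1})$ for a joint contribution $> 2$ and advance to $j+2$. The sweep consumes every bin in $B_1, \ldots, B_{m-1}$ exactly once, in a singleton or a pair, and the last bin $B_m$ is either absorbed into the final pair or left over with $g_m \geq 0$. In either case the contributions sum to at least $m - 1$, and combining with the upper bound gives $m \leq (3 - \frac{2}{k}) \cdot OPT_k(I) + 1$.

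The only real obstacle is calibrating the weights. The cardinality-closed case forces $\beta k \geq 1$, and the size-closed pair forces $\alpha + 2\beta \geq 2$; these two constraints pin down $\beta = 1/k$ and $\alpha = 2(k-1)/k$, and it is exactly this choice that makes $\alpha + k\beta$ equal the target ratio $3 - 2/k$. Once the weights are in place, the only loose end is the unmatched last bin, which accounts for the additive constant of $1$ in the bound.
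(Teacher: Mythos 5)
Your proposal is correct, and it follows the same high-level strategy as the paper (a weighting argument: lower-bound the weight by $NF_k(I)-1$ via pairing consecutive bins, upper-bound it by $(3-\tfrac{2}{k})\cdot OPT_k(I)$), but the execution differs in a way worth noting. The paper uses a piecewise weight function
\begin{equation*}
  w_{4}(x) =
  \begin{cases}
    2 - \frac{1}{k}, & x \in (1 - \frac{1}{2k}, 1]; \\
    2x, & x \in (\frac{1}{2k}, 1 - \frac{1}{2k}];\\
    \frac{1}{k}, & x \in (0, \frac{1}{2k}],
  \end{cases}
\end{equation*}
and must then prove, by a separate extremal case analysis, that no feasible bin of at most $k$ items carries weight more than $3 - \frac{2}{k}$ (the maximizer being one item of size close to $1$ plus $k-1$ tiny items). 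Your affine per-item weight $(2-\tfrac{2}{k})x + \tfrac{1}{k}$ makes that entire $OPT$-side analysis collapse to two one-line global inequalities, $\sum_j s_j \leq OPT_k(I)$ and $\sum_j c_j \leq k \cdot OPT_k(I)$, which is genuinely simpler; the algorithm-side argument (cardinality-closed bins give $\geq 1$ alone, size-closed bins pair with their successor to give $> 2$) is the same dichotomy the paper exploits via its full/non-full bin decomposition, though your greedy sweep is a cleaner bookkeeping device than the paper's cutting of the bin sequence after each full bin and parity discussion. The paper's choice has the virtue of consistency with the harmonic-style weight functions $w_1, w_2, w_3$ used elsewhere in the paper and yields a per-bin weight bound that is tight bin by bin, but for this particular theorem your route reaches the identical inequality $NF_k(I) \leq (3-\tfrac{2}{k})\cdot OPT_k(I) + 1$ with strictly less work. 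One small point to make explicit if you write this up: the dichotomy on closed bins requires that a bin with $k$ items is counted as cardinality-closed regardless of whether the next item would also have failed to fit, so the two cases are exhaustive and your sweep consumes every bin exactly once.
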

\begin{proof}
  We use the following weight function:
  \begin{alignat*}{2}
    w_{4}(x) & =
    \begin{cases}
      2 - \frac{1}{k}, & x \in (1 - \frac{1}{2k}, 1]; \\
        2 x, & x \in (\frac{1}{2k}, 1 - \frac{1}{2k}];\\
          \frac{1}{k}, & x \in (0, \frac{1}{2k}].
    \end{cases}
  \end{alignat*}
  Given $I = (a_{1}, a_{2}, \ldots, a_{n})$, define $w_{4}(I) =
  \sum_{i=1}^{n} w_{4}(a_{i})$.
  Let $I = (a_{1}, a_{2}, \ldots, a_{n}) \in (0, 1]^{n}$ be an
    arbitrary item sequence.

    (I)~We are going to prove that $w_{4}(I) \geq NF_{k}(I) -
    1$. Consider the assignment by $NF_{k}$ for $I$ as a sequence of
    bins. The assignment consists of two types of bins: bins that
    contain exactly $k$ items which we call \emph{full} bins, and bins
    that contain less than $k-1$ items which we call \emph{non-full}
    bins.  The case where the assignment by $NF_{k}$ for $I$ contains
    only non-full bins is discussed at the end of (I). Below, we first
    consider the case where the assignment contains at least one full
    bin.

    Remove the latest bin and cut the remaining sequence of
    $(NF_{k}(I) - 1)$ bins after each full bin. Then, we get a
    concatenation of subsequences with zero or more consecutive
    non-full bins, followed by one full bin. To prove that $w_{4}(I)
    \geq NF_{k}(I) - 1$, it is sufficient to show that the sum of the
    weights of the items in the bins in each subsequence is greater
    than or equal to the number of bins in the subsequence.
    See Figure~\ref{fig:k-item_online_general_unit_nextfit_k}.
  \begin{figure}[t]
    \centering
    \includegraphics[width=1.0\linewidth]
                    {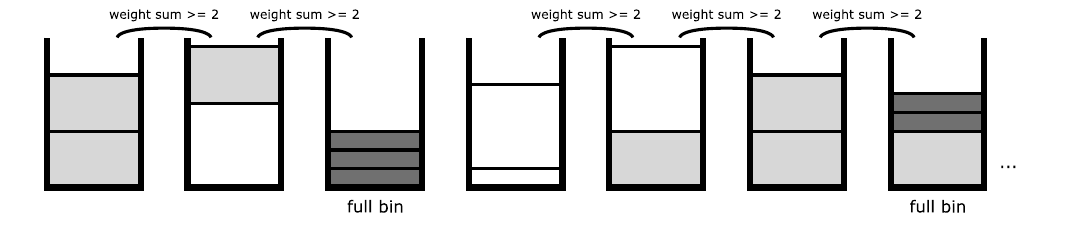}
    \caption{The assignment by $NF_{k}$ for the case of $k = 3$.
    The bins without ``full bin'' labels are non-full bins.
    }
    \label{fig:k-item_online_general_unit_nextfit_k}
  \end{figure}

  (i)~The case where the subsequence has an even number of non-full
  bins. We first mention that the sum of the weights of the items in
  the last full bin is $k \cdot \frac{1}{k} \geq 1$, since any small
  item has the lowest weight $\frac{1}{k}$.

  Look at consecutive non-full bin pairs in the subsequence, say, of
  indices $l$ and $l+1$.  The sum of the sizes of the items in these
  two bins must exceed 1 in total, since otherwise $NF_{k}$ would have
  packed some item in bin $l+1$ into the bin $l$.

  Below we show that the sum of the weights of the items in bins $l$
  and $l+1$ exceeds 2.  If the sum of the sizes of the items in each
  of bins $l$ and $l+1$ exceeds $\frac{1}{2}$, the proof is complete
  because the sum of the weights of the items in each bin exceeds $2
  \cdot \frac{1}{2} = 1$.  In the following, we assume that the sum of
  the sizes of the items in bin $l$ is less than or equal to
  $\frac{1}{2}$. (The case where the sum of the sizes of the items in
  bin $l+1$ is less than or equal to $\frac{1}{2}$ can be similarly
  shown).
  (i-a)~When the sum of the sizes of the items bin $l$ is less than or
  equal to $\frac{1}{2k}$. Since any small item has a minimum weight
  of $\frac{1}{k}$, the sum of the weights of the items in bin $l$ is
  at least $\frac{1}{k}$. On the other hand, the sum of the sizes of
  the items of bin $l+1$ must be greater than $1 - \frac{1}{2k}$, and
  the sum of the weights in the items there is greater than $2 -
  \frac{1}{k}$. That is, the weight sum exceeds $\frac{1}{k} + (2 -
  \frac{1}{k}) = 2$.
  (i-b)~When the sum of the sizes of the items of bin $l$ is $a$ such
  that $\frac{1}{2k} < a \leq \frac{1}{2}$, the sum of the weights is
  greater than $2 a$. On the other hand, the sum of the sizes of the
  items of bin $l+1$ should be greater than $1 - a$.  Since $1 - a < 1
  - \frac{1}{2k}$, the sum of the weights in the items there is
  greater than $2(1 - a)$. Therefore, the weight sum is $2a + 2(1 - a)
  > 2$.

  (ii)~The case where the subsequence has an odd number of non-full
  bins. Starting from the earliest bin, we sequentially look at pairs
  of non-full bins in the subsequence. As discussed in (i), the weight
  sum of each pair turns to be greater than 2. The rest is a pair of a
  non-full bin and a full bin which is the latest. The sum of the
  weights of the items in each bin must exceed 1, therefore the weight
  sum of this pair is greater than 2 and as revealed in~(i),

  If there are only non-full bins in the assignment by $NF_{k}$ for
  $I$, as discussed in~(i), the sum of the weights of every pair,
  except the latest bin, is greater than or equal to 2. Thus,
  $w_{4}(I) \geq NF_{k}(I) - 1$ holds.

  (II)~Next, we prove $w_{4}(I) \leq \left(3 - \frac{2}{k}\right)
  \cdot OPT_{k}(I)$. To do so, it is sufficient to show that for any
  item sequence with at most $k$ items that can fit in a bin,
  the sum of the weights is at most
  $3 - \frac{2}{k}$.

  This sum of the weights $3 - \frac{2}{k}$ can be achieved with a $(1
  - \varepsilon)$-item and $(k - 1)$ of $\varepsilon$-items such that
  $0 < \varepsilon \leq \frac{2}{k}$. In fact, we have $2 -
  \frac{1}{k} + (k - 1) \cdot \frac{1}{k} = 3 - \frac{2}{k}$.

  Every item sequence that maximizes the sum of the weights consists
  of exactly $k$ items, since any item sequence less than $k$ can
  increase its weight by replacing each item with a slightly smaller
  one without changing the weight, and
  adding an infinitesimal item of weight $\frac{1}{k}$.

  Consider a sequence of exactly $k$ items that can be packed into a
  bin and includes at least two items of size over $\frac{1}{2k}$.  The sum of
  the weights of the items of size over $\frac{1}{2k}$ is less than $2$,
  since the sum of their sizes is less than 1.  Thus, the total weight
  is less than $2 + (k - 2) \cdot \frac{1}{k} = 3 - \frac{2}{k}$.
  This implies that an item sequence that maximizes the sum of the
  weights must have at most one item of size over $\frac{1}{2k}$, such
  as the item sequence mentioned above.

  The theorem is shown by~(I) and~(II).
\end{proof}
\begin{theorem}
  For any online 1-bounded space $k$-cardinality constrained algorithm
  and any positive integer $m \geq 3$, there exists an item sequence
  $I$ such that $OPT_{k}(I) = m$ and: (A)~$ALG(I) \geq 2 \cdot
  OPT_{k}(I) - 2$ for $k = 2$ and (B)~$ALG(I) \geq \left(3 -
  \frac{2}{k}\right) \cdot OPT_{k}(I) - 5 + \frac{5}{k}$ for each $k
  \geq 3$.
  \label{thm:k-item_online_general_unit_lower}
\end{theorem}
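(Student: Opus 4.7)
The plan is, for each $k$, to design an adversarial item sequence that exhibits the claimed ratio against every online $1$-bounded space $k$-cardinality constrained algorithm $ALG$. The key device is to present items in an order in which consecutive (or nearly consecutive) items fail to fit together, while $OPT_k$ recovers a tight packing by matching items diagonally.

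For part~(A) with $k=2$, fix a small $\varepsilon>0$, set $a_i=\tfrac{1}{2}+2i\varepsilon$ and $b_j=\tfrac{1}{2}-(2j-1)\varepsilon$ for $i,j=1,\ldots,m-1$, and take $I=(a_1,b_1,a_2,b_2,\ldots,a_{m-1},b_{m-1})$. Every consecutive pair in $I$ satisfies $a_i+b_i=1+\varepsilon$ or $b_i+a_{i+1}=1+3\varepsilon$, both strictly larger than $1$, so any online $1$-bounded space algorithm is forced to close its open bin after each item and outputs $ALG(I)=2(m-1)$. In contrast, the diagonal pairs $(a_i,b_{i+1})$ all sum to $1-\varepsilon<1$, so $OPT_2$ uses the matching $(a_i,b_{i+1})$ for $i=1,\ldots,m-2$ together with the singletons $a_{m-1}$ and $b_1$, reaching exactly $m$ bins. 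Combining the two bounds gives $ALG(I)=2m-2=2\cdot OPT_2(I)-2$.

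For part~(B) with $k\geq 3$, I would generalize the interleaving to $k$ item classes. The sequence is built from one class of ``large'' items of size just above $1-\tfrac{1}{2k}$ (weight $2-\tfrac{1}{k}$ under the function $w_4$ of Theorem~\ref{thm:k-item_online_general_unit_nextfit_k}) together with $k-1$ classes of ``tiny'' items of size at most $\tfrac{1}{2k}$ (each of weight $\tfrac{1}{k}$), perturbed within each class in the style of part~(A). Arranged in interleaved order, every window of $k$ consecutive items has total size strictly above $1$, while a suitable diagonal grouping (shifting the tiny items by one index per class) packs all but $O(k)$ endpoint items into $OPT_k$-bins of exactly $k$ items attaining the maximal bin-weight $3-\tfrac{2}{k}$. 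Counting the bins forced on $ALG$ then matches the upper bound from Theorem~\ref{thm:k-item_online_general_unit_nextfit_k}, yielding $ALG(I)\geq(3-\tfrac{2}{k})\cdot m-(5-\tfrac{5}{k})$, with the additive constant absorbing the handful of unmatched items at the two ends of the sequence.

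The main obstacle is ensuring the bound uniformly against every online $1$-bounded space algorithm, not only against $NF_k$. Because such an algorithm may voluntarily close its bin even when the current item would fit, the sizes must be tuned so that both ``pack greedily'' and ``close early'' lead to essentially the same inefficient packing. For $k\geq 3$ there is the added subtlety that forcing every consecutive pair to overflow is impossible (the average item size drops to $\tfrac{1}{k}$), so the waste has to be engineered at the level of $k$-windows via the interleaved structure that mirrors the tight weight configuration from the upper bound proof.
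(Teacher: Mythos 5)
Your part~(A) is correct, and it is essentially the paper's own construction restricted to $k=2$: alternating items just above and just below $\tfrac12$, perturbed so that consecutive pairs overflow while the shifted pairs $(a_i,b_{i+1})$ fit, forcing $ALG(I)\geq 2(m-1)$ against $OPT_2(I)=m$.

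Part~(B), however, has a genuine gap: the structure you propose cannot reach the ratio $3-\tfrac{2}{k}$ no matter how the perturbations are tuned. In your family, every bin of $OPT_k$ contains one ``large'' item of size near $1-\tfrac{1}{2k}$ and $k-1$ ``tiny'' items of size at most $\tfrac{1}{2k}$, with larges and tiny blocks interleaved. Consider what plain $NF_k$ does on any such sequence: each large item opens at most one new bin, and each block of $k-1$ tiny items also opens at most one new bin, because the whole block fits in a single bin (its cardinality is $k-1<k$ and its total size is at most $\tfrac{k-1}{2k}<1$). Hence $NF_k(I)\leq 2m+1$, while the $m$ large items (each of size greater than $\tfrac12$) already force $OPT_k(I)\geq m$, so the ratio extracted from your family against $NF_k$ is at most $2+o(1)$, strictly below $3-\tfrac{2}{k}\geq\tfrac73$ for every $k\geq 3$. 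The root cause is that an optimal bin built around a single near-full item gains nothing adversarially from that item: any algorithm pays exactly one bin for it, just as $OPT_k$ does, so the only waste left to exploit is the cardinality penalty on the tinies, which caps the ratio well below the target. (Your ``every $k$-window sums above $1$'' condition is also both too weak --- it only limits bins to $k-1$ consecutive items, giving ratio at most $\tfrac{k}{k-1}$ --- and, one can check, incompatible with the optimal packing you need.) To get the leading ``$2$'' in $3-\tfrac{2}{k}$, each optimal bin must contain \emph{two} items of size roughly $\tfrac12$ that the algorithm is forced to separate. This is exactly what the paper does: it keeps your part~(A) sequence $a_1,b_1,\ldots,a_N,b_N$ with $N=m-1$ for \emph{all} $k$, and appends $(k-2)(N-1)$ items of size $\delta$ at the end, with sizes tuned so that $a_i+b_{i+1}+(k-2)\delta=1$. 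Then $OPT_k(I)=N+1$, while any $1$-bounded space online algorithm spends $2N$ bins on the half-items and at least $\frac{(k-2)(N-1)-(k-1)}{k}$ further bins on the $\delta$-items, giving $\bigl(3-\tfrac{2}{k}\bigr)N-O(1)$. It is worth noting that your guiding heuristic --- mirror the weight-maximizing configuration of $w_4$ --- underdetermines the construction: the configuration with two items of size $\approx\tfrac12$ (weight $\approx 1$ each) plus $k-2$ tinies also attains the maximal weight $2+\tfrac{k-2}{k}=3-\tfrac{2}{k}$, and it is the only one of the two maximizers that can actually be enforced against every algorithm.
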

\begin{proof}
  Fix an online 1-bounded space algorithm $ALG$ and a positive integer
  $m$.  Set $N = m - 1$.  Choose $\varepsilon$ and $\delta$ such that
  $0 < N \varepsilon < \frac{1}{2}$ and $0 < (k - 2) \cdot \delta <
  \varepsilon$.  Set $a_{i} = \frac{1}{2} + i \cdot \varepsilon - (k -
  2) \cdot \delta$ and $b_{i} = \frac{1}{2} - (i-1) \cdot \varepsilon$
  for $i = 1, 2, \ldots, N$.  Take
  \begin{equation*}
    I = (a_{1}, b_{1}, a_{2}, b_{2}, \ldots, a_{N}, b_{N}, 
    \overbrace{\delta, \ldots, \delta}^{(k - 2)(N - 1)}
     ).
  \end{equation*}
  Note that $I$ has no $\delta$-items for $k = 2$.

  It is seen that $OPT_{k}(I) = N + 1 (= m)$ holds. In fact, for $i =
  1, 2, \ldots, N-1$, the sum of the sizes of the $a_{i}$-item, the
  $b_{i+1}$-item, and $(k - 2)$ of $\delta$-items is exactly 1. They
  fill $(N-1)$ bins to capacity.  The remaining items are the
  $b_{1}$-item and the $a_{N}$-item.  which can only be packed one per
  bin. Thus, the total number of bins is $(N - 1) + 2 = N - 1$. See
  Figure~\ref{fig:k-item_online_general_unit_lower}.
  \begin{figure}[t]
    \centering
    \includegraphics[width=1.0\linewidth]
                    {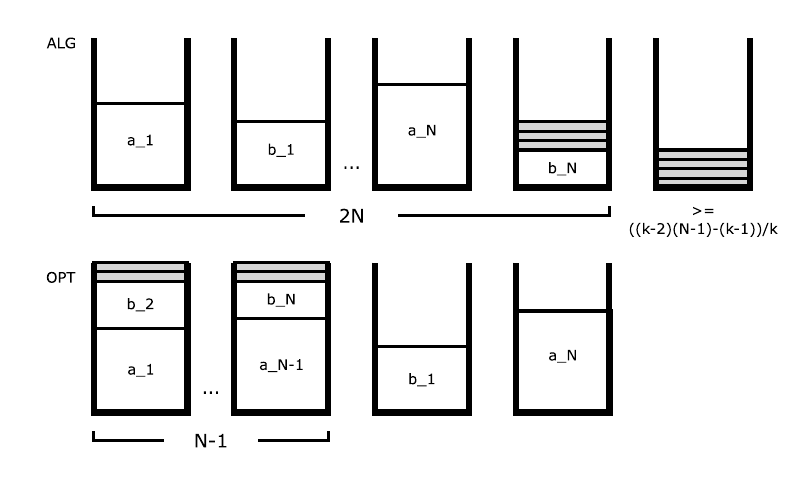}
    \caption{
      The assignments by $ALG$ and $OPT_{k}$ for $I$ for the case of $k = 4$.
      The light grey items are $\delta$-items.
    }
    \label{fig:k-item_online_general_unit_lower}
  \end{figure}

  On the other hand, let us evaluate $ALG(I)$.  Note that $ALG$ is a
  1-bounded space algorithm and therefore $ALG$ packs each item always
  into the latest bin, not into any earlier bin.  Since $a_{i} + b_{i}
  = 1 + \varepsilon - (k - 2) \cdot \delta > 1$ and $b_{i} + a_{i+1} =
  1 + 2 \varepsilon - (k - 2) \cdot \delta > 1$,
  $ALG$ must pack the $a_{1}$-item through the $b_{N}$-item one per bin.
  For $k = 2$, we immediately obtain $ALG(I) \geq 2N$.  For $k \geq
  3$, the best choice for the remaining $(k - 2)(N - 1)$ of
  $\delta$-items is to pack $(k - 1)$ of them into the bin containing
  the $b_{N}$-item and to pack the rest into new bins, which leads us to
  $ALG(I) \geq 2N + \frac{(k-2)(N-1) - (k-1)}{k} = \left(3 -
  \frac{2}{k}\right) N - 2 + \frac{3}{k}$.

  Hence, for $k = 2$, we have $ALG(I) \geq 2 \cdot OPT_{k}(I) - 2$,
  which is the claim~(A).  For $k \geq 3$, we get $ALG(I) \geq \left(3
  - \frac{2}{k}\right) (OPT_{k}(I) - 1) - 2 + \frac{3}{k} = \left(3 -
  \frac{2}{k}\right) \cdot OPT_{k}(I) - 5 + \frac{5}{k}$, which is the
  claim~(B).
\end{proof}
\section{Future Work}
\label{sec:future_work}
It is interesting to narrow the bounds for each subclass of
algorithms.  Are $MM$ and $MM_{k}$ best possible max-min 1-bounded
space algorithms?  Our conjecture is ``yes''.  Is there a better
pre-sorted online unbounded space algorithm than $FFD$?  Furthermore,
recall that Theorem~\ref{thm:none_online_sorted_bounded_lower} claims
that for the classic bin packing problem, the tight bounds for
pre-sorted online 1-bounded space algorithms and pre-sorted online
bounded space algorithms coincide.  Then, is there a gap between
max-min 1-bounded space algorithms and max-min bounded space
algorithms?
\bibliographystyle{alpha}
\bibliography{main}
\end{document}